\theoremstyle{plain}
\newtheorem{theorem}{Theorem}[section]
\newtheorem{proposition}[theorem]{Proposition}
\newtheorem{lemma}[theorem]{Lemma}
\theoremstyle{definition}
\newtheorem{definition}{Definition}[section]
\theoremstyle{remark}
\newtheorem{remark}{Remark}[section]
  \def\cref#1{\ref{#1}}%
  \def\Cref#1{\ref{#1}}%
\definecolor{DarkGreen}{rgb}{0.1,0.5,0.1}
\definecolor{DarkRed}{rgb}{0.5,0.1,0.1}
\definecolor{DarkBlue}{rgb}{0.1,0.1,0.5}
\definecolor{Gray}{rgb}{0.2,0.2,0.2}
\def\maketag@@@#1{\hbox{\m@th\normalfont\normalsize#1}}
\title{Lower Bias, Higher Welfare: How Creator Competition Reshapes Bias-Variance Tradeoff in Recommendation Platforms?}
\author{
Kang Wang\thanks{Institute for Theoretical Computer Science, Shanghai University of Finance and Economics. Email: \texttt{wangkang0330@gmail.com}}
\and
Renzhe Xu\thanks{Corresponding author. MoE Key Lab of Interdisciplinary Research of Computation and Economics. Institute for Theoretical Computer Science, Shanghai University of Finance and Economics. Email: \texttt{xurenzhe@sufe.edu.cn}}
\and
Bo Li\thanks{School of Economics and Management, Tsinghua University. Email: \texttt{libo@sem.tsinghua.edu.cn}}
}
\date{}
\newcommand{\calC}{\mathcal{C}}
\newcommand{\calG}{\mathcal{G}}
\newcommand{\calN}{\mathcal{N}}
\newcommand{\bbP}{\mathbb{P}}
\newcommand{\bbR}{\mathbb{R}}
\newcommand{\bbRpos}{\bbR_{\geq 0}}
\newcommand{\bbS}{\mathbb{S}}
\newcommand{\bbSpos}{\bbS_{\geq 0}}
\newcommand{\bbI}{\mathbb{I}}
\newcommand{\bbE}{\mathbb{E}}
\newcommand{\bfI}{\mathbf{I}}
\newcommand{\bfR}{\mathbf{R}}
\newcommand{\bfr}{\mathbf{r}}
\newcommand{\bfs}{\mathbf{s}}
\newcommand{\bfu}{\mathbf{u}}
\newcommand{\hbfu}{\hat{\bfu}}
\newcommand{\bfv}{\mathbf{v}}
\newcommand{\bfeps}{\boldsymbol{\epsilon}}
\newcommand{\spne}{\bfs^{\text{PNE}}}
\newcommand{\hu}{\hat{u}}
\newcommand{\lambCom}{\lambda_\mathrm{str}^*}
\newcommand{\lambComU}{\lambda_\mathrm{str}^{*\mathrm{U}}}
\newcommand{\lambSta}{\lambda_\mathrm{non}^*}
\newcommand{\lambStaU}{\lambda_\mathrm{non}^{*\mathrm{U}}}
\newcommand{\lambStaL}{\lambda_\mathrm{non}^{*\mathrm{L}}}
\newcommand{\barepsilonT}{\bar\epsilon_T}
\newcommand{\barepsilonN}{\bar\epsilon_N}
\newcommand{\nT}{\calN_T}
\newcommand{\nN}{\calN_N}
\newcommand{\barE}{\bar E}
\newcommand{\cT}{\theta_T}
\newcommand{\cN}{\theta_N}
\newcommand{\TvN}{PreNT}
\newcommand{\BRCM}{BRCM^*}
\newcommand{\Mexpo}{M^3(\mathrm{expo.})}
\newcommand{\Menga}{M^3(\mathrm{enga.})}
\newcommand{\Mzero}{M^3(\mathrm{0})}
\DeclareMathOperator*{\argmax}{arg\,max}
\begin{document}
\maketitle

\begin{abstract}
%%
%% The abstract is a short summary of the work to be presented in the
%% article.
% \begin{abstract}
Understanding the bias-variance tradeoff in user representation learning is essential for improving recommendation quality in modern content platforms. While well studied in static settings, this tradeoff becomes significantly more complex when content creators strategically adapt to platform incentives. To analyze how such competition reshapes the tradeoff for maximizing user welfare, we introduce the \textbf{Content Creator Competition with Bias-Variance Tradeoff} ($C^3_{BV}$) framework, a tractable game-theoretic model that captures the platform’s decision on regularization strength in user feature estimation. We derive and compare the platform’s optimal policy under two key settings: a non-strategic baseline with fixed content and a strategic environment where creators compete in response to the platform's algorithmic design.

Our theoretical analysis in a stylized model shows that, compared to the non-strategic environment, content creator competition shifts the platform’s optimal policy toward weaker regularization, thereby favoring lower bias in the bias-variance tradeoff. To validate and assess the robustness of these insights beyond the stylized setting, we conduct extensive experiments on both synthetic and real-world benchmark datasets. The empirical results consistently support our theoretical conclusion: in strategic environments, reducing bias leads to higher user welfare. These findings offer practical implications for the design of real-world recommendation algorithms in the presence of content creator competition.
% \end{abstract}

% To validate our theoretical findings and explore their applicability beyond our stylized framework, we conduct extensive experiments on both synthetic and two real-world datasets, whose results also align with our theoretical analysis, implying weaker regularization is preferred once competition is considered.

% In this section, we conduct extensive experiments to empirically validate our theoretical findings and explore their applicability beyond our stylized framework. Our simulations serve a dual purpose. First, we aim to corroborate our central theoretical result: that strategic creator competition systematically shifts the optimal platform policy toward weaker regularization. Second, we seek to address relaxations of our stylized abstraction by examining the robustness of this finding across different levels of competition intensity and noise. To achieve these objectives, we utilize both synthetic data for controlled analysis and two real-world datasets. The results consistently align with and reinforce our theoretical analysis.
\end{abstract}

\section{Introduction}

% % Bias-variance tradeoff in feature extraction $\to$ welfare.
% \wk{para1: ml social influence $\to$ BV}
% \wk{para2: RS (dynamic vs static)}
% \wk{para3: comparison with existing work; bv: platform; }
% \wk{para4/5: summary of work, qualitative result, specific carry out}
% \wk{ contribution}

Machine learning models, particularly recommendation algorithms, have become integral to online platforms, profoundly shaping user experiences and the digital content landscape. At the core of such systems lies a fundamental design principle: the \emph{bias-variance tradeoff} \citep{shivaswamy2021bvranking,belkin2019reconciling,yang2020rethinking}. For example, when estimating user features, platforms must carefully balance this tradeoff---typically through regularization---to derive meaningful representations of user preferences from user-content interaction data, which are inherently sparse and noisy. A high-bias model (strong regularization) may offer stability but fail to capture nuanced user tastes, while a low-bias model (weak regularization) can provide greater personalization at the risk of overfitting to noise. The platform's choice of where to strike this balance is a critical lever that directly impacts the quality of its recommendations and, consequently, overall social and economic impact.

Classical recommendation frameworks often assume a \textit{separation} between content creation and recommendation algorithm design, modeling content generation as independent of the recommenders \citep{nguyen2013content,raza2024comprehensive}. Under this paradigm, content features are typically treated as \textit{static} inputs. This traditional perspective, however, overlooks a crucial element of modern recommender ecosystems: the \textit{strategic} behavior of content creators \citep{jagadeesan2025accounting}. Specifically, creators are not passive data points; they are rational agents who actively adapt their strategy (i.e., the content they produce) in response to the platform's algorithm to maximize their visibility and utility. As a result, a dynamic feedback loop emerges in which the platform’s design shapes creator strategies and, in turn, alters the content landscape. This leads to a critical but underexplored question: how does the strategic nature of content creation reshape the platform’s optimal bias-variance tradeoff?

To answer this question, we propose the \textbf{Content Creator Competition with Bias-Variance Tradeoff} ($C^3_{BV}$), a framework that builds on the foundational work of \citet{yao2023howbad, yao2023rethinking} to jointly model the bias-variance tradeoff in user estimation and the strategic competition among content creators. At the core of our model is the platform's choice of the regularization coefficient (i.e., $\lambda$) in learning user preferences, which serves as the primary policy lever to control this tradeoff. Following prior work \citep{yao2023rethinking, yao2024user}, we assume the platform aims to maximize total user welfare, measured by the scale and quality of user-content interactions. This objective aligns closely with the platform's revenue~\citep{yao2024user} and reflects the system’s broader social value~\citep{rosenfeld2025machine}. To investigate the impact of creator competition, we analyze the platform’s optimal choice of $\lambda$, denoted by $\lambSta$ and $\lambCom$, under two distinct environments: a \textit{non-strategic} baseline, where content remains fixed, and a \textit{strategic} setting, where content creators adapt their strategies in response to the platform-provided incentives. Our objective is to characterize how content competition drives the shift from $\lambSta$ to $\lambCom$, thereby reshaping the platform’s optimal bias-variance tradeoff.

Although several prior works have examined the bias-variance tradeoff in competitive settings \citep{feng2022bias, iyer2024competitive}, our approach fundamentally differs in \emph{who} determines the tradeoff. Specifically, existing studies typically adopt a \emph{market participant} perspective, where each firm individually selects its own tradeoff strategy. For instance, \citet{iyer2024competitive} analyze how competing firms choose bias-variance parameters in targeting. In contrast, our framework adopts a \emph{market designer} perspective: the platform, rather than individual creators, selects the regularization parameter when estimating user features. Content creators then compete strategically based on these platform-determined estimates. This distinction better reflects the architecture of modern recommender systems, where the platform centrally controls user modeling. As a result, prior analyses do not directly apply to our setting.

We begin with a stylized setting of the $C^3_{BV}$ framework to enable tractable theoretical analysis. Specifically, we consider a single-user scenario and assume that the platform’s content can be categorized into two types: trend content and niche content. This setting reflects real-world platforms, where some content attracts widespread attention while other content remains relatively unpopular~\citep{gill2007youtube, salganik2006experimental}. Our analysis shows that the relationship between $\lambSta$ and $\lambCom$ depends critically on the user’s content preference. When the user favors trend content, $\lambSta$ is bounded below by a positive constant, while $\lambCom$ is bounded above, and under mild conditions, we prove that $\lambCom < \lambSta$. In contrast, when the user prefers niche content, both $\lambSta$ and $\lambCom$ are upper bounded and typically close in value, making them difficult to distinguish. These findings suggest that for trend-preferring users, the platform should apply weaker regularization (i.e., reduce bias) in the strategic environment, while for niche-preferring users, regularization should remain small in both cases. Given that aggregate user preferences in real-world platforms often skew toward trend content \citep{xu2023smaa, fleder2009blockbuster}, we conclude that the optimal regularization in the \textit{strategic} environment ($\lambCom$) is generally smaller than in the \textit{non-strategic} environment ($\lambSta$), implying that the platform should favor lower bias when accounting for strategic content competition.

We further validate our theoretical findings through experiments on both synthetic and real-world datasets. In the synthetic setting, we extend the stylized single-user model to multi-user environments and construct two scenarios: a Trend Market and a Niche Market, representing user populations with predominantly trend-preferring and niche-preferring preferences, respectively. The results are consistent with theory: in the Trend Market, we observe $\lambCom < \lambSta$, whereas in the Niche Market, both values remain small and close. To assess the practical relevance of these insights, we conduct extensive experiments on two widely used benchmark datasets, MovieLens-100k and Amazon Product Reviews, under a variety of recommendation algorithm configurations. The results consistently show that $\lambCom < \lambSta$ across all settings, suggesting that the platform should adopt weaker regularization in the presence of content creator competition. This corresponds to favoring lower bias at the cost of higher variance, which can better serve the majority of users. These findings yield important implications for the design and implementation of recommendation algorithms in environments with competing content creators.

To conclude, our paper provides the following key takeaways for industry practitioners:
\begin{enumerate}
    \item We propose the Content Creator Competition with Bias-Variance Tradeoff ($C^3_{BV}$) framework, which models the bias-variance tradeoff in user feature estimation in the presence of strategic content creator competition.
    \item Through a stylized setup, we theoretically analyze this tradeoff and show that platforms should prefer lower bias when content creators behave strategically.
    \item Extensive experiments on both synthetic and real-world datasets support our theoretical findings, consistently indicating that the platform should reduce bias in competitive content environments.
\end{enumerate}

\section{Related Works}

% Meta-level RS: It is typical to use content-based recommenders to build item representation models, and then employ this models in collaborative recommenders to match the items with user profiles. Fab.

% In the second stage of our model, we focus on estimating the user feature matrix $\hat{U}$ based on the observed interaction matrix $R$. To proceed, we make the key assumption that the content feature matrix $S$, established in the first stage, is a known and fixed \textbf{ground-truth representation}.

% This assumption is well-motivated by two factors. First, it reflects a common paradigm in large-scale industrial recommender systems, where content features are often pre-computed from rich, stable metadata (e.g., genre, text, technical specifications) using powerful, independent models. In such a pipeline, the content features $S$ can be treated as a high-quality, given input for the subsequent user preference learning task.

% Feature augmentation?

% or we just empathize prediction of user features?

% Our work is related to research on \textit{bias-variance tradeoff} and \textit{recommender systems and online platforms}.

\paragraph{Bias-variance tradeoff}
The bias–variance tradeoff is a fundamental concept in statistical learning, describing the inverse relationship between bias—systematic errors due to model assumptions---and variance---fluctuations arising from sensitivity to training data \citep{belkin2019reconciling,yang2020rethinking,feng2022bias}. This tradeoff is commonly managed through regularization techniques, which balance model complexity to reduce total prediction error \citep{iyer2024competitive, mnih2007pmf}. In recommender systems, the bias–variance tradeoff usually manifests in the tension between accurately modeling user preferences (reducing bias) and maintaining generalizability across sparse and noisy interaction data (controlling variance) \citep{shivaswamy2021bvranking,kawakami2021investigating}. Collaborative filtering methods, for instance, may suffer from high variance when user–item matrices are sparse \citep{paterek2007improving,hu2008collaborative,li2019cfnoisy,he2016ups}, while content-based methods may incur high bias due to limited feature expressiveness \citep{eilat2023performative,shu2018content,yoshitaka1999survey}.

While traditional analyses of the bias–variance tradeoff assume static data, recent studies have begun to examine this tradeoff in competitive environments involving interacting agents \citep{feng2022bias, iyer2024competitive}. These approaches typically treat the bias–variance balance as a strategic choice made by individual players. In contrast, our work adopts a \textit{market designer} perspective, framing the tradeoff not as an agent-level decision but as a policy instrument employed by the platform to shape ecosystem dynamics and enhance aggregate user welfare.

% \paragraph{User Feature Estimation for Recommender Systems}
% Embedding based models

\paragraph{Content recommendation and online platforms}
Modern content recommendation systems are widely deployed by online platforms such as Netflix, YouTube, and TikTok, where algorithms mediate interactions between users and content creators. Traditional approaches---including collaborative filtering methods \citep{hu2008collaborative, koren2009matrix} and deep learning models \citep{li2019cfnoisy, covington2016youtube}---have significantly improved recommendation accuracy, yet often treat both users and creators as passive data sources. In contrast, a growing body of research highlights the active role of recommendation algorithms in shaping user preferences and creator behaviors via feedback loops, modeling content creators as strategic agents who adapt their behavior in response to platform incentives \citep{jagadeesan2025accounting}. Recent studies have explored algorithmically curated recommenders \citep{hron2023modeling}, strategic competition among creators and resulting equilibria \citep{jagadeesan2023supply}, as well as the broader social and performative implications \citep{yao2023howbad, yao2023rethinking, yu2025beyond, eilat2023performative}.

Distinct from prior work, our $C^3_{BV}$ framework bridges the gap between machine learning model design and downstream societal consequences, focusing on the influence of recommendation algorithms as exercised by platforms.

\paragraph{Other competition frameworks}
% Finally, our work also relates to broader research on competition in other machine learning settings \citep{jagadeesan2023improved, ben2019regression}, in targeted advertising~\citep{iyer2024competitive,iyer2024precision}, in market competition \citep{einav2025market,spulber2019economics} and in strategic behavior modeling \citep{hardt2016strategic,perdomo2020performative}. Unlike these studies, we examine competition outcomes that arise specifically from the bias-variance tradeoff in recommendation platforms.

Finally, our work also relates to broader research on competition in other machine learning settings \citep{jagadeesan2023improved, ben2019regression, xu2025heterogeneous}, targeted advertising \citep{iyer2024competitive, iyer2024precision}, market participation \citep{einav2025market, spulber2019economics}, and strategic behavior modeling \citep{hardt2016strategic, perdomo2020performative}. Unlike these studies, we examine competition outcomes that arise specifically from the bias-variance tradeoff in recommendation platforms.

\section{Model}

% \wk{\citep{rendle2020mfrevisited} is a very good reference, also Performative Recommendation}

% Idea: the content on the online content platform is massive and has little interaction history, thus their feature should be extracted through content-based filtering and the real-time recommendation is more important

% \wk{The main problem is: how do we clarify our close-form solution; a few related topics derive similar ideas, but no one identical} real-time (online) RS, online update for user reachability (feedback), cold-start (incremental learning \citep{yu2016incremental}), Hu: implicit R with special modelling

\subsection{Preliminaries}

% In this section, we first introduce the mathematical formulation of matrix factorization, and then derive the close-form term of the estimated user feature embedding 
% \wk{should talk MF and BV here? putting at theory?}

% \paragraph{Notations} We begin by introducing several essential notations.

% \paragraph{Matrix factorization} For a recommendation task, a noisy (usually represented as sparsity and xx) user-content matrix representing interactions between users and content can be decomposed into a user matrix and a content matrix. Formally, let $R \in \bbR^{m \times n}$ denotes the interaction matrix, where $R_{ij}$ indicates the interaction between user $i$ and content $j$. Matrix factorization seeks to approximate $R$ into two $d$-dimensional matrices, i.e.,
% \begin{equation}
%     \label{eq:mf-regu-user}
%     {R} \approx \hU \hS^\top,
% \end{equation}
% where $R \in \bbR^{m \times n}$ denotes the interaction matrix and $\hU \in \bbR^{m \times d}_{\geq 0}$ and $\hS \in \bbR^{n \times d}_{\geq 0}$ denotes the estimated user and content feature matrix, respectively.

% \xrz{Add a paragraph to introduce basic notations: ground-truth features, rating matrices, e.t.c.}\wk{see sec 4}

% Matrix factorization (MF) is a foundational collaborative filtering technique that learns latent feature representations from noisy interaction data~\citep{koren2009matrix, bokde2015matrix}.
We begin by introducing the bias-variance tradeoff that arises in the estimation of user feature representations.

Consider a recommender system with $m$ users and $n$ pieces of content. Let $u_i \in \mathbb{R}^d$ and $v_j \in \mathbb{R}^d$ denote the latent feature vectors of user $i$ and content $j$, respectively. Define $\bfu = (u_1, \ldots, u_m)$ and $\bfv = (v_1, \ldots, v_n)$. If these latent features were known, the platform could directly compute the most relevant content for each user and make personalized recommendations. In practice, however, the platform does not have access to the ground-truth features $\bfu$ and $\bfv$ and must instead estimate them from observed data.

A widely adopted approach for the feature estimation in both academia~\citep{dean2020reachability, srilakshmi2022two-stage, hu2008collaborative, greenstein2017session} and industry~\citep{gomez2015netflix, covington2016youtube} is a two-stage method: one first estimates either user or content features, and then estimates the other based on user-content interaction data. For example, YouTube incorporates rich content features to construct user embeddings within its recommendation pipeline~\citep{covington2016youtube}, while Netflix employs video embeddings from its ``sims'' algorithms to power its \textit{Because You Watched} recommendations~\citep{gomez2015netflix}. In practice, content feature extraction has become significantly more effective and accessible due to recent advances in deep learning for language~\citep{achiam2023gpt, liu2024deepseek}, vision~\citep{dosovitskiy2020image}, and multi-modal modeling~\citep{yin2024survey}. As a result, following \citet{agarwal2024system}, we assume throughout this paper that the platform has access to ground-truth content embeddings and estimates user features based on these embeddings and the observed user-content interaction data.

These user-content interactions are represented by an observation matrix $\bfR \in \mathbb{R}^{m \times n}$, where each entry $R_{ij}$ denotes the interaction (e.g., a rating or a click) from user $i$ with content $j$. For simplicity, we assume that all entries in $\bfR$ are observed, although our results can be readily extended to settings with partially observed interactions. From a probabilistic perspective~\citep{mnih2007pmf, li2019cfnoisy}, each interaction can be modeled as the sum of a latent utility and observation noise:
\begin{equation} \label{eq:R-equals-US-plus-noise}
    R_{ij} = u_i^\top v_j + \epsilon_{ij},
\end{equation}
where $\epsilon_{ij}$ denotes i.i.d. random noise with zero mean. As a result, the feature of user $i \in [m]$ can then be estimated by minimizing the following regularized loss function:
\begin{equation} \label{eq:mf-loss-function}
    \mathcal{L}_i(u) = \sum_{j=1}^n (R_{ij} - u^\top v_j)^2 + \lambda \|u\|_2^2.
\end{equation}
Minimizing the objective in \cref{eq:mf-loss-function} via the least squares method yields the following closed-form solution:
\begin{equation} \label{eq:mf-hat-u}
    \hu_i(\lambda) = \left(\sum_{j=1}^n v_jv_j^\top + \lambda \bfI\right)^{-1}\left(\sum_{j=1}^n R_{ij}v_j\right).
\end{equation}
Here $\hu_i(\lambda)$ represents the estimated feature vector for user $i$. Define $\hbfu(\lambda) = (\hu_1(\lambda), \hu_2(\lambda), \dots, \hu_m(\lambda))$. For notational simplicity, we use $\hbfu$ and $\hu_i$ to denote the estimated quantities when the dependence on $\lambda$ is clear from context.

\subsection{Content Creator Competition with Bias-Variance Tradeoff}
% A $C^3$ game assumes an idealized ground-truth setting in which both content and user embeddings are known precisely. In practice, however, given a finite training dataset $D$, the estimations of the machine learning algorithm used to learn the embeddings, like matrix factorization, inevitably deviate from the ground-truth labels, owing to data-related factors, like \textit{data noises}.\wk{$\to$ realted work} \xrz{Move this part to related works}
% Then we study how \textit{competition} influences the outcome of bias-variance tradeoffs, compared with \textit{static} environment where there exists no competition behaviors. 

In this section, we introduce the Content Creator Competition with Bias-Variance Tradeoff ($C^3_{BV}$), to study the related properties and outcome of a content creator competition in a real-world machine learning scenario, where user features are estimated under noisy observations. Here a $C^3_{BV}$ game instance $\calG$ can be described by a tuple $(\lambda, \bfu, \bfv, \bfeps, K, \sigma, RM, \bfr)$, illustrated as follows:

\paragraph{Basic setups} We consider an online platform with $m \geq 1$ users and $n \geq 2$ pieces of content, and begin our setup from the perspective of user and content embeddings. $\bfu = (u_1, \ldots, u_m)$ represents the ground-truth latent features of users. $\bfv = (v_1, \ldots, v_n)$ represents the latent embeddings of the content items initially available on the platform. The matrix $\bfeps = \{\epsilon_{ij}\}_{i \in [m],\, j \in [n]}$ represents the entry-wise i.i.d. noise in the observed rating matrix described in \cref{eq:R-equals-US-plus-noise}. Based on this noisy interaction data and the known content embeddings, the platform estimates user features using the closed-form solution in \cref{eq:mf-hat-u}, resulting in the estimated user embeddings $\hbfu = (\hu_1, \ldots, \hu_m)$. Motivated by prior work~\citep{hron2023modeling, eilat2023performative}, we assume that all user embeddings lie in the nonnegative orthant, i.e., $u_i, \hu_i \in \bbRpos^d$, and that content embeddings are constrained to the unit sphere in $\bbRpos^d$, i.e., $v_j \in \bbSpos^{d-1} \subset \bbRpos^d$.

In our setting, each content creator $j$ can modify their generated content from the initial embedding $v_j$ to a new embedding $s_j \in \bbSpos^{d-1}$. The vector $s_j$ is thus interpreted as the \textit{pure strategy} of creator~$j$. A strategy profile, representing the collection of strategies chosen by all creators, is denoted by the tuple $\bfs = (s_1, \dots, s_n)$.

\paragraph{Matching function and estimated matching score}
The platform's recommendation algorithm matches content to users via a matching function, $\sigma(s, u) : \bbR^d \times \bbR^d \to \bbR_{\geq 0}$, which quantifies the \textit{similarity} between a content feature $s$ and a user feature $u$.
While various forms of $\sigma$ have been explored, ranging from inner products \citep{jagadeesan2023supply, rendle2020mfrevisited} to deep neural networks \citep{zamani2020learning, li2021path}, we adopt the inner product as the default matching function for all subsequent analysis.
For a given user $i$, the platform uses their estimated feature (i.e., $\hu_i$), to compute scores with all $m$ content. We refer to these as the \textbf{estimated matching scores}, where the score for content $j$ is defined by the shorthand $\sigma_j(\hu_i) \equiv \sigma(s_j, \hu_i)$.

% a widely-adopted method in embedding-based recommender research \citep{rendle2020mfrevisited}---
% Based on the matching scores, the platform recommends \textit{Top-K} contents to user $i$, i.e.,
% \begin{equation} \label{eq:s_j^k}
%     \bbI[s_j\text{ recommended to }i]=\begin{cases}
%         1, & \text{if } s_j \in \calK_i\\
%         0, & \text{if } s_j \notin \calK_i
%     \end{cases},
% \end{equation}
% where $\calK_i = \left\{\argmax^{(k)}_{s_j \in s} \left(\sigma(s_j, \hu_i)\right)\mid k=1,...,K\right\}$, denotes the top-K recommendation slot for user $i$.

\paragraph{Rewarding mechanism and creator utility} 
Following prior work~\citep{yao2023howbad,yao2023rethinking}, we assume that each creator derives her utility from the rewards provided by the platform through a certain rewarding mechanism. Specifically, given a strategy profile $\bfs$, the utility of creator $j$ is the expected reward aggregated over all $n$ users:
\begin{equation} \label{eq:creator_j-utility}
    \pi_j(\bfs) = \sum_{i=1}^mRM(\sigma_j(\hu_i), \sigma_{-j}(\hu_i)),
\end{equation}
where $\sigma_{-j}(\hu_i)$ is the set of estimated matching scores between user $i$ and all creators other than $j$ and $RM: \bbR^n \rightarrow \bbR$ denotes the rewarding mechanism. We assume that $RM(\cdot,\cdot)$ satisfies \textit{individual monotonicity}. This property requires that a creator's reward does not decrease when their estimated matching score with a user increases, holding all other scores constant. Formally, for any creator~$t$, given a score vector $(\sigma_1, \dots, \sigma_n)$ and a deviation $\sigma'_t > \sigma_t$, we have:
\[
    RM(\sigma'_t, \sigma_{-t}) \geq RM(\sigma_t, \sigma_{-t}).
\]
This individual monotonicity property provides a crucial incentive guarantee: a creator is never penalized (suffer from utility loss) for improving their content's alignment with a user embedding  (specifically, the \textit{predicted} embedding in our framework). It is therefore a natural and desirable criterion for designing rewarding mechanisms and is satisfied by many widely-used examples.

% In this paper, we deploy four widely-adopted rewarding mechanisms, including $\BRCM$, $\Mexpo$, $\Menga$, and $\Mzero$. that follows Individual Monotonicity defined in \cref{def:ind-mono}

\paragraph{User utility and user welfare} 
When user $i$ receives a recommended item $s_j$, she derives utility based on her true preference, i.e., $\sigma(s_j, u_i)$, rather than the estimated matching score $\sigma_j(\hu_i)$. Consequently, given the creators' strategy profile $\bfs$, the utility gained by user $i$ is:
\begin{equation} \label{eq:user_i-utility}
    W_i(\bfs, u_i, \hu_i) = \sum_{k=1}^{K} r_k \cdot \sigma(s_{j^{(k)}}, u_i),
\end{equation}
where $s_{j^{(k)}} = \arg\max^{(k)}_{s_j \in s} \sigma(s_j, \hu_i)$ denotes the $k$-th ranked content according to the estimated matching scores $\{\sigma_j(\hu_i)\}_{j=1}^{n}$, sorted in descending order. The sequence $\bfr = \{r_k\}_{k \in [K]}$ denotes the set of the user's attention weights on the Top-$K$ recommended slots, where $r_1 \geq r_2 \geq \dots \geq r_K \geq 0$.

% \wk{should we say explicitly "strategy profile" before every equation with a $\bfs$?}

The total user welfare is defined as the total utility all users gain, given any strategy profile $\bfs$ at the game instance $\calG$, i.e.,
\begin{equation} \label{eq:user-welfare}
   W(\bfs; \calG) = \sum_{i=1}^{n}W_i(\bfs, u_i, \hu_i).
\end{equation}

% \xrz{Game pne paragraph}
\paragraph{Equilibrium of the game}
The pure Nash equilibrium (PNE) \citep{nash1950equilibrium} is the most commonly used concept for characterizing the outcome of a game. It describes a stable state where no player has an incentive to unilaterally deviate their strategy, assuming the other players' strategies remain unchanged. Here we denote the strategy profile under PNE in a game instance $\calG$ as $\spne(\calG)$.

\subsection{Non-Strategic vs. Strategic Platforms}
% \subsection{strategic}
We investigate the platform's bias-variance trade-off by contrasting two typical platform paradigms. One provides a non-competitive baseline by assuming fixed item embeddings, like Netflix and Amazon. The other models a competitive environment by anticipating response to game-theoretic incentives from creators, like TikTok and YouTube. This comparison allows us to reveal how strategic competition alters the platform's intrinsic bias-variance preferences relative to the baseline.

We begin by defining the non-strategic platform, which serves as a key baseline for evaluating the platform’s policy by isolating its effects from the complexities introduced by content creator competition. While the full game instance is defined as $(\lambda, \bfu, \bfv, \bfeps, K, \sigma, RM, \bfr)$, we simplify the notation to $\calG(\lambda, \bfv, \bfeps)$ in this section, as the other components are fixed and implicit in the following definitions.

\begin{definition}[Non-strategic Platform] \label{def:static-platform}
    A \textbf{non-strategic platform} assumes that content embeddings are fixed and not subject to strategic manipulation by creators; that is, the strategy profile $\bfs$ is set to the initial content embeddings $\bfv$. Under this setting, the optimal regularization parameter $\lambda$ that maximizes the expected total user welfare is given by
    \begin{equation} \label{eq:lambsta-opti}
        \lambSta = \argmax_{\lambda \geq 0} \bbE_{\bfeps} \left[W(\bfv; \calG(\lambda, \bfv, \bfeps))\right],
    \end{equation}
    where the expectation is taken over the observation noise $\bfeps$.
\end{definition}

Then the strategic platform is defined as follows.
\begin{definition}[Strategic Platform] \label{def:competition-platform}
    A \textbf{strategic platform} models content creators as rational, utility-maximizing agents who competitively adapt their strategies in response to the platform's reward mechanism within a game instance $\calG(\lambda, \bfv, \bfeps)$. In this setting, the resulting strategy profile is given by the subgame-perfect Nash equilibrium $\spne(\calG(\lambda, \bfv, \bfeps))$. The platform’s optimal regularization parameter is then defined as
    \begin{equation} \label{eq:lambcom-opti}
        \lambCom = \argmax_{\lambda \geq 0} \bbE_{\bfeps} \left[W\left(\spne(\calG(\lambda, \bfv, \bfeps)); \calG(\lambda, \bfv, \bfeps)\right)\right],
    \end{equation}
    where the expectation is taken over the observation noise $\bfeps$.
\end{definition}

Our objective is to analyze the relationship between $\lambSta$ and $\lambCom$ to characterize how content competition reshapes the platform’s optimal bias-variance tradeoff.

% \wk{logic}

% \xrz{Give the definition of lambdas in both environments here.}
% We user $\lambda$ to represent the regularization strength in the following analysis for both non-strategic and strategic settings mentioned above for estimating user features in MF. It also implies the preference for either bias or variance in the bias-variance tradeoff.
% In our analysis of user feature estimation in Matrix Factorization (MF), the hyperparameter $\lambda$ represents the regularization strength for both non-strategic and strategic settings. This parameter controls the bias-variance trade-off.

\section{Theories} \label{sec:theories}

% assumptions and results

In this section, we present the theoretical analysis based on a stylized instance of our $C^3_{BV}$ game, referred to as the $\TvN$ game.

\begin{definition}[$\TvN$ Game] \label{def:tvnEnvironment}
    The user \textbf{Pre}ference on \textbf{N}iche and \textbf{T}rend content ($\TvN$) Game is a stylized instance designed to model content creator competition on a platform with a \textit{single user} and an initial content pool consisting of either \textit{trend} or \textit{niche} items. The key components $\bfu$, $\bfv$, and $\bfeps$ in the game tuple are specified as follows:
    \begin{enumerate}[itemsep=0pt, topsep=2pt,left=0pt]
        \item \textbf{Initial Content Features $\bfv$:} The platform offers two content groups: a \textbf{trend group} with $\nT$ items and a \textbf{niche group} with $\nN$ items, where $\nT > \nN$ reflects the greater popularity of trend content. Each group shares a representative unit vector, $v_T$ for trend and $v_N$ for niche, which are orthogonal (i.e., $v_T^\top v_N = 0$, $\|v_T\|_2 = \|v_N\|_2 = 1$) to capture distinct content specializations~\citep{jagadeesan2023supply}. Then,
        \[
            \bfv = (\underbrace{v_T, \ldots, v_T}_{\nT \text{ times}}, \underbrace{v_N, \ldots, v_N}_{\nN \text{ times}}).
        \]
        
        \item \textbf{User Preference $\bfu$:} The user's ground-truth preference vector $u$ is modeled as a linear combination of the two content vectors:
        \[
            u = \cT v_T + \cN v_N,
        \]
        where $\cT, \cN \in \bbR^+$ represent the user's relative preference strengths for trend and niche content, respectively. Then we define $\bfu = (u)$.

        \item \textbf{Observation Noise $\bfeps$:} Since there is only one user, we slightly abuse notation and define $\bfeps = \{\epsilon_j\}_{j \in [\nT+\nN]}$, where each $\epsilon_j$ denotes independent observation noise. We assume the noise terms $\{\epsilon_j\}_{j \in [\nT+\nN]}$ are i.i.d. and bounded by $\barE$, satisfying 
        \[
            \barE < \frac{\nT - \nN}{\nT + \nN} \min(\cT, \cN).
        \]
        A canonical example is uniform noise: $\epsilon_j \overset{\text{i.i.d.}}{\sim} \mathcal{U}(-\barE, \barE)$.
    \end{enumerate}
\end{definition}

% \wk{$\barE < \frac{\nT - \nN}{\nT + \nN} \max(\cT, \cN)$}
% \xrz{Give a remark to compare with Haifeng's TvN game since we use the same name.}

% The $\TvN$ game is a stylized instance of $C^3_{BV}$ game: $\calG(\lambda, U, S, \epsilon, K, \sigma, RM, \{r_k\}_{k=1}^K)$, where $U = u^\top$ and $S = [\underbrace{s_T, \ldots, s_T}_{\nT \text{ times}}, \underbrace{s_N, \ldots, s_N}_{\nN \text{ times}}]^\top$.

% The $\TvN$ game defined in \cref{def:tvnEnvironment}, though stylized, captures the core tension between content homogenization and personalized user preferences—two well-documented phenomena and central challenges in modern online platforms. Building upon this, we construct the $\TvN$ game instance.

The first component of the $\TvN$ game, the initial content features, captures a key characteristic of real-world platforms where certain content attracts widespread attention while other content remains relatively niche~\citep{gill2007youtube, salganik2006experimental}. The second and third components, which describe user preferences and observation noise, are introduced to ensure analytical tractability. These assumptions are later relaxed in our synthetic experiments in \cref{sec:expe-syn}.

\subsection{Non-Strategic Platform Analysis} \label{sec:theories-non-strategic}
% We first talk about, at a beginning stage, when a user just entered this platform, the influence on the user welfare by the $\lambda$ the platform chooses. At this stage, creators take no strategic behaviors, i.e., no strategy deviation.\wk{It seems that, as time goes on, the $\lambda$ goes from large to small, also corresponding to the idea that user embedding is predicted more precisely with time}
We begin our analysis of the $\TvN$ game in a non-strategic platform defined in \cref{def:static-platform}, where the optimal $\lambda$ is denoted as $\lambSta$. The outcome depends entirely on user preference, leading to two distinct conclusions presented in \cref{thrm:static-trend-lambda} for the trend-preferring case and \cref{thrm:static-niche-lambda} for the niche-preferring case.

\begin{theorem} \label{thrm:static-trend-lambda}
    In a non-strategic platform, consider a $\TvN$ game instance where the user prefers the trend group, i.e., $\cT > \cN$. Then there exists a \textbf{lower bound} $\lambStaL \ge 0$ on $\lambSta$, given by
    \[
    \lambStaL = 
        \begin{cases} 
            % +\infty, & \text{if } \barE \ge \frac{\nT\cT - \nN\cN}{\nT + \nN}, \\
            \frac{\nT\nN(\cN - \cT + 2\barE)}{\nT(\cT - \barE) - \nN(\cN + \barE)}, & \text{if } \barE > \frac{\cT - \cN}{2}, \\
            0, & \text{otherwise}.
        \end{cases}
    \]
    such that the expected user welfare in \cref{eq:lambsta-opti} is strictly increasing w.r.t. $\lambda$ for $\lambda < \lambStaL$ and remains constant for $\lambda \geq \lambStaL$. Consequently, any $\lambda \geq \lambStaL$ is optimal.
    % Moreover, if the data noise is sufficiently large, i.e., $\barE \geq (\nT\cT - \nN\cN) / (\nT + \nN)$, then $\lambStaL = +\infty$.

    % Furthermore, if $\nT$ and $\nN$ both grow while maintaining a fixed ratio $\nT / \nN = C$ for some constant $C > 1$, then
\end{theorem}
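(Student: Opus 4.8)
The plan is to reduce the problem to a one-dimensional question: how does the probability that the user is shown trend content vary with $\lambda$? In a non-strategic platform $\bfs=\bfv$, so the single user's observed ratings are $R_j=\cT+\epsilon_j$ for the $\nT$ trend items and $R_j=\cN+\epsilon_j$ for the $\nN$ niche items (using $u^\top v_T=\cT$, $u^\top v_N=\cN$, which follow from orthonormality of $v_T,v_N$). Since the content vectors span only $\mathrm{span}(v_T,v_N)$, both $\sum_j v_jv_j^\top+\lambda\bfI$ (acting on this plane as $\mathrm{diag}(\nT+\lambda,\nN+\lambda)$ in the basis $v_T,v_N$) and $\sum_j R_jv_j$ stay in it, so by \cref{eq:mf-hat-u} the estimate is $\hu(\lambda)=\hu_T(\lambda)v_T+\hu_N(\lambda)v_N$ with $\hu_T(\lambda)=(\nT\cT+E_T)/(\nT+\lambda)$ and $\hu_N(\lambda)=(\nN\cN+E_N)/(\nN+\lambda)$, where $E_T:=\sum_{j=1}^{\nT}\epsilon_j$ and $E_N:=\sum_{j=\nT+1}^{\nT+\nN}\epsilon_j$. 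The $\TvN$ noise bound forces $\cT-\barE,\cN-\barE>0$, so both coordinates are strictly positive and $\hu(\lambda)\in\bbRpos^d$ with no projection needed. By orthonormality the estimated matching score of every trend item is $\hu_T(\lambda)$ and of every niche item is $\hu_N(\lambda)$; hence the Top-$K$ list is, up to ties, ``all $\nT$ trend items followed by the niche items'' when $\hu_T(\lambda)>\hu_N(\lambda)$, and its reverse otherwise.

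Because every item inside a group has the same true utility ($\cT$ or $\cN$), the realized welfare $W(\bfv;\calG(\lambda,\bfv,\bfeps))$ equals a constant $W_{\mathrm{high}}$ on $\{\hu_T(\lambda)>\hu_N(\lambda)\}$ and a constant $W_{\mathrm{low}}$ on $\{\hu_T(\lambda)<\hu_N(\lambda)\}$, both independent of $\lambda$; I would write them explicitly from \cref{eq:user_i-utility} by splitting on $K$ versus $\nT,\nN$. Since $r_1\ge\cdots\ge r_K\ge 0$ is non-increasing and $\cT>\cN$, a rearrangement argument gives $W_{\mathrm{high}}\ge W_{\mathrm{low}}$, strictly outside the degenerate configuration (all items recommended with equal weights), which is the only case where welfare is $\lambda$-independent outright. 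The tie set $\{\hu_T(\lambda)=\hu_N(\lambda)\}$ is a proper affine subset of the noise box, hence null, so $\bbE_{\bfeps}[W(\bfv;\calG(\lambda,\bfv,\bfeps))]=W_{\mathrm{low}}+(W_{\mathrm{high}}-W_{\mathrm{low}})\,g(\lambda)$ with $g(\lambda):=\bbP_{\bfeps}(\hu_T(\lambda)>\hu_N(\lambda))$, and the theorem reduces to showing $g$ is non-decreasing, equals $1$ exactly on $[\lambStaL,\infty)$, and is strictly increasing below $\lambStaL$.

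Clearing denominators, $\hu_T(\lambda)>\hu_N(\lambda)\iff a(\bfeps)\lambda+b(\bfeps)>0$, where grouping by $E_T$ and $E_N$ gives $a(\bfeps)\lambda+b(\bfeps)=(\nT\cT-\nN\cN)\lambda+\nT\nN(\cT-\cN)+E_T(\lambda+\nN)-E_N(\lambda+\nT)$. Bounding $|E_T|\le\nT\barE$, $|E_N|\le\nN\barE$ and using $\barE<\tfrac{\nT-\nN}{\nT+\nN}\min(\cT,\cN)=\tfrac{\nT-\nN}{\nT+\nN}\cN$ (as $\cT>\cN$) yields $a(\bfeps)\ge\nT(\cT-\barE)-\nN(\cN+\barE)>\nT(\cT-\cN)>0$, so for each fixed $\bfeps$ the left side is strictly increasing in $\lambda$ and thus $g$ is non-decreasing. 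Because the left side is affine in $(E_T,E_N)$ with coefficients $\lambda+\nN>0$ and $-(\lambda+\nT)<0$, its minimum over the noise box is at the corner $(E_T,E_N)=(-\nT\barE,\nN\barE)$ and equals $a_{\min}\lambda+b_{\min}$ with $a_{\min}=\nT(\cT-\barE)-\nN(\cN+\barE)$ and $b_{\min}=\nT\nN(\cT-\cN-2\barE)$. If $\barE\le\tfrac{\cT-\cN}{2}$ then $b_{\min}\ge 0$, so $\hu_T(\lambda)>\hu_N(\lambda)$ for all $\lambda\ge 0$ and a.e.\ $\bfeps$: $g\equiv1$ and $\lambStaL=0$. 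If $\barE>\tfrac{\cT-\cN}{2}$ then $b_{\min}<0$ and $a_{\min}\lambda+b_{\min}\ge 0\iff\lambda\ge-b_{\min}/a_{\min}$, which is precisely the $\lambStaL=\nT\nN(\cN-\cT+2\barE)/(\nT(\cT-\barE)-\nN(\cN+\barE))$ in the statement; for $\lambda\ge\lambStaL$ we get $\hu_T(\lambda)\ge\hu_N(\lambda)$ for all $\bfeps$, with equality only on a null set, so $g(\lambda)=1$ and $\bbE_{\bfeps}[W]\equiv W_{\mathrm{high}}$ there.

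For $\lambda<\lambStaL$, $\min_{\bfeps}(a(\bfeps)\lambda+b(\bfeps))<0$, and since this minimum is attained at a corner of the noise box, continuity yields a positive-probability event with $\hu_T(\lambda)<\hu_N(\lambda)$ (using that the within-group noise sums can get arbitrarily close to $\mp\nT\barE$ and $\pm\nN\barE$, as for the canonical uniform noise), hence $g(\lambda)<1$. For strict monotonicity I would write $g(\lambda)=\bbP_{\bfeps}(\lambda^*(\bfeps)<\lambda)$, where $\lambda^*(\bfeps):=-b(\bfeps)/a(\bfeps)$ (finite since $a(\bfeps)>0$) is a continuous, non-constant M\"obius-type function of $(E_T,E_N)$ whose range has supremum $\lambStaL$; for absolutely continuous noise its push-forward law is atomless and charges every subinterval of its support, giving $g(\lambda_2)-g(\lambda_1)=\bbP_{\bfeps}(\lambda_1\le\lambda^*(\bfeps)<\lambda_2)>0$ for $0\le\lambda_1<\lambda_2\le\lambStaL$. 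Altogether $\bbE_{\bfeps}[W]$ is strictly increasing on $[0,\lambStaL)$ and constant ($=W_{\mathrm{high}}$) on $[\lambStaL,\infty)$, so $\argmax_{\lambda\ge 0}$ is $[\lambStaL,\infty)$ and the statement follows. I expect the two delicate points to be (i) matching the corner-minimization value to the stated closed form for $\lambStaL$, and above all (ii) the strict-increase claim, which genuinely needs a mild regularity (atomlessness) assumption on the noise --- with atomic noise $\bbE_{\bfeps}[W]$ would only be a non-decreasing step function; everything else is bookkeeping around the closed-form estimate and the rearrangement inequality.
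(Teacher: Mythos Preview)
Your proof is correct and follows the same high-level reduction as the paper: write the expected welfare as an affine function of $g(\lambda)=\Pr(\hu_T(\lambda)>\hu_N(\lambda))$, show $g$ is non-decreasing in $\lambda$, and identify $\lambStaL$ as the smallest $\lambda$ at which $g$ saturates to $1$ by plugging in the worst-case noise corner $(\barepsilonT,\barepsilonN)=(-\barE,\barE)$. Where you diverge is in the monotonicity step. The paper proves it through two auxiliary lemmas: first that at any root $\lambda_0>0$ of $F(\lambda;\barepsilonT,\barepsilonN):=\hu_T(\lambda)-\hu_N(\lambda)$ the $\lambda$-derivative is strictly positive, and then that the super-level sets $E(\lambda)=\{F(\lambda;\cdot)>0\}$ expand via an IVT/contradiction argument. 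You instead clear denominators to get the affine form $a(\bfeps)\lambda+b(\bfeps)$ and show $a(\bfeps)\ge\nT(\cT-\barE)-\nN(\cN+\barE)>0$ uniformly from the noise bound, which gives the expansion in one line and makes the corner minimization for $\lambStaL$ immediate. Your route is shorter and more transparent. You are also more careful than the paper in two places: you handle general $K$ via the rearrangement argument for $W_{\mathrm{high}}\ge W_{\mathrm{low}}$ (the paper's welfare formula in its proof tacitly assumes all $K$ recommended slots come from the winning group, i.e.\ $K\le\nN$), and you correctly flag that the \emph{strict} increase of $g$ on $[0,\lambStaL)$ genuinely needs atomless noise --- a regularity the theorem statement implicitly relies on but the paper's proof does not verify.
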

\begin{theorem} \label{thrm:static-niche-lambda}
    In a non-strategic platform, consider a $\TvN$ instance where the user prefers the niche group, i.e., $\cT < \cN$. Then there exists an \textbf{upper bound} $\lambStaU \ge 0$ on $\lambSta$, given by
    \[
        \lambStaU = 
        \begin{cases} 
            0, & \text{if } \barE \ge \frac{\cN - \cT}{2}, \\
            \frac{\nT\nN(\cN - \cT + 2\barE)}{\nT(\cT - \barE) - \nN(\cN + \barE)}, & \text{if } \frac{\nN\cN - \nT\cT}{\nT+\nN} < \barE < \frac{\cN - \cT}{2}, \\
            +\infty, & \text{otherwise}.
        \end{cases}
    \]
    such that the expected user welfare in \cref{eq:lambsta-opti} is strictly decreasing  w.r.t. $\lambda$ for $\lambda<\lambStaU$ and remains constant for $\lambda \geq \lambStaU$. Consequently, any $\lambda \in [0, \lambStaU]$ is optimal.
    % Moreover, if the data noise is sufficiently large, i.e., $\barE \geq \frac{\cN - \cT}{2}$, then $\lambStaU = 0$.
\end{theorem}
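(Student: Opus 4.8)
The plan is to run the same argument as for the trend‑preferring case (\cref{thrm:static-trend-lambda}): in the $\TvN$ game everything lives on $\mathrm{span}(v_T,v_N)$, and the niche case is the mirror image of the trend case with the two content groups — and the direction of monotonicity in $\lambda$ — interchanged. First I would make the estimator explicit. Since every content vector is a copy of one of the orthonormal vectors $v_T,v_N$, we have $\sum_j v_jv_j^\top=\nT v_Tv_T^\top+\nN v_Nv_N^\top$, the observed ratings are $R_j=\cT+\epsilon_j$ for trend items and $R_j=\cN+\epsilon_j$ for niche items, and \cref{eq:mf-hat-u} collapses to
\[
  \hu(\lambda)=\hat a(\lambda)\,v_T+\hat b(\lambda)\,v_N,\qquad \hat a(\lambda)=\frac{\nT\cT+E_T}{\nT+\lambda},\quad \hat b(\lambda)=\frac{\nN\cN+E_N}{\nN+\lambda},
\]
with $E_T=\sum_{j\in\text{trend}}\epsilon_j$ and $E_N=\sum_{j\in\text{niche}}\epsilon_j$. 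As $\sigma$ is the inner product, all trend items receive estimated score $\hat a(\lambda)$ and all niche items $\hat b(\lambda)$, while their \emph{true} utilities are $\cT$ and $\cN$ with $\cN>\cT$. Consequently, for each realization of $\bfeps$ the Top‑$K$ list, and hence $W(\bfv;\calG(\lambda,\bfv,\bfeps))$, depends on $\lambda$ only through whether $\hat b(\lambda)>\hat a(\lambda)$: it equals a constant $W_{\mathrm{hi}}$ when niche outranks trend and a constant $W_{\mathrm{lo}}$ otherwise (the tie $\hat a=\hat b$ is a null event for continuous noise and handled separately), and a rearrangement inequality using $\cN>\cT$ and $r_1\ge\cdots\ge r_K\ge0$ shows $W_{\mathrm{hi}}\ge W_{\mathrm{lo}}$.

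Next I would establish the monotone structure in $\lambda$. The noise bound gives $\barE<\cT$, so $\nT\cT+E_T\ge\nT(\cT-\barE)>0$ and $\hat a(\lambda)>0$ for all $\lambda\ge0$; then $\hat b(\lambda)/\hat a(\lambda)=\tfrac{\nN\cN+E_N}{\nT\cT+E_T}\cdot\tfrac{\nT+\lambda}{\nN+\lambda}$, and since $\nT>\nN$ the factor $\tfrac{\nT+\lambda}{\nN+\lambda}$ is strictly decreasing in $\lambda$. Hence for each fixed $\bfeps$ the set $\{\lambda\ge0:\hat b(\lambda)>\hat a(\lambda)\}$ is an interval $[0,\lambda^*(\bfeps))$ (possibly empty, possibly all of $[0,\infty)$), so $\lambda\mapsto\bbP_\bfeps(\hat b(\lambda)>\hat a(\lambda))$ is non‑increasing and
\[
  \bbE_\bfeps\!\left[W(\bfv;\calG(\lambda,\bfv,\bfeps))\right]=W_{\mathrm{lo}}+(W_{\mathrm{hi}}-W_{\mathrm{lo}})\,\bbP_\bfeps\!\left(\hat b(\lambda)>\hat a(\lambda)\right)
\]
is non‑increasing in $\lambda$, strictly decreasing on the range where this probability lies strictly between $0$ and $1$, and constant once it reaches $0$.

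Finally I would pin down the breakpoints. Writing $\hat b(\lambda)>\hat a(\lambda)$ as $g(\lambda)+(\nT+\lambda)E_N-(\nN+\lambda)E_T>0$ with $g(\lambda)=\nT\nN(\cN-\cT)+\lambda(\nN\cN-\nT\cT)$, and using $|E_T|\le\nT\barE$, $|E_N|\le\nN\barE$, the extremal noise configurations give: $\bbP_\bfeps(\hat b(\lambda)>\hat a(\lambda))=0$ iff $\nT\nN(\cN-\cT+2\barE)+\lambda\big(\nN(\cN+\barE)-\nT(\cT-\barE)\big)\le0$, and $=1$ iff $\nT\nN(\cN-\cT-2\barE)+\lambda\big(\nN(\cN-\barE)-\nT(\cT+\barE)\big)>0$. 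Solving the first inequality for $\lambda$ yields $\lambStaU$ as the stated ratio when the coefficient $\nT(\cT-\barE)-\nN(\cN+\barE)$ is positive, and $\lambStaU=+\infty$ otherwise. The three‑way split then comes from comparing $\barE$ with $\tfrac{\cN-\cT}{2}$ — which governs whether niche already outranks trend for \emph{every} noise realization at $\lambda=0$, i.e.\ whether the maximal plateau is nonempty (at the boundary $\barE\ge\tfrac{\cN-\cT}{2}$ it collapses, pushing the maximizer to $\lambda=0$) — and with $\tfrac{\nN\cN-\nT\cT}{\nN+\nT}$ — which governs whether $\hat b\le\hat a$ can ever hold for every noise realization; in the regime $\barE\le\tfrac{\nN\cN-\nT\cT}{\nN+\nT}$ one checks $\nN\cN+E_N\ge\nT\cT+E_T$ surely, so niche always outranks trend, the expected welfare is constant, and $\lambStaU=+\infty$.

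I expect this last step to be the main obstacle: one must verify that the two linear‑in‑$\lambda$ conditions assemble exactly into the stated piecewise formula across all parameter regimes — in particular that the sign of $\nT(\cT-\barE)-\nN(\cN+\barE)$ is consistent with each case, since the noise bound alone does not pin it down — and one must justify the \emph{strict} decrease on the transition interval, which requires the noise law to put no atom on the moving hyperplane $\{g(\lambda)+(\nT+\lambda)E_N-(\nN+\lambda)E_T=0\}$ (true for the canonical uniform noise, so I would state this as a mild regularity condition or specialize to it). Handling the tie case $\hat a(\lambda)=\hat b(\lambda)$ is a minor additional bookkeeping step, vacuous under continuous noise.
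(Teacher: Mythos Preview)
Your approach is essentially the paper's: reduce welfare to the probability that niche outranks trend, show this probability is non-increasing in $\lambda$, and locate the breakpoints via the extremal noise realizations. Your ratio argument ($\hat b/\hat a=\tfrac{\nN\cN+E_N}{\nT\cT+E_T}\cdot\tfrac{\nT+\lambda}{\nN+\lambda}$ strictly decreasing because $\nT>\nN$ and $\hat a>0$) is in fact a bit cleaner than the paper's route through \cref{thrm:function-lambda-zero-point-derivative}--\cref{thrm:function-lambda-monotonicity}, but it lands in the same place.

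There is one slip at the end, and it is exactly the inconsistency you flagged as the ``main obstacle''. You write ``Solving the \emph{first} inequality for $\lambda$ yields $\lambStaU$'', i.e.\ you identify $\lambStaU$ with the threshold where $\bbP_\bfeps(\hat b>\hat a)$ hits $0$ (the start of the \emph{minimum} plateau). But the quantity the theorem is after --- as your own case reasoning two sentences later makes clear --- is the end of the \emph{maximum} plateau, i.e.\ the largest $\lambda$ with $\bbP_\bfeps(\hat b>\hat a)=1$. That is your \emph{second} inequality; solving it gives
\[
  \lambStaU=\frac{\nT\nN(\cN-\cT-2\barE)}{\nT(\cT+\barE)-\nN(\cN-\barE)},
\]
which is positive exactly on the middle range $\tfrac{\nN\cN-\nT\cT}{\nT+\nN}<\barE<\tfrac{\cN-\cT}{2}$ and is what the paper's own proof derives. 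The displayed middle-case formula in the theorem statement has the signs on $\barE$ flipped relative to this and, as you noticed, does not mesh with the stated case boundaries (its denominator is negative whenever $\nN\cN>\nT\cT$ and $\barE$ lies in the middle range). So your instinct that the pieces do not assemble was correct --- but the resolution is to solve the second inequality, not the first; once you do that, the three cases fall out immediately and the strict-decrease claim on the transition zone follows from atomlessness of the noise, as you say.
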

\begin{remark}
    Note that both the lower bound $\lambStaL$ in \cref{thrm:static-trend-lambda} and the upper bound $\lambStaU$ in \cref{thrm:static-niche-lambda} depend on the noise scale $\barE$. This dependence is expected: when $\barE$ is very small, the estimated feature $\hu$ tends to align closely with the ground-truth $u$ for any choice of $\lambda$, as indicated by \cref{eq:R-equals-US-plus-noise,eq:mf-hat-u}. In such cases, $\lambda$ has little impact on the overall welfare. Therefore, these bounds are informative only when the noise level $\barE$ exceeds a certain threshold.

    Taken together, \cref{thrm:static-trend-lambda} and \cref{thrm:static-niche-lambda} reveal a clear dichotomy for the optimal choice of $\lambda$ in a non-strategic platform, i.e., $\lambSta$. It tends to favor larger values as long as the user prefers the trend group and, conversely, smaller values when the user prefers the niche group.
\end{remark}

\subsection{Strategic Platform Analysis}
We then analyze the $\TvN$ setting under the strategic platform defined in \cref{def:competition-platform}, where the optimal regularization parameter is denoted by $\lambCom$. Our first step is to characterize the resulting pure Nash equilibrium, which serves as the foundation for analyzing the behavior of $\lambCom$.

% Due to its competitive nature, we model this as the $\TvN$ game---a stylized instance of the $C^3_{BV}$ game. The players in this game are the $\nT + \nN$ content creators. We assume their initial strategy profile, $s$, consists of adopting the base content embeddings, i.e.,

% \begin{proposition}[PNE Analysis] \label{thrm:pne-analysis}
%     For any $\TvN$ game instance under any rewarding mechanism that satisfies Individual Monotone property, it admits a PNE, where the predicted user embedding is occupied by at least \textbf{one} creator's strategy.
% \end{proposition}

\begin{proposition} \label{thrm:pne-analysis}
    Consider any $\TvN$ game with a given rewarding mechanism.
    Such a game is guaranteed to have a PNE. Furthermore, in any such equilibrium, the predicted user embedding coincides with the strategy of at least K content creator(s). Formally, we have $|\{ \hu \in \spne \}| \geq K$ where $\hu$ is the estimated user feature given by \cref{eq:mf-hat-u}.
\end{proposition}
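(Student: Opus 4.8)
The plan is to analyze the single-user $\TvN$ game directly, exploiting the fact that the whole strategic interaction is effectively one-dimensional: the estimated user feature $\hu$ lies in the two-dimensional cone spanned by $v_T$ and $v_N$, and by \cref{eq:mf-hat-u} with $\bfv$ as in \cref{def:tvnEnvironment}, $\hu$ is a fixed (noise-dependent but strategy-dependent) vector once the strategy profile $\bfs$ is given. Each creator $j$'s payoff depends on $\bfs$ only through the estimated matching scores $\sigma_j(\hu) = s_j^\top \hu$ and the vector of competitors' scores; by individual monotonicity of $RM$, creator $j$ weakly prefers to make $s_j^\top \hu$ as large as possible given the induced $\hu$. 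So the first step is to set up the best-response structure: fixing the opponents' strategies $\bfs_{-j}$, show that creator $j$'s best response is to choose $s_j$ on the unit sphere $\bbSpos^{d-1}$ maximizing $s_j^\top \hu(\bfs)$, which — since $\hu$ itself depends on $s_j$ — is a fixed-point condition. I would argue that the unique maximizer of $s^\top w$ over $\bbSpos^{d-1}$ for a vector $w$ in the positive cone is $s = w/\|w\|$, so a best response aligns $s_j$ with the (normalized) predicted embedding $\hu$.

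The second step is to establish existence of a PNE. Rather than invoking a generic fixed-point theorem on the full profile space (the payoffs need not be quasiconcave because of the feedback through $\hu$), I would look for an equilibrium of a structured form: partition creators into those who "copy" $\hu/\|\hu\|$ and those who stay on $v_T$ or $v_N$, and show a consistent configuration exists. Concretely, I expect the natural candidate equilibrium to have some number $\ell \ge K$ of creators playing exactly $\hs := \hu/\|\hu\|$ (so that they occupy all Top-$K$ slots with the maximal estimated score) and the rest playing their initial embeddings; one then checks (i) that $\hu$ computed from \cref{eq:mf-hat-u} for this profile is consistent with $\hs = \hu/\|\hu\|$ appearing among the strategies, and (ii) that no creator wants to deviate. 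For (ii): a creator already at $\hs$ is already at the score-maximizing point (among all points, the maximal achievable $s^\top \hu$ is $\|\hu\|$, attained at $\hs$), so by individual monotonicity she cannot gain; a creator not at $\hs$ would, by moving to $\hs$, change $\hu$, but the key is that moving toward $\hs$ can only (weakly) raise her own score while the reward is individually monotone — I would need to verify the reward doesn't drop through the competitors'-score channel, which is where I'd use that all incumbents at $\hs$ are symmetric. The self-consistency in (i) reduces to a scalar equation: with $a$ creators contributing $v_T$-type rows, $b$ contributing $v_N$-type, and $\ell$ contributing $\hs$-type, $\hu$ is an explicit function of $(a,b,\ell,\lambda,\bfeps)$ via the $2\times 2$ (or small-dimensional) matrix inverse, and one shows a fixed point exists by a continuity/monotonicity argument in the relevant parameter.

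The third step — the actual conclusion — is the claim $|\{\hu \in \spne\}| \ge K$, i.e., at least $K$ creators play exactly the vector $\hu$ itself (not merely its normalization; note $\|\hu\|$ may differ from $1$, but the statement as written says $\hu \in \spne$, so I would either show $\|\hu\| = 1$ at equilibrium under the stated assumptions, or interpret "$\hu$" up to the sphere-normalization forced by $s_j \in \bbSpos^{d-1}$ and present the argument accordingly). The mechanism: suppose for contradiction that fewer than $K$ creators play the maximizer $\hs$ of $s^\top \hu$. Then there is a Top-$K$ slot occupied by some creator $t$ with $s_t \ne \hs$, hence $s_t^\top \hu < \|\hu\|$. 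I claim $t$ has a profitable (or at least weakly improving, and strictly improving if we argue carefully about ties) deviation to $\hs$: after the deviation her own estimated score strictly increases, and I must control the change in $\hu$ and in competitors' scores. The cleanest route is to show that when creator $t$ switches to $\hs$, the new predicted embedding $\hu'$ still has $\hs' := \hu'/\|\hu'\|$ equal to $\hs$ up to the already-present copies — because adding one more $\hs$-aligned row to the Gram matrix keeps $\hu$ pointing in essentially the same direction (here the orthogonality $v_T \perp v_N$ and the noise bound in \cref{def:tvnEnvironment}, which guarantees $\hu$ stays in the interior of the positive cone with a definite sign pattern, do the work) — so $t$ lands in a Top-$K$ slot with the maximal score, weakly increasing $RM$ by individual monotonicity, contradicting equilibrium unless $t$ was already among the $\ge K$ top-score creators, i.e., already playing $\hs$.

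The main obstacle I anticipate is step three's control of the feedback loop: a creator's deviation changes $\hu$, which changes everyone's scores, so "increasing my score" is not immediate — I have to show that the map "strategy profile $\mapsto \hu$" behaves monotonically enough that a creator moving toward the current $\hu$ ends up aligned with the new $\hu$ as well (a contraction-/monotonicity-type argument), and that the individual-monotonicity of $RM$ is not defeated by simultaneous changes in the competitors' score vector. This is precisely where the $\TvN$ structural assumptions — two orthogonal content directions, $\nT > \nN$, and the noise bound $\barE < \frac{\nT-\nN}{\nT+\nN}\min(\cT,\cN)$ — should be invoked to pin down the direction and sign pattern of $\hu$ robustly under single-creator deviations, so that the relevant comparisons become one-dimensional and the argument closes.
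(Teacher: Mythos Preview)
Your proposal rests on a misreading of the model: the estimated user embedding $\hu$ is \emph{not} strategy-dependent. In the $C^3_{BV}$ setup, $\hu$ is computed once from the \emph{initial} content embeddings $\bfv$ and the rating matrix $\bfR$ via \cref{eq:mf-hat-u}, which involves only $v_j$, not $s_j$. Creators then choose strategies $\bfs$ that are scored against this already-fixed $\hu$. There is no feedback loop of the kind you describe, and consequently all of the machinery you propose to control it --- the fixed-point consistency condition for $\hs$, the monotonicity/contraction argument for the map $\bfs \mapsto \hu$, the invocation of the noise bound to pin down the direction of $\hu$ under deviations --- is unnecessary.

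Once $\hu$ is seen to be fixed, the paper's proof is almost immediate. For existence: the symmetric profile where every creator plays (the normalization of) $\hu$ is a PNE, because any unilateral deviation strictly lowers the deviator's score $s_j^\top \hu$, and individual monotonicity of $RM$ then gives a weak decrease in reward. For the ``at least $K$'' claim: if fewer than $K$ creators sit at $\hu$, some creator $t$ with $s_t \ne \hu$ can move to $\hu$ and strictly raise her score while $\sigma_{-t}(\hu)$ is unchanged (again because $\hu$ does not move), so individual monotonicity yields a weakly improving deviation, contradicting equilibrium. No $\TvN$-specific structure is needed beyond a single user and a fixed $\hu$.

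You did correctly flag the normalization issue --- strategies live on $\bbSpos^{d-1}$ while $\hu$ need not have unit norm --- and the paper is indeed casual about writing ``$\hu \in \spne$'' rather than ``$\hu/\|\hu\| \in \spne$''. That observation is valid, but it is orthogonal to the main gap in your plan.
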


This proposition suggests that, at the PNE, content creators tend to shift their strategies toward the estimated user feature $\hu$ in order to better align with user preferences. Building on this insight, the following theorem characterizes the platform’s optimal regularization parameter $\lambCom$.

\begin{theorem} \label{thrm:competition-lambda-upper-bound}
    In a strategic platform, for a $\TvN$ game instance, there always exists a finite upper bound $\lambComU$ on $\lambCom$ such that the expected user welfare in \cref{eq:lambcom-opti} is strictly decreasing w.r.t. $\lambda$ for all $\lambda > \lambComU$. Furthermore, if $\nT$ and $\nN$ both grow while maintaining a fixed ratio $\nT / \nN = C$ for some constant $C > 1$, then $\lambComU = O(\nT^{0.5 + \alpha})$ for any $\alpha > 0$.
    % regardless of the user's preference for the trend group ($\cT > \cN$) or the niche group ($\cT < \cN$).
\end{theorem}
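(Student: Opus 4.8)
The plan is to reduce the equilibrium welfare of the $\TvN$ game to an explicit one–dimensional function of $\lambda$ and two aggregate noise terms, analyze its monotonicity for each fixed noise realization, and then sharpen the resulting (finite but crude) bound to the stated rate via concentration. Since all $\nT$ trend items share $v_T$ and all $\nN$ niche items share $v_N$, with $v_T\perp v_N$, the normal equations behind \cref{eq:mf-hat-u} decouple: writing $\barepsilonT=\sum_{j=1}^{\nT}\epsilon_j$ and $\barepsilonN=\sum_{j=\nT+1}^{\nT+\nN}\epsilon_j$, one gets $\hu(\lambda)=a(\lambda)\,v_T+b(\lambda)\,v_N$ with $a(\lambda)=\frac{\nT\cT+\barepsilonT}{\nT+\lambda}$ and $b(\lambda)=\frac{\nN\cN+\barepsilonN}{\nN+\lambda}$, both strictly positive because $\barE<\min(\cT,\cN)$. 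By \cref{thrm:pne-analysis}, at any PNE at least $K$ creators place their content along the unit direction $\hu(\lambda)/\|\hu(\lambda)\|_2$ — the unique direction maximizing every creator's individually monotone reward — so all Top-$K$ slots are filled by content pointing that way, and hence $W(\spne(\calG);\calG)=R\,\hu(\lambda)^\top u/\|\hu(\lambda)\|_2=R\,\|u\|_2\cos\theta(\lambda,\bfeps)$, where $R:=\sum_{k=1}^K r_k>0$ (otherwise welfare is constant and the claim is vacuous) and $\theta(\lambda,\bfeps)$ is the angle between $\hu(\lambda)$ and $u$. Thus maximizing $\bbE_\bfeps[W(\lambda)]$ is the same as maximizing $\bbE_\bfeps[\cos\theta(\lambda,\bfeps)]$.

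Fix $\bfeps$. The direction of $\hu(\lambda)$ is governed by $b(\lambda)/a(\lambda)=\frac{\nN\cN+\barepsilonN}{\nT\cT+\barepsilonT}\cdot\frac{\nT+\lambda}{\nN+\lambda}$, which is strictly decreasing in $\lambda$ (as $\nT>\nN$) and converges to $\frac{\nN\cN+\barepsilonN}{\nT\cT+\barepsilonT}$. The hypothesis $\barE<\frac{\nT-\nN}{\nT+\nN}\min(\cT,\cN)$ is exactly what forces $\frac{\nN\cN+\barepsilonN}{\nT\cT+\barepsilonT}<\cN/\cT$ for every admissible $\bfeps$, and also makes the denominator $\cT\cN(\nT-\nN)-\cT\barepsilonN+\cN\barepsilonT$ of the linear equation $b(\lambda)/a(\lambda)=\cN/\cT$ positive. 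Since $\cos\theta$ is maximized at $b/a=\cN/\cT$ and is unimodal in the direction angle, there is a threshold $\lambda^\star(\bfeps)=\max\{0,\lambda^\star_0(\bfeps)\}$, with $\lambda^\star_0(\bfeps)$ the unique solution of that equation, beyond which $\cos\theta(\cdot,\bfeps)$ — hence $W(\cdot;\bfeps)$ — is strictly decreasing. As $\bfeps\mapsto\lambda^\star(\bfeps)$ is linear–fractional with positive denominator on the box $\prod_j[-\barE,\barE]$, $\sup_\bfeps\lambda^\star(\bfeps)$ is attained at $\barepsilonT=-\nT\barE,\ \barepsilonN=\nN\barE$ and equals $\frac{\nT\nN\barE(\cT+\cN)}{\cT\cN(\nT-\nN)-\barE(\cT\nN+\cN\nT)}<\infty$. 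Taking $\lambComU$ to be this value, $\bbE_\bfeps[W(\lambda)]$ is strictly decreasing for $\lambda>\lambComU$ (strictness passes through the expectation), which proves the first assertion.

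For the rate, this bound is $\Theta(\nT)$ when $\nT/\nN\equiv C$ (WLOG $\alpha<1/2$, else $\nT^{1/2+\alpha}\ge\nT$ already dominates), so concentration is needed. I will show $\frac{d}{d\lambda}\bbE_\bfeps[W(\lambda)]<0$ for every $\lambda\ge\Lambda:=c\,\nT^{1/2+\alpha}$ once $\nT$ is large, where $c$ depends only on $C,\cT,\cN,\barE,\alpha$; since $\partial_\lambda W$ is uniformly bounded in $\bfeps$, differentiation under the integral gives $\frac{d}{d\lambda}\bbE_\bfeps[W(\lambda)]=\bbE_\bfeps[\partial_\lambda W(\lambda;\bfeps)]$. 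Split on $G=\{|\barepsilonT|\le\nT^{1/2+\alpha/2},\ |\barepsilonN|\le\nT^{1/2+\alpha/2}\}$: Hoeffding gives $\bbP(G^c)\le 4\exp(-\Omega(\nT^{\alpha}))$, while on $G$ one checks $\lambda^\star(\bfeps)=O(\nT^{1/2+\alpha/2})\ll\Lambda$, so $\partial_\lambda W(\lambda;\bfeps)<0$ for $\lambda\ge\Lambda$, and a quantitative estimate of $\cos\theta$ gives $|\partial_\lambda W(\lambda;\bfeps)|\ge\kappa\,\nT^{\alpha-1/2}\,M(\lambda)$ there, where $0<M(\lambda):=\sup_\bfeps|\partial_\lambda W(\lambda;\bfeps)|<\infty$ and $\kappa>0$ is a constant. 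Hence $\frac{d}{d\lambda}\bbE_\bfeps[W(\lambda)]\le M(\lambda)\bigl(-\kappa\,\nT^{\alpha-1/2}\,\bbP(G)+\bbP(G^c)\bigr)<0$ for $\nT$ large, because the polynomially small negative term beats the super-polynomially small positive one. So $\bbE_\bfeps[W(\lambda)]$ is strictly decreasing on $(\Lambda,\infty)$, giving $\lambComU=\Lambda=O(\nT^{1/2+\alpha})$ (enlarge $c$ to absorb the finitely many small $\nT$).

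The crux is this last step, and inside it the lower bound $|\partial_\lambda W(\lambda;\bfeps)|\ge\kappa\,\nT^{\alpha-1/2}M(\lambda)$ on $G$, uniformly over all $\lambda\ge\Lambda$ rather than merely at $\lambda=\Lambda$: this forces one to track how the angle gap $\cN/\cT-b(\lambda)/a(\lambda)$ and the sensitivity $(b/a)'(\lambda)$ scale jointly in $\nT$ and $\lambda$, and to verify that their combined effect relative to the uniform scale $M(\lambda)$ never drops below order $\nT^{\alpha-1/2}$. Everything else — the reduction and the pointwise analysis — is essentially explicit algebra plus the single inequality $\frac{\nN\cN+\barepsilonN}{\nT\cT+\barepsilonT}<\cN/\cT$ that the noise assumption is tailored to deliver. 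The exponent $1/2+\alpha$ instead of the naive $1$ is precisely the gain from replacing the adversarial noise realization of the second step by a typical, $\Theta(\sqrt{\nT})$-sized one.
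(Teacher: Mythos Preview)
Your proposal is correct and follows essentially the same route as the paper: reduce the equilibrium welfare to the cosine $\hu^\top u/\|\hu\|_2$ via \cref{thrm:pne-analysis}, analyze the derivative pointwise in $\bfeps$ through the ratio $a/b$ (the paper's $A(\lambda)$), use the worst-case noise realization to get the finite bound, and then Hoeffding on a good event to upgrade the rate to $O(\nT^{1/2+\alpha})$. The only cosmetic differences are your choice of good-event threshold ($\nT^{1/2+\alpha/2}$ for the sums, versus the paper's one-sided $\barE\,\nT^{\alpha-1/2}$ for the averages) and your normalization by $M(\lambda)$ rather than the paper's direct bound $B(\lambda)\le -C_1/\nT$ on $G$; both yield the same polynomial-versus-superpolynomial comparison, and your identified ``crux'' inequality $|\partial_\lambda W|\ge\kappa\,\nT^{\alpha-1/2}M(\lambda)$ is exactly what the paper establishes (in slightly weaker form) via the explicit estimate $A(\lambda)\ge(1+1/\nT)\cT/\cN$ on $G$.
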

\begin{remark}
    \cref{thrm:competition-lambda-upper-bound} indicates that the optimal choice of $\lambda$ in a strategic platform, i.e., $\lambCom$ inclines towards smaller values for any $\TvN$ game instance, regardless of user preference.
\end{remark}

We are now ready to compare the results derived from the strategic and non-strategic analyses, to elucidate the relationship between the optimal regularization strength across these two platform paradigms. For clarity, the comparison of theoretical results is still introduced by considering the trend-preferring and niche-preferring user, respectively.

\paragraph{Trend-preferring user} \cref{thrm:static-trend-lambda} establishes that for any trend-preferring user, the user welfare is maximized for any $\lambda$ at or above a lower bound, $\lambStaL$. Furthermore, for non-trivial noise levels (i.e. $\barE > \frac{\cT - \cN}{2}$), this bound grows as either $\nT$ or $\nN$ scales up. Assuming the content groups maintain a fixed size ratio ($\nT/\nN=C$), $\lambStaL$ exhibits linear growth with respect to the size of the trend group, i.e.,
\[
    \lambStaL = \Theta(\nT).
\]
As for the strategic platform, \cref{thrm:competition-lambda-upper-bound} establishes that the upper bound on $\lambda$ in the $\TvN$ game instance grows sub-linearly, i.e.,
\[
    \lambComU = O(\nT^{0.5 + \alpha}) \quad \text{for any } \alpha > 0.
\]
These two terms indicate that the growth rate of $\lambComU$ is slower than that of $\lambStaL$, implying that $\lambComU$ is strictly smaller than $\lambStaL$ when $\nT$ and $\nN$ are sufficiently large. Since real-world platforms typically involve a large number of content creators, this condition is readily satisfied in practice. Consequently, combining the results from \cref{thrm:static-trend-lambda,thrm:competition-lambda-upper-bound}, we conclude that $\lambCom < \lambSta$. This implies that for trend-preferring users, the platform should adopt a weaker regularization strategy (i.e., favor reduced bias) in strategic environments.

% The exponent $0.5+\alpha$ is strictly less than 1 for any sufficiently small $\alpha > 0$, confirming a growth rate definitively slower than linear. Consequently, the linearly growing lower bound will eventually surpass the the sub-linearly growing upper bound. Specifically, there must exist a group size $\nT^*$ such that for any platform where the number of trend items $\nT$ exceeds this threshold, the upper bound for $\lambCom$ in the strategic platform is \textit{systematically lower} than the lower bound for $\lambSta$ in the non-strategic platform. This implies that the optimal $\lambda$ for these two platforms, $\lambCom$ and $\lambSta$, must satisfy such inequality $\lambCom < \lambSta$.

% the lower bound for $\lambSta$ in the non-strategic platform is \textit{systematically larger} than the upper bound for $\lambCom$ in the strategic platform, i.e.,
% $$\lambStaL > \lambComU.$$

% \paragraph{Niche-Preferring User} For the trend preferring user, \cref{thrm:static-niche-lambda,thrm:competition-lambda-upper-bound} both establish that there exists an upper bound in either non-strategic and strategic platform.

\paragraph{Niche-preferring user}
In contrast to the trend-preferring case, when the user prefers niche content, the optimal regularization parameter is upper bounded in both the non-strategic setting (denoted by $\lambStaU$) and the strategic setting (denoted by $\lambComU$), as established in \cref{thrm:static-niche-lambda} and \cref{thrm:competition-lambda-upper-bound}. This result suggests that the platform should adopt a relatively small regularization strength in both environments.

% \paragraph{Extensions}
% Previous results have shown that for an individual user whose preference leans toward \textbf{trend} content group, a \textbf{smaller regularization strength} bounded by an upper bound is preferable in a strategic platform, compared with a non-strategic platform, where a \textbf{larger regularization strength} bounded by a lower bound is preferred. And we also proved that this lower bound is systematically larger than the upper bound, for non-trivial content population and noise observation \wk{this statement ok?}.

% These theoretical analyses, built upon the Tend-Niche market dichotomy, establish a viable path for studying the bias-variance tradeoff under more general competitive settings. We now return to our original $C^3_{BV}$ framework to develop a more general understanding of how this tradeoff is affected by creator competition under \textbf{platform-wide} environments.

\paragraph{Insight for multi-user settings} In platform-wide environments with multiple users, aggregate preferences often skew toward trend content~\citep{xu2023smaa, fleder2009blockbuster}, which attracts a large share of user attention compared to niche content. This imbalance gives rise to the well-documented ``super-star'' effect in digital markets~\citep{salganik2006experimental, rosen1981superstar}. Such a market structure directly impacts platform policy design, as maximizing total user welfare requires prioritizing the preferences of the dominant, trend-oriented user base. Our stylized analysis offers concrete guidance: for a trend-preferring user, the optimal policy involves choosing a \textbf{smaller} regularization parameter $\lambda$. Extrapolating this insight to the multi-user setting, we conclude that a strategic platform should similarly adopt a \textbf{lower} regularization strength when managing the bias-variance tradeoff. This conclusion is further supported by our experiments on both synthetic and real-world datasets in \cref{sec:expe}.

\subsection{Intuitive Analysis} \label{sec:intuitive-analysis}
% We now compare the analysis results from the non-strategic setting and strategic setting.
In this subsection, we offer an intuitive explanation of the theoretical results for \cref{sec:theories}. To illustrate our findings, \cref{fig:theory-results-comparison} shows examples of estimated user embeddings in an environment with $\nT=9$ trend and $\nN=1$ niche content. The observation noise is set to be $\epsilon_{j} \overset{\text{i.i.d.}}{\sim}\calN(0, 0.2^2)$ for $j \in [\nT+\nN]$. In this visualization, we define a \textit{trend-preferring user} with a ground-truth embedding of $(0.8, 0.6)$ and a \textit{niche-preferring user} with $(0.6, 0.8)$. The figure plots the distribution of estimated embeddings (blue dots) around these ground-truth values (large `X' markers). The dashed red line ($y=x$) separates the trend-preferring zone (below the line) from the niche-preferring zone (above the line).

\begin{figure}
    \centering
    \includegraphics[width=0.5\linewidth]{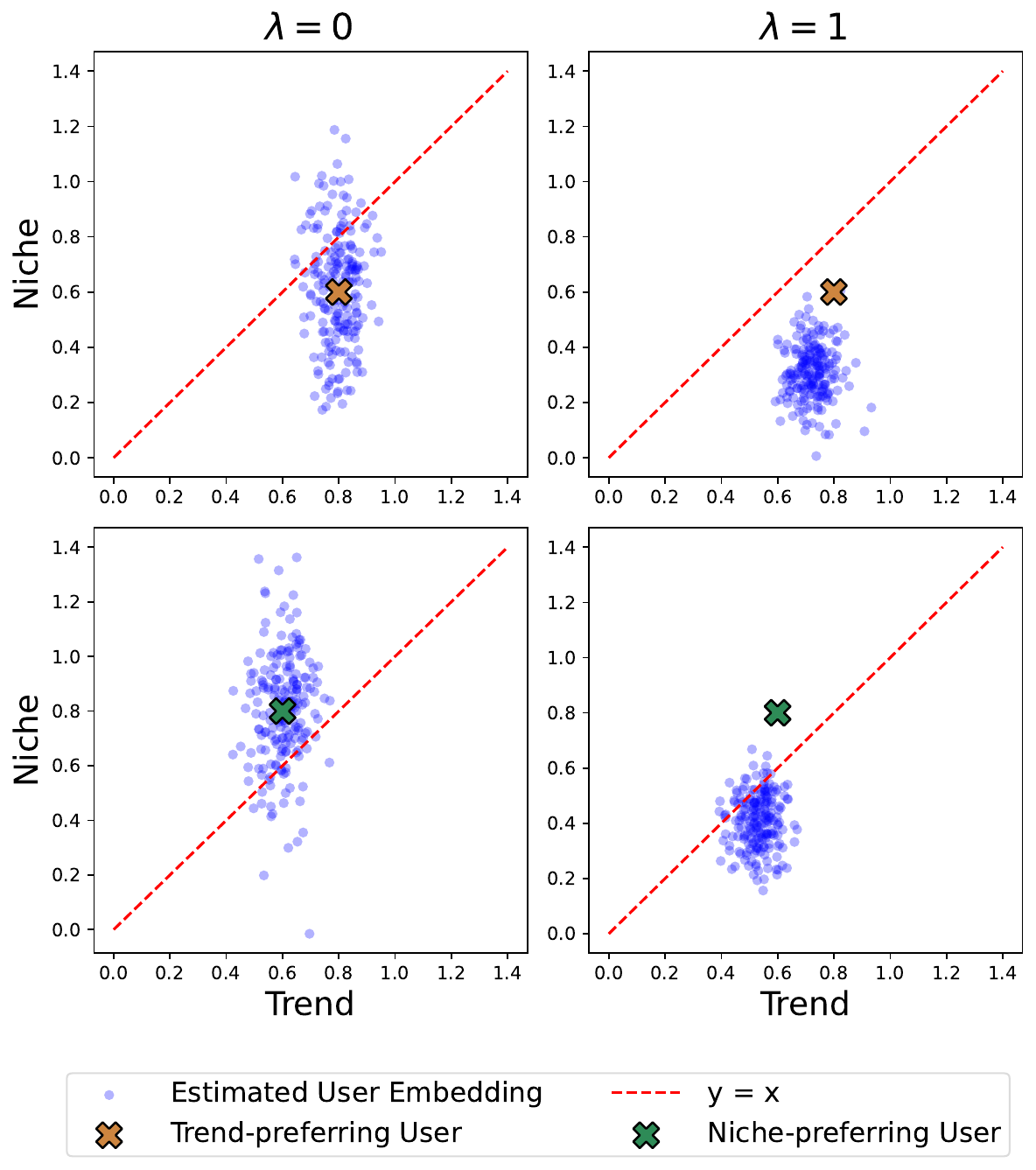}
    \vspace{-5px}
    \caption{Examples for distribution of estimated user embeddings from 200 trials under $\lambda=0$ and $\lambda=1$. The top row shows results for a trend-preferring user; the bottom row for a niche-preferring user. Ground-truth embeddings are marked by `X', and predictions by blue dots.
    Additional results with three more $\lambda$ values and an extra sample are provided in \cref{sec:omitted-figures}.}
    \vspace{-5px}
    % \Description{}
    \label{fig:theory-results-comparison}
\end{figure}

% \wk{check}
The left column shows, under weak regularization strength ($\lambda=0$), the estimations exhibit \textbf{high variance} with \textbf{low bias}. For both user types, a significant portion of the embeddings cross into the opposing preference zone, creating a risk of misidentifying the user’s primary preference.
However, in the right column, as the regularization strength gets stronger ($\lambda=1$), the estimations have \textbf{lower variance} but are \textbf{strongly biased} towards the trend-preference zone, pulling either the trend-preferring or niche-preferring user into the trend zone, which showcases the \textit{homogenizing effect}~\citep{anwar2024filter} of strong regularization.
% Additional visualizations illustrating this effect are presented in \cref{sec:figures}.

These visualizations provide the intuition behind our main theorems.
For the non-strategic platform, welfare depends only on correctly classifying the user's preference. A trend-preferring user therefore \textit{welcomes} the homogenizing effect, as it secures their classification against noise. This explains why their welfare is maximized for any $\lambda$ above a lower bound ($\lambStaL$). Conversely, a niche-preferring user must \textit{avoid} this effect to prevent being misclassified, which explains the need for an upper bound on $\lambSta$.
For the strategic setting, welfare decreases with the expected estimation bias since, at equilibrium, the recommended content has the identical embedding to the user's biased estimate as implied by \cref{thrm:pne-analysis}. Here, the homogenizing effect is always harmful because it increases this expected bias, which explains why welfare decreases when $\lambda$ is too high, necessitating an upper bound for $\lambCom$.

% \subsubsection{platform-wide}
% With all these theoretical analysis above, we are curious that what can be learned for the strategic online platforms?

% \begin{theorem} \label{thrm:static-lambda-cold}
%     For static recommendation, adjusting the $\lambda$ never changes the total welfare of the platform in $\TvN$-coldStart setting.
% \end{theorem}
% \begin{theorem} \label{thrm:competition-lambda-cold}
%     For a online content creation platform\wk{TBD}, there exists a upper bound for the optimal regularization parameter $\lambda^*$ in $\TvN$-coldStart setting.
% \end{theorem}

% \begin{corollary} \label{thrm:PNE-in-$\TvN$}
%     For any rewarding mechanism for a content creator competition, there must exists a PNE in a binary game where at least one creator chooses the user embedding as her strategy.
% \end{corollary}
\section{Experiments} \label{sec:expe}

We conduct a series of experiments to empirically validate our theoretical framework by simulating the $C^3_{BV}$ game on both synthetic and publicly available datasets.
To isolate the specific impact of strategic interactions, our core experimental design involves comparing the outcomes of the \textit{strategic environment} (illustrated by four distinct rewarding mechanisms, including $\BRCM$, $\Mexpo$, $\Menga$, and $\Mzero$~\citep{yao2023rethinking}) against a \textit{non-strategic baseline}. The following subsections detail the simulation setup and present the results of this comparison.

In practice, due to the analytical intractability of converging to a PNE in complex strategic settings, we follow \citep{yao2023howbad,yao2023rethinking,eilat2023performative,yu2025beyond} and approximate equilibrium outcomes via \emph{Local Better Response} (LBR) dynamics, with our welfare evaluation based on the resulting strategy profile. Although these dynamics typically converge to a local Nash equilibrium \citep{hron2023modeling,eilat2023performative}, we slightly abuse notation by using the globally-defined optimal parameter $\lambCom$ from \cref{eq:lambcom-opti}. The complete algorithmic procedure is detailed in \cref{sec:sec5-omitted}, accompanied by plots illustrating the user welfare evolution under these dynamics. The source code is publicly available at \url{https://github.com/wangkng/C3BV}.

% In practice, computing a Pure Nash Equilibrium (PNE) is often intractable. We therefore follow~\citep{yao2023howbad,yao2023rethinking,eilat2023performative,yu2025beyond} and approximate equilibria using \emph{Local Better Response} (LBR) dynamics, with our welfare evaluation based on the resulting strategy profile. Although these dynamics converge to a local Nash equilibrium, we slightly abuse notation by using the globally-defined optimal parameter $\lambCom$ from \cref{eq:lambcom-opti}. The complete algorithmic procedure is provided in \cref{sec:dynamics-omitted}.

\subsection{Experiments on Synthetic Data} \label{sec:expe-syn}

\begin{figure*}
    \centering
    \includegraphics[width=\linewidth]{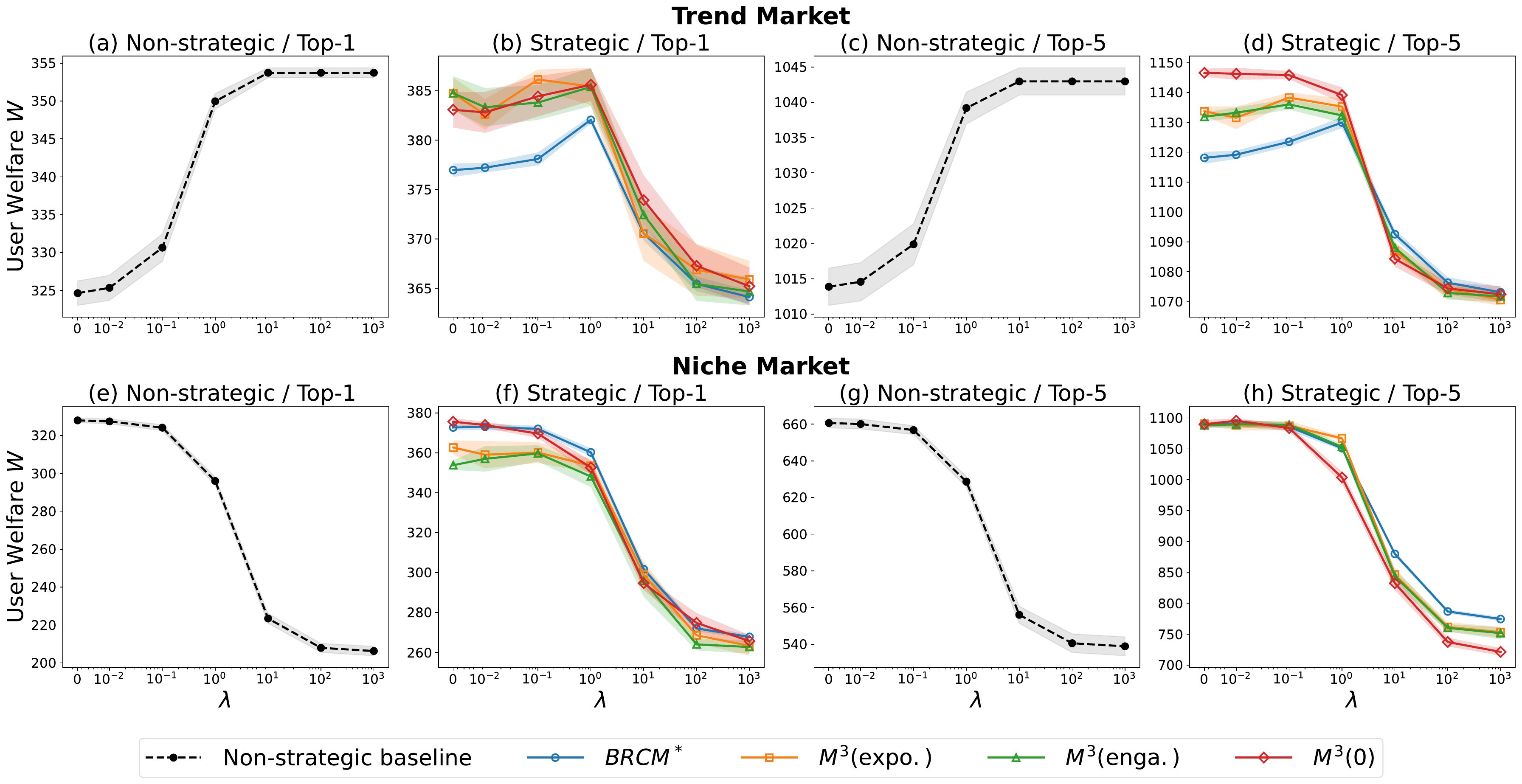}
    \caption{User welfare-$\lambda$ curve for Trend Market and Niche Market data in synthetic experiments. Error bars represent 0.5 standard deviation range, generated from simulations over 10 independent noise realizations.}
    \label{fig:synthetic-Welfare-all}
\end{figure*}

Our synthetic data experiments are designed to validate theoretical conclusions and test their robustness in a generalized environment that extends the stylized $\TvN$ setting defined in \cref{def:tvnEnvironment}.

\subsubsection{Synthetic Environments} We simulate a population of $n=10$ creators, whose strategies are $d=10$ dimensional vectors on the non-negative unit sphere (i. e., $\bbS^{d-1}_{\geq 0}$). The population is partitioned into a \textit{trend group} of $\nT=9$ creators and a \textit{niche group} of $\nN=1$, initially aligned along two randomly chosen orthonormal basis vectors, $v_T \in \bbS^{d-1}_{\geq 0}$ and $v_N \in \bbS^{d-1}_{\geq 0}$.

Based on these creator embeddings, we then construct a user population of $m=400$, with their \textit{ground-truth} features generated from the conic cone spanned by the trend (i.e., $v_T$) and niche (i.e., $v_N$) directions. To model distinct population-level preferences, we define two subspaces within this cone: a \textit{trend-preferring user space}, $\calC_T = \{ u = \alpha v_T + \beta v_N \mid \alpha > \beta \ge 0 \} \subset \bbRpos^d $, and a \textit{niche-preferring user space}, $\calC_N = \{ u = \alpha v_T + \beta v_N \mid \beta > \alpha \ge 0 \} \subset \bbRpos^d$. This setup provides an idealized construction of user preference.

Our synthetic experiments are built around two primary market scenarios. In the \textbf{Trend Market}, all $m$ users are randomly sampled from the subspace $\calC_T$, while in the \textbf{Niche Market}, users are drawn from the subspace $\calC_N$. Within each scenario, we conduct experiments for Top-$K$ recommendations with both $K=1$ and $K=5$.
We adopt a descending attention score sequence $\{r_k\}_{k=1}^K = \{\frac{1}{\log_2(2)}, \ldots, \frac{1}{\log_2(K+1)}\}$ to model the diminishing attention a user pays to lower-ranked items \citep{yao2023rethinking}. The observation noise is modeled as i.i.d. samples from $ \mathcal{N}(0, \sigma_e^2)$ with $\sigma_e = 0.5$.
% The observation noise is assumed to be i.i.d. samples from a normal distribution $\epsilon_{ij} \overset{\text{i.i.d.}}{\sim} \mathcal{N}(0, \sigma_e^2)$ with $\sigma_e = 0.5$.

% To illustrate the resulting dynamics in each environment, the results below present snapshots of the system at three key points in the competitive process. We show the \textbf{Initial State} (Step 1) and the \textbf{Interim State} (Step 400) to capture the evolving dynamics, along with the \textbf{Final State} (Step 800), where the iteration comes to the last step and creators no longer adapt their strategies. It therefore serves as our main point of comparison with the non-strategic baseline, allowing for a clear analysis of the impact of competition on user welfare.

\subsubsection{Results} We model the strategic environment as a dynamic process where each creator iteratively updates their strategy for a time horizon of $T=800$ steps, following the LBR algorithm \citep{yao2023rethinking,yao2024unveiling,yu2025beyond}. The result for the synthetic experiments is shown in \cref{fig:synthetic-Welfare-all}, including both Trend Market and Niche Market instances.

\begin{figure*}
    \centering
    \includegraphics[width=\linewidth]{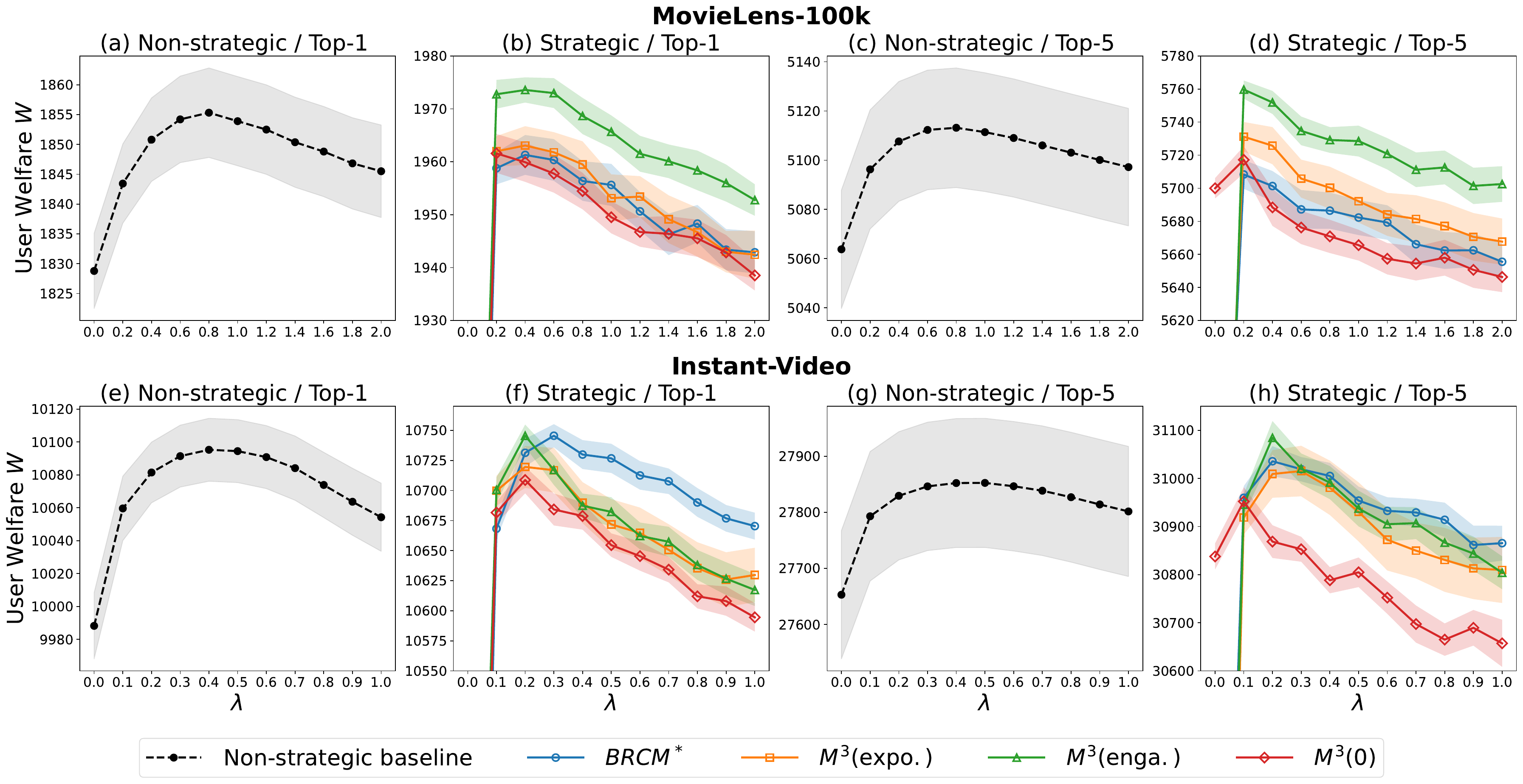}
    \caption{User welfare-$\lambda$ curve for experiments based on the MovieLens-100k and Instant-Video datasets. Error bars represent 0.2 standard deviation range, generated from simulations over 10 independent noise realizations. For better visibility, the $y$-axis in Plots (b), (d), (f), and (h) is truncated, as the value at $x = 0$ is significantly smaller and thus omitted.}
    % \Description{}
    \label{fig:real-Welfare-all}
\end{figure*}
 % by four distinct rewarding mechanisms (i.e., $\BRCM$, $\Mexpo$, $\Menga$, and $\Mzero$),
 
\paragraph{Trend Market} \Cref{fig:synthetic-Welfare-all}~\hyperref[fig:synthetic-Welfare-all]{(a)--(d)} show how user welfare varies with~$\lambda$ in the Trend Market environment for top-$K$ recommendation ($K=1$ and $K=5$), including the non-strategic baseline and strategic results. The latter are illustrated at the last iteration step (i.e., Step $T$). These figures show that the optimal $\lambda$ of a strategic environment (i.e., $\lambCom$) is always \textit{smaller} than the optimal $\lambda$ in the non-strategic baseline (i.e., $\lambSta$).
Specifically, for $K=1$, the strategic optimum is $\lambCom=0.1$ for $\Mexpo$ and $\lambCom=1$ for three other mechanisms; and for $K=5$, the strategic optimum is $\lambCom=1$ for $\BRCM$, $\lambCom=0$ for $\Mzero$ and $\lambCom=0.1$ for the remaining mechanisms.
In contrast, the non-strategic baseline exhibits a clear preference for stronger regularization. For both $K=1$ and $K=5$, the optimal welfare is achieved for any $\lambda \ge 10$. This aligns with our theoretical finding in \cref{thrm:static-trend-lambda} that the optimal set is an unbounded interval $[\lambStaL, \infty)$, with $\lambStaL \leq 10$.

\paragraph{Niche Market} \Cref{fig:synthetic-Welfare-all}~\hyperref[fig:synthetic-Welfare-all]{(e)--(h)} present user welfare across different~$\lambda$ values in the Niche Market environment for top-$K$ recommendation ($K=1$ and $K=5$), including the non-strategic baseline and strategic results. These figures show that the optimal $\lambda$ of a strategic environment (i.e., $\lambCom$) shows no significant different from the optimal $\lambda$ in the non-strategic baseline (i.e., $\lambSta$), with both values being small and leaning toward 0.

These results are consistent with the theoretical insights presented in \cref{sec:theories} and extend the analysis to multi-user environments and settings with unbounded noise.

% Specifically, for $K=1$, the $\lambCom=0.1$ for $\Mexpo$ and $\lambCom=1$ for three other mechanisms; and for $K=5$, the $\lambCom=1$ for $\BRCM$, $\lambCom=0$ for $\Mzero$ and $\lambCom=0.1$ for other mechanisms.
% In contrast, the non-strategic baseline exhibits a clear preference for stronger regularization. For both $K=1$ and $K=5$, the optimal welfare is achieved for any $\lambda \ge 10$. This aligns with our theoretical finding in \cref{thrm:static-trend-lambda} that the optimal set is an unbounded interval $[\lambStaL, \infty)$, with $\lambStaL \leq 10$.

\subsection{Experiments on Real-World Datasets}

To assess the applicability of theoretical findings, we conduct simulations based on instances generated from two benchmark datasets: \textbf{MovieLens-100k}\footnote{\url{https://grouplens.org/datasets/movielens/100k}} and the Amazon Product Reviews dataset\footnote{\url{https://cseweb.ucsd.edu/~jmcauley/datasets/amazon/links}}. For the latter, we focus on the ``Instant Video'' category \citep{he2016ups}, which we refer to as \textbf{Instant-Video}.
The MovieLens-100k dataset comprises 943 users, 1,682 movies, and 100,000 ratings \citep{harper2015movielens}. For the Instant-Video dataset, we use the 5-core subset, which includes 5,130 users, 1,685 items, and 37,126 ratings.

\subsubsection{Environment Constructed from Real-world Datasets}
To establish the ground-truth basis for our experiments with real-world datasets, we derive a two-step process to obtain \textit{ground-truth} user and content embeddings. First, we derive a complete set of user and content embeddings from the full user-content interaction matrix (i.e., $R$), using Non-negative Matrix Factorization (NMF) \citep{lee1999nmf} with the embedding dimension $d=16$. Second, to align with the geometric assumptions of our theoretical model, we normalize each content embedding by its $\ell_2$-norm, effectively projecting it onto the unit hypersphere.
We use the complete (943 for MovieLens-100k and 5,130 for Instant-Video) user embeddings as the user population and randomly sample $n=10$ content as the strategic creators with their initial strategies identical to these embeddings, which constitutes the ground-truth embeddings for our simulations. For each dataset, we conduct experiments for Top-$K$ recommendations with both $K=1$ and $K=5$. The attention scores are the same as in \cref{sec:expe-syn}, and the observation noise is modeled as i.i.d. samples from $ \mathcal{N}(0, \sigma_e^2)$ with $\sigma_e = 0.3$.

\subsubsection{Results} The creators' strategy iteration also follows the LBR algorithm within a time horizon of $T=1500$ steps. The result for the experiments on real-world datasets is shown in \cref{fig:real-Welfare-all}, including both the MovieLens-100k and Instant-Video datasets.

% We plot the user welfare as a function of $\lambda$ in the Trend Market environment for top-$K$ recommendation with $K=1$ and $K=5$, including the non-strategic baseline and strategic results. The latter are illustrated at the last iteration step (i.e., Step $T$). These figures show that the optimal $\lambda$ in the final step of a strategic environment (still denoted as $\lambCom$, with a slight abuse of notation) is always \textit{smaller} than the optimal $\lambda$ in the non-strategic baseline (i.e., $\lambSta$).
% Specifically, for $K=1$, the $\lambCom=0.1$ for $\Mexpo$ and $\lambCom=1$ for three other mechanisms; and for $K=5$, the $\lambCom=1$ for $\BRCM$, $\lambCom=0$ for $\Mzero$ and $\lambCom=0.1$ for other mechanisms.
% In contrast, the non-strategic baseline exhibits a clear preference for stronger regularization. For both $K=1$ and $K=5$, the optimal welfare is achieved for any $\lambda \ge 10$. This aligns with our theoretical finding in \cref{thrm:static-trend-lambda} that the optimal set is an unbounded interval $[\lambStaL, \infty)$, with $\lambStaL \leq 10$.
The \Cref{fig:real-Welfare-all} shows how user welfare varies with~$\lambda$ for instances constructed from MovieLens-100k and Instant-Video, under top-$K$ recommendation, with $K=1$ and $K=5$. The figure includes both the non-strategic baseline and the strategic outcomes.
% \Cref{fig:real-Welfare-all} plots the user welfare as a function of $\lambda$ with instances constructed from MovieLens-100k and Instant-Video for top-$K$ recommendation with $K=1$ and $K=5$, including the non-strategic baseline and strategic results.
% The latter are illustrated at the last iteration step (i.e., Step $T$). These figures show that the optimal $\lambda$ in the final step of a strategic environment (still denoted as $\lambCom$, with a slight abuse of notation) is always \textit{smaller} than the optimal $\lambda$ in the non-strategic baseline (i.e., $\lambSta$) for both datasets, i.e., $\lambCom < \lambSta$.
The latter are shown at the final iteration (Step~$T$). Across both datasets, these figures consistently demonstrate that the optimal $\lambda$ in the strategic environment is strictly \emph{smaller} than the optimal $\lambda$ in the non-strategic baseline, i.e., $\lambCom < \lambSta$.

% \paragraph{MovieLens-100k} Specific results for the MovieLens-100k dataset are presented in plots~\hyperref[fig:real-Welfare-all]{(a)--(d)}. For $K=1$, the $\lambCom=0.2$ for $\Mzero$ and $\lambCom=0.4$ for three other mechanisms; compared with the $\lambSta=0.8$ for the non-strategic baseline. For $K=5$, the $\lambCom=0.2$ for all four mechanisms; compared with the $\lambSta=0.8$ for the non-strategic baseline.

% \paragraph{Instant-Video} Specific results for the Instant-Video dataset are shown in plots~\hyperref[fig:real-Welfare-all]{(e)--(h)}. For $K=1$, the $\lambCom=0.3$ for $\BRCM$ and $\lambCom=0.2$ for three other mechanisms; compared with the $\lambSta=0.4$ for the non-strategic baseline. For $K=5$, the $\lambCom=0.3$ for $\Mexpo$, $\lambCom=0.1$ for $\Mzero$ and $\lambCom=0.2$ for other mechanisms; compared with the $\lambSta=0.5$ for the non-strategic baseline.

\paragraph{MovieLens-100k} 
Results on the MovieLens-100k dataset are shown in plots~\hyperref[fig:real-Welfare-all]{(a)--(d)}. 
For $K=1$, the strategic optimum is $\lambCom=0.2$ for $\Mzero$ and $\lambCom=0.4$ for the other mechanisms, compared to the non-strategic optimum $\lambSta=0.8$.
For $K=5$, all mechanisms yield $\lambCom=0.2$, again compared to the non-strategic baseline with $\lambSta=0.8$.

\paragraph{Instant-Video} 
Results on the Instant-Video dataset are shown in plots~\hyperref[fig:real-Welfare-all]{(e)--(h)}. 
For $K=1$, the strategic optimum is $\lambCom=0.3$ for $\BRCM$ and $\lambCom=0.2$ for the other mechanisms, compared to the non-strategic optimum $\lambSta=0.4$.
For $K=5$, the strategic optimum is $\lambCom=0.3$ for $\Mexpo$, $\lambCom=0.1$ for $\Mzero$, and $\lambCom=0.2$ for the remaining mechanisms, again compared to the non-strategic baseline with $\lambSta=0.5$.

\section{Conclusion}

In this paper, we propose the $C^3_{BV}$ framework to examine how strategic content creation reshapes the platform’s optimal bias-variance tradeoff for user feature estimation. Our analysis shows that when creators behave strategically, the platform should adopt weaker regularization (i.e., lower bias), and that the divergence from the non-strategic baseline is driven by users’ inherent preferences. These findings are supported by both theoretical results and empirical validation on synthetic and real-world datasets, highlighting the need to treat regularization not just as a statistical choice, but as a policy lever influenced by creator competition.
\section*{Acknowledgements}
Renzhe Xu's research was supported by the National Key R\&D Program of China (No. 2023YFA1009500), the National Natural Science Foundation of China (No. 72442024), the Shanghai Sailing Program (No. 24YF2711600), and the CCF-Huawei Populus Grove Fund. Bo Li's research was supported by the National Natural Science Foundation of China (Nos. 72171131 and 72133002).

% --------------------------------------------------------------------
% Bibliography
% --------------------------------------------------------------------
\bibliographystyle{plainnat}
\bibliography{references}

% --------------------------------------------------------------------
% Appendix
% --------------------------------------------------------------------
\clearpage
\appendix

\section{Omitted Figure} \label{sec:omitted-figures}
We further present a more comprehensive example---under the same setting---to support the intuitive analysis in \Cref{sec:intuitive-analysis}. The first two rows correspond to two distinct trend-preferring users, with ground-truth embeddings of $(0.8,\ 0.6)$ and $(0.96,\ 0.28)$, respectively. The latter two rows correspond to two distinct niche-preferring users, whose ground-truth embeddings are $(0.6,\ 0.8)$ and $(0.28,\ 0.96)$, respectively. The results in \cref{fig:theory-results-comparison-large} are consistent with our previous analysis, showing the mismatching tendency under weak regularization and homogenizing effect under strong regularization.

\begin{figure}[H]
    \centering
    \includegraphics[width=1\linewidth]{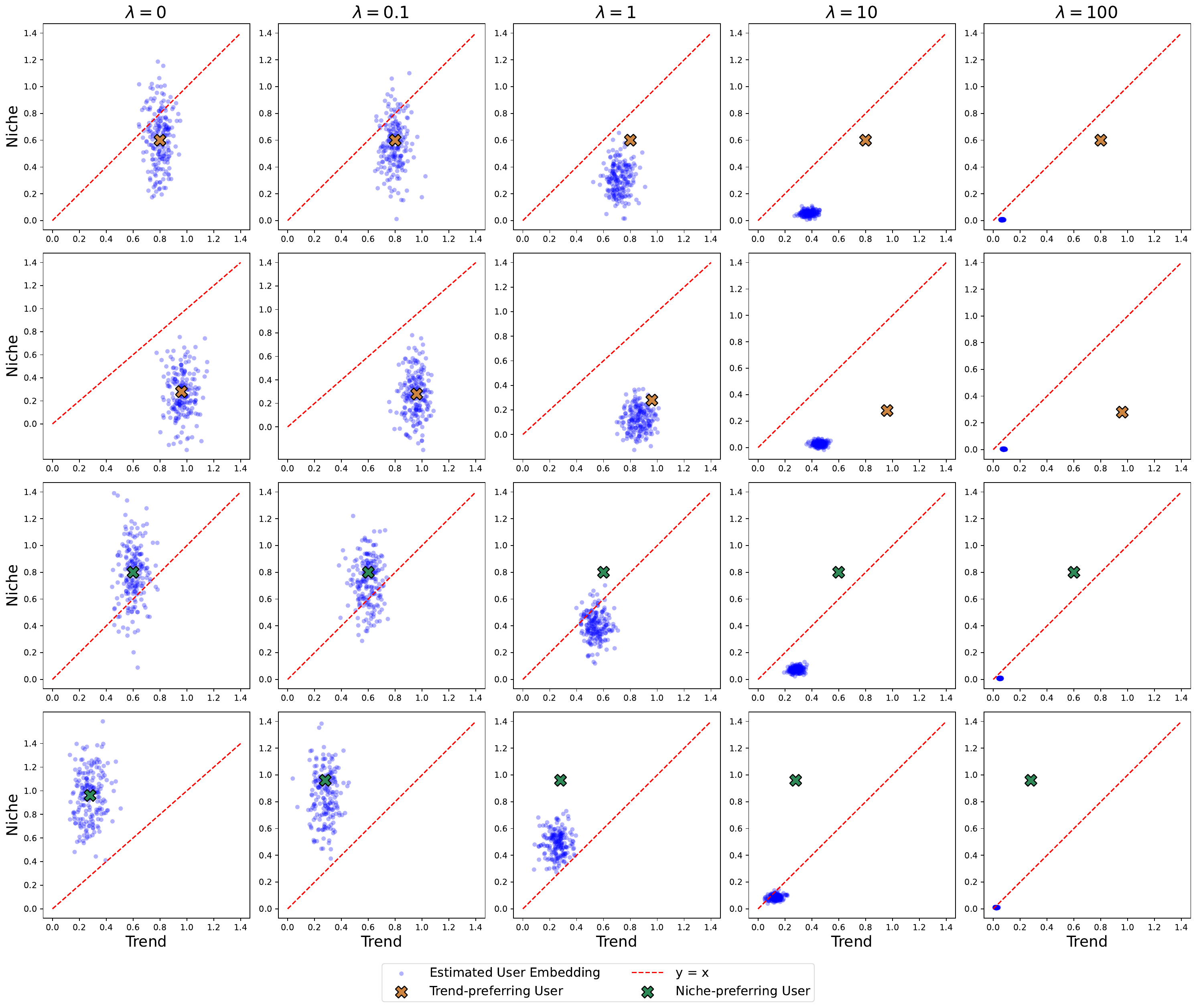}
    \caption{Examples for distribution of estimated user embeddings from 200 trials under five different $\lambda$s.}
    % \Description{}
    \label{fig:theory-results-comparison-large}
\end{figure}

% \subsection{Omitted Experiment Result} \label{sec:omitted-experiment}
% To enhance ..., we conduct additional experiments varying 1) the parameter values (n,K), dataset (Amazon-Music) and feature-extraction baseline (DeepMF [1]) and 2) the platform objective (Nash Social Welfare). The results are reported below.
\section{Additional Experimental Results} \label{sec:omitted-experiment}

To further examine the robustness of our findings, we report two sets of supplementary experiments.

\paragraph{(1) Varying data and modeling configurations.}
We first vary the experimental configuration along variant dimensions, including:
(i) the parameter values ($n,K$),
(ii) the dataset (Digital-Music\footnote{The ``Digital Music'' category of the Amazon Product Reviews corpus~\citep{he2016ups}. As with the Instant-Video dataset, we use the 5-score subset, consisting of 5,541 users, 3,568 items and 64,706 ratings.}), and
(iii) the feature-extraction baseline (DeepMF~\citep{xue2017dmf}).
The resulting optimal $\lambda$ values under the non-strategic baseline and each mechanism are summarized in \cref{tab:additional-configuration}.
\begin{table}[h]
\centering
\caption{The optimal $\lambda$ across mechanisms under additional experimental configurations.}
\label{tab:additional-configuration}
\begin{tabular}{lccccc}
\toprule
$\lambda^*$ & Non-Str & $\BRCM$ & $\Mexpo$ & $\Menga$ & $\Mzero$ \\
\midrule

\multicolumn{6}{l}{\textbf{Larger parameter values}} \\
$n=20,\,K=10$                  & 0.6 & 0.2 & 0.2 & 0.2 & 0.2 \\
$n=50,\,K=10$                  & 0.6 & 0.4 & 0.4 & 0.4 & 0.4 \\
$n=50,\,K=20$                  & 0.6 & 0.4 & 0.4 & 0.6 & 0.2 \\[2pt]

\multicolumn{6}{l}{\textbf{Digital-Music dataset}} \\
$n=10,\,K=1$     & 0.5 & 0.4 & 0.2 & 0.2 & 0.2 \\
$n=10,\,K=5$     & 0.5 & 0.2 & 0.2 & 0.2 & 0.1 \\[2pt]

\multicolumn{6}{l}{\textbf{DeepMF feature extraction}} \\
$n=10,\,K=1$           & 0.8 & 0.8 & 0.2 & 0.2 & 0.4 \\
$n=10,\,K=5$           & 0.8 & 0.6 & 0.4 & 0.2 & 0.2 \\
$n=20,\,K=10$          & 1.0 & 0.2 & 0.4 & 0.4 & 0.2 \\
\bottomrule
\end{tabular}
\end{table}

\paragraph{(2) Changing the platform objective.}
We additionally extend our analysis by replacing the platform objective (i.e., \emph{User Welfare}) with \emph{Nash Social Welfare} \citep{caragiannis2019unreasonableNSW}. Crucially, we re-run the full suite of experiments while strictly mirroring the configurations used in the main text.
The corresponding optimal $\lambda$ values are reported in \cref{tab:additional-NSW}.

\begin{table}[h]
\centering
\caption{The optimal $\lambda$ across mechanisms with another platform objective, the Nash Social Welfare.}
\label{tab:additional-NSW}
\begin{tabular}{lccccc}
\toprule
$\lambda^*$ & Non-Str & $\BRCM$ & $\Mexpo$ & $\Menga$ & $\Mzero$ \\
\midrule

\multicolumn{6}{l}{\textbf{Niche-Market}} \\
K=1 & 0.01 & 0.01 & 0.00 & 0.01 & 0.00 \\
K=5 & 0.00 & 0.01 & 0.00 & 0.00 & 0.01 \\[3pt]

\multicolumn{6}{l}{\textbf{Trend-Market}} \\
K=1 & $\ge 10$ & 1.0 & 0.0 & 1.0 & 1.0 \\
K=5 & $\ge 10$ & 1.0 & 0.1 & 0.1 & 0.0 \\[3pt]

\multicolumn{6}{l}{\textbf{MovieLens-100k}} \\
K=1 & 0.6 & 0.4 & 0.4 & 0.4 & 0.2 \\
K=5 & 0.6 & 0.2 & 0.2 & 0.2 & 0.2 \\[3pt]

\multicolumn{6}{l}{\textbf{Instant-Video}} \\
K=1 & 0.4 & 0.3 & 0.2 & 0.2 & 0.2 \\
K=5 & 0.5 & 0.2 & 0.3 & 0.2 & 0.1 \\

\bottomrule
\end{tabular}
\end{table}

\section{Omitted Proofs}

\paragraph{Additional notations} 
For the noise term $\epsilon$ introduced in \cref{def:tvnEnvironment}, we define $\barepsilonT$ as the average noise in the trend group, i.e., $\barepsilonT = \frac{\sum_{j=1}^{\nT} \epsilon_j}{\nT}$. Similarly, $\barepsilonN$ denotes the average noise in the niche group, i.e., $\barepsilonN = \frac{\sum_{j=\nT+1}^{\nT+\nN} \epsilon_j}{\nN}$. We define the function $F(\lambda; \barepsilonT, \barepsilonN) := \hu(\lambda)^\top(v_T - v_N)$.

\subsection{Proof of \cref{thrm:static-trend-lambda} and \cref{thrm:static-niche-lambda}}
We first introduce and prove the following lemmas, which will be instrumental in proving the main proposition.

\begin{lemma} \label{thrm:function-lambda-zero-point-derivative}
    For a fixed noise realization $(\bar{\epsilon}_T, \bar{\epsilon}_N)$, let $\lambda_0 > 0$ be any value such that $F(\lambda_0; \bar{\epsilon}_T, \bar{\epsilon}_N) = 0$. Then the derivative of $F$ w.r.t. $\lambda$ at $\lambda = \lambda_0$ is strictly positive, i.e.,
    \[
        \frac{\partial}{\partial \lambda} F(\lambda; \bar{\epsilon}_T, \bar{\epsilon}_N) \bigg|_{\lambda = \lambda_0} > 0.
    \]
\end{lemma}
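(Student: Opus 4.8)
The plan is to reduce $F$ to an explicit two-term rational function of $\lambda$ and then read off the sign of its derivative at any positive zero.

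First I would compute the closed form of $\hu(\lambda)$ in the $\TvN$ instance. Because the content pool consists of $\nT$ copies of $v_T$ and $\nN$ copies of $v_N$ with $v_T, v_N$ orthonormal, the Gram matrix $\sum_j v_j v_j^\top = \nT v_T v_T^\top + \nN v_N v_N^\top$ is diagonalized by $\{v_T, v_N\}$. Using $R_j = u^\top v_j + \epsilon_j$ together with $u^\top v_T = \cT$ and $u^\top v_N = \cN$, the vector $\sum_j R_j v_j$ equals $\nT(\cT + \barepsilonT) v_T + \nN(\cN + \barepsilonN) v_N$. Inverting the regularized Gram matrix on the $\{v_T, v_N\}$-span then gives
\[
    \hu(\lambda) = \frac{\nT(\cT + \barepsilonT)}{\nT + \lambda}\, v_T + \frac{\nN(\cN + \barepsilonN)}{\nN + \lambda}\, v_N,
\]
and hence, writing $a := \nT(\cT + \barepsilonT)$ and $b := \nN(\cN + \barepsilonN)$,
\[
    F(\lambda; \barepsilonT, \barepsilonN) = \hu(\lambda)^\top(v_T - v_N) = \frac{a}{\nT + \lambda} - \frac{b}{\nN + \lambda}.
\]
Here I would note that the noise bound $\barE < \frac{\nT - \nN}{\nT + \nN}\min(\cT, \cN) \le \min(\cT, \cN)$, combined with $|\barepsilonT|, |\barepsilonN| \le \barE$, guarantees $\cT + \barepsilonT > 0$ and $\cN + \barepsilonN > 0$, so $a, b > 0$.

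Next I would differentiate termwise, $F'(\lambda) = -a/(\nT + \lambda)^2 + b/(\nN + \lambda)^2$, and exploit the defining relation of a zero. At $\lambda_0 > 0$ with $F(\lambda_0) = 0$, the two terms of $F$ coincide; let $c := a/(\nT + \lambda_0) = b/(\nN + \lambda_0)$, which is strictly positive since $a, b > 0$ and $\lambda_0 > 0$. Substituting $a = c(\nT + \lambda_0)$ and $b = c(\nN + \lambda_0)$ into $F'(\lambda_0)$ yields
\[
    F'(\lambda_0) = -\frac{c}{\nT + \lambda_0} + \frac{c}{\nN + \lambda_0} = c\left(\frac{1}{\nN + \lambda_0} - \frac{1}{\nT + \lambda_0}\right).
\]
Since $\nT > \nN$ (part of the $\TvN$ definition) and $\lambda_0 > 0$, we have $\nT + \lambda_0 > \nN + \lambda_0 > 0$, so the parenthesized difference is strictly positive; with $c > 0$ this gives $F'(\lambda_0) > 0$, as claimed.

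There is essentially no hard step: the argument is a short computation once the closed form of $\hu(\lambda)$ is in hand. The only point needing care is the positivity of the coefficients $a, b$ (equivalently of the common value $c$), which is exactly where the quantitative noise bound in \cref{def:tvnEnvironment} enters — without it $F$ could in principle have a zero with reversed derivative sign. I would also remark in passing that this computation shows every positive zero of $F$ is a strict upcrossing, so $F$ has at most one positive zero; this monotone-crossing structure is presumably the fact that will be leveraged when deducing the welfare behavior in \cref{thrm:static-trend-lambda,thrm:static-niche-lambda}.
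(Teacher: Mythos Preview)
Your proof is correct and follows essentially the same route as the paper: both write $F(\lambda)=\frac{\nT(\cT+\barepsilonT)}{\nT+\lambda}-\frac{\nN(\cN+\barepsilonN)}{\nN+\lambda}$, differentiate termwise, substitute the zero condition to factor out the common positive value, and conclude via $\nT>\nN$ together with the noise bound ensuring $\cN+\barepsilonN>0$. Your write-up is slightly more self-contained in that you spell out the derivation of $\hu(\lambda)$ and explicitly verify $a,b>0$ from $|\barepsilonT|,|\barepsilonN|\le\barE<\min(\cT,\cN)$, whereas the paper simply cites the noise assumption at the final step.
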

\begin{proof}
    We start by expressing $F(\lambda; \bar{\epsilon}_T, \bar{\epsilon}_N)$ explicitly:
        $$F(\lambda; \bar{\epsilon}_T, \bar{\epsilon}_N)
        = \frac{\nT(\cT+\barepsilonT)}{\nT+\lambda} - \frac{\nN(\cN+\barepsilonN)}{\nN+\lambda}$$
    The condition $F(\lambda_0; \bar{\epsilon}_T, \bar{\epsilon}_N) = 0$ for some $\lambda_0 > 0$ implies the following equality:
    \begin{equation} \label{eq:function-equals-to-zero}
        \frac{\nT(\cT+\barepsilonT)}{\nT+\lambda_0} = \frac{\nN(\cN+\barepsilonN)}{\nN+\lambda_0}
    \end{equation}
    
    % We proceed by rearranging \cref{eq:function-equals-to-zero} as follows:
    % \[
    %     \lambda_0\left[\nT(\cT+\barepsilonT) - \nN(\cN+\barepsilonN)\right] = \nT\nN\left[(\cN+\barepsilonN) - (\cT+\barepsilonT)\right]
    % \]
    % Since $\lambda_0 > 0$ and $\nT\nN > 0$, the two terms $\nT(\cT+\barepsilonT) - \nN(\cN+\barepsilonN)$ and $(\cN+\barepsilonN) - (\cT+\barepsilonT)$ must have the same sign. The case where both are positive leads to a contradiction, as it would require both $\cT+\barepsilonT > \cN+\barepsilonN$ and $\nN(\cN+\barepsilonN) > \nT(\cT+\barepsilonT)$, which is impossible when $\nT>\nN$.
    
    % Therefore, both terms must be negative, which requires:
    % \begin{itemize}
    %     \item $\cT+\barepsilonT < \cN+\barepsilonN$
    %     \item $\nT(\cT+\barepsilonT) > \nN(\cN+\barepsilonN)$
    % \end{itemize}
    % Combining these inequalities yields the following chain:
    % \[
    % \nT(\cT + \bar{\epsilon}_T) > \nN(\cN + \bar{\epsilon}_N) > \nN(\cT + \bar{\epsilon}_T).
    % \]
    % Since $\nT>\nN$, we have $\cT+\barepsilonT > 0$. Hence, both inequalities imply the following strict ordering:
    % \[
    % \cN + \bar{\epsilon}_N > \cT + \bar{\epsilon}_T > 0.
    % \]

    Now, we compute the derivative of $F$ with respect to $\lambda$, i.e.,
    \[
        \frac{\partial F}{\partial \lambda} = -\frac{\nT(\cT+\barepsilonT)}{(\nT+\lambda)^2} + \frac{\nN(\cN+\barepsilonN)}{(\nN+\lambda)^2}
    \]
    As a result,
    \begin{align*}
        \frac{\partial F}{\partial \lambda} \bigg|_{\lambda = \lambda_0}
        &= -\frac{1}{\nT+\lambda_0} \left( \frac{\nT(\cT+\barepsilonT)}{\nT+\lambda_0} \right) + \frac{\nN(\cN+\barepsilonN)}{(\nN+\lambda_0)^2} \\
        &= -\frac{1}{\nT+\lambda_0} \left( \frac{\nN(\cN+\barepsilonN)}{\nN+\lambda_0} \right) + \frac{\nN(\cN+\barepsilonN)}{(\nN+\lambda_0)^2} \tag{by \cref{eq:function-equals-to-zero}} \\
        &= \frac{\nN(\cN+\barepsilonN)}{\nN+\lambda_0} \left( \frac{1}{\nN+\lambda_0} - \frac{1}{\nT+\lambda_0} \right) \\
        &> 0. \tag{By the assumption that $\barE<\frac{\nT-\nN}{\nT+\nN}\cdot\min(\cT,\cN)$}
    \end{align*}
    Now the claim follows.
\end{proof}

\begin{lemma} \label{thrm:function-lambda-monotonicity}
    For each $\lambda \ge 0$, define the set
    \[
        E(\lambda) := \left\{ (\bar{\epsilon}_T, \bar{\epsilon}_N) \mid F(\lambda; \bar{\epsilon}_T, \bar{\epsilon}_N) > 0 \right\}.
    \]
    Then \( E(\lambda) \) is monotonically expanding in $\lambda$, i.e., for any $\lambda_2 > \lambda_1 \ge 0$, we have $E(\lambda_1) \subseteq E(\lambda_2)$.
\end{lemma}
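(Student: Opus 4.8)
The plan is to argue by contradiction, using \cref{thrm:function-lambda-zero-point-derivative} as the engine: that lemma says the sign of the map $\lambda \mapsto F(\lambda;\bar{\epsilon}_T,\bar{\epsilon}_N)$ can only flip from negative to positive at a positive argument, never the reverse, so once $F$ becomes positive it stays positive for all larger $\lambda$ — which is exactly the monotone-expansion claim for $E(\lambda)$.

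First I would record that, for each fixed realization $(\bar{\epsilon}_T,\bar{\epsilon}_N)$, the function $\lambda \mapsto F(\lambda;\bar{\epsilon}_T,\bar{\epsilon}_N) = \frac{\nT(\cT+\barepsilonT)}{\nT+\lambda} - \frac{\nN(\cN+\barepsilonN)}{\nN+\lambda}$ is continuous and differentiable on all of $[0,\infty)$, since its denominators $\nT+\lambda$ and $\nN+\lambda$ are strictly positive there. It therefore suffices to fix $\lambda_2 > \lambda_1 \ge 0$ together with a realization satisfying $F(\lambda_1;\bar{\epsilon}_T,\bar{\epsilon}_N) > 0$ (i.e.\ $(\bar{\epsilon}_T,\bar{\epsilon}_N)\in E(\lambda_1)$), and to show $F(\lambda_2;\bar{\epsilon}_T,\bar{\epsilon}_N) > 0$.

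Then I would suppose toward a contradiction that $F(\lambda_2;\bar{\epsilon}_T,\bar{\epsilon}_N) \le 0$ and set $\lambda_0 := \inf\{\lambda \in (\lambda_1,\lambda_2] : F(\lambda;\bar{\epsilon}_T,\bar{\epsilon}_N) \le 0\}$. Continuity of $F$ yields $F(\lambda_0;\bar{\epsilon}_T,\bar{\epsilon}_N)=0$ together with $F(\lambda;\bar{\epsilon}_T,\bar{\epsilon}_N)>0$ for every $\lambda \in [\lambda_1,\lambda_0)$; and since $F$ is positive on a right-neighborhood of $\lambda_1$, we get $\lambda_0 > \lambda_1 \ge 0$, hence $\lambda_0 > 0$. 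Now \cref{thrm:function-lambda-zero-point-derivative} applies at $\lambda_0$ and gives $\frac{\partial}{\partial\lambda}F(\lambda;\bar{\epsilon}_T,\bar{\epsilon}_N)\big|_{\lambda=\lambda_0} > 0$, whereas the one-sided difference quotients $\frac{F(\lambda_0;\bar{\epsilon}_T,\bar{\epsilon}_N)-F(\lambda;\bar{\epsilon}_T,\bar{\epsilon}_N)}{\lambda_0-\lambda}$ for $\lambda\uparrow\lambda_0$ are all strictly negative (numerator $=-F(\lambda;\bar{\epsilon}_T,\bar{\epsilon}_N)<0$, denominator $>0$), forcing that derivative to be $\le 0$ — a contradiction. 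Hence $F(\lambda_2;\bar{\epsilon}_T,\bar{\epsilon}_N)>0$, i.e.\ $(\bar{\epsilon}_T,\bar{\epsilon}_N)\in E(\lambda_2)$, and so $E(\lambda_1)\subseteq E(\lambda_2)$.

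The only point needing care is ensuring the zero $\lambda_0$ produced by this infimum/intermediate-value step is genuinely positive, since \cref{thrm:function-lambda-zero-point-derivative} controls $\partial_\lambda F$ only at \emph{positive} zeros; this is precisely why the infimum is taken over $(\lambda_1,\lambda_2]$ and why the hypothesis $\lambda_1\ge 0$ (so that $\lambda_0>\lambda_1\ge 0$) is used. Beyond that, the argument is just elementary continuity and one-sided-derivative bookkeeping, so I do not anticipate a substantive obstacle.
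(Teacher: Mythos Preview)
Your proposal is correct and follows essentially the same contradiction argument as the paper: assume $F(\lambda_2)\le 0$, take the infimum of the set where $F$ first becomes nonpositive past $\lambda_1$, and use \cref{thrm:function-lambda-zero-point-derivative} to derive a sign contradiction on the derivative at that first zero. Your version is in fact slightly more careful than the paper's in explicitly verifying $\lambda_0>0$ before invoking the lemma and in spelling out the one-sided difference quotient argument for the nonpositive derivative.
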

% \begin{lemma} \label{thrm:function-lambda-monotonicity}
%     Let $P(\lambda; \bar{\epsilon}_T, \bar{\epsilon}_N)$ be defined as above. Then for any fixed pair $(\bar{\epsilon}_T, \bar{\epsilon}_N)$, the function $F(\lambda; \bar{\epsilon}_T, \bar{\epsilon}_N)$ is monotonically increasing in $\lambda$.
% \end{lemma}
\begin{proof}
    To prove the claim, we take an arbitrary element $(\bar{\epsilon}_T, \bar{\epsilon}_N) \in E(\lambda_1)$ and show it must also be in $E(\lambda_2)$ for any $\lambda_2 > \lambda_1$.
    By definition, $(\bar{\epsilon}_T, \bar{\epsilon}_N) \in E(\lambda_1)$ means $F(\lambda_1; \bar{\epsilon}_T, \bar{\epsilon}_N) > 0$. We want to show $F(\lambda_2; \bar{\epsilon}_T, \bar{\epsilon}_N) > 0$.

    Let's assume for the sake of contradiction that $F(\lambda_2; \bar{\epsilon}_T, \bar{\epsilon}_N) \le 0$.
    Consider $F(\lambda)$ as a function of $\lambda$ for this fixed noise realization. $F(\lambda)$ is continuous for $\lambda \ge 0$. We have $F(\lambda_1) > 0$ and $F(\lambda_2) \le 0$. By the Intermediate Value Theorem, there must exist some $\lambda_0 \in (\lambda_1, \lambda_2]$ such that $F(\lambda_0) = 0$.

    Let $\lambda_0^* = \inf \{ \lambda \in (\lambda_1, \lambda_2] \mid F(\lambda) = 0 \}$. Since $F$ is continuous and $F(\lambda_1) > 0$, we know that for all $\lambda \in [\lambda_1, \lambda_0^*)$, $F(\lambda) > 0$. This means the function approaches the root $\lambda_0^*$ from above (or is constant at zero), which implies that its derivative at that point must be non-positive:
    \[
        \frac{\partial}{\partial \lambda} F(\lambda; \bar{\epsilon}_T, \bar{\epsilon}_N) \bigg|_{\lambda = \lambda_0^*} \le 0.
    \]
    This directly contradicts the result of \cref{thrm:function-lambda-zero-point-derivative}, which states that the derivative at any such root must be strictly positive.

    The contradiction stems from our assumption that $F(\lambda_2; \bar{\epsilon}_T, \bar{\epsilon}_N) \le 0$. Therefore, this assumption must be false, and we must have $F(\lambda_2; \bar{\epsilon}_T, \bar{\epsilon}_N) > 0$.
    This shows that $(\bar{\epsilon}_T, \bar{\epsilon}_N) \in E(\lambda_2)$, and thus $E(\lambda_1) \subseteq E(\lambda_2)$.
\end{proof}

Now we are ready to prove the two propositions.

We begin by writing explicitly the user welfare function under static environment for top-$K$ recommendation, i.e., $W(\bfv; \calG(\lambda,\bfv,\bfeps))$, which is the expected user utility over all noise realizations:
\begin{equation} \label{eq:Wsta}
    \begin{aligned}
        W(\bfv; \calG(\lambda,\bfv,\bfeps))
        &= \bbE_{\epsilon_1,\ldots, \epsilon_{\nT+\nN}} \left[\sum_{k=1}^K r_k \cdot \left(
            v_T^\top u \cdot \bbI[\hu^\top v_T > \hu^\top v_N]
            + v_N^\top u \cdot \bbI[\hu^\top v_T \leq \hu^\top v_N]\right)
        \right] \\
        &= \sum_{k=1}^K r_k \left[P(\lambda) \cdot v_T^\top u + (1-P(\lambda)) \cdot v_N^\top u \right],
    \end{aligned}
\end{equation}
where $P(\lambda) \triangleq \Pr\left[\hu^\top v_T > \hu^\top v_N\right]$.

Then we proceed to prove \cref{thrm:static-trend-lambda}.
\begin{proof}[Proof of \cref{thrm:static-trend-lambda}]
    % The user welfare $W(\bfv; \calG(\lambda,\bfv,\bfeps))$ is 

     Since the user prefers the trend group ($\cT > \cN$), it holds that $v_T^\top u > v_N^\top u$. Thus, $W(\bfv; \calG(\lambda,\bfv,\bfeps))$ is a strictly increasing linear function of the probability $P(\lambda)$, and maximizing welfare is equivalent to maximizing $P(\lambda)$.

    The monotonicity of the user welfare follows from \cref{thrm:function-lambda-monotonicity}. The lemma states that the set $E(\lambda) = \{(\barepsilonT, \barepsilonN) : F(\lambda; \barepsilonT, \barepsilonN) > 0\}$ is monotonically expanding in $\lambda$. This directly implies that its probability measure, $P(\lambda)$, is a monotonically non-decreasing function of $\lambda$. Consequently, $W(\bfv; \calG(\lambda,\bfv,\bfeps))$ is also monotonically non-decreasing, thus implying $\lambda \to \infty$ always an optimal choice.
    
    % Note that
    % \begin{equation} \label{eq:function-explicity}
    %     F(\lambda; \bar{\epsilon}_T, \bar{\epsilon}_N)
    %     = \frac{[\nT(\cT+\bar{\epsilon}_T) - \nN(\cN+\bar{\epsilon}_N)]\lambda + \nT\nN[(\cT+\bar{\epsilon}_T)-(\cN+\bar{\epsilon}_N)]}{(\nT+\lambda)(\nN+\lambda)}.
    % \end{equation}

    % For the non-decreasing function $W(\bfv; \calG(\lambda,\bfv,\bfeps))$ to have an optimal solution set of the form $[\lambStaL, \infty)$, the function must become constant for all $\lambda \ge \lambStaL$. This occurs if and only if $P(\lambda)$ reaches its supremum, $P_{\max} = \sup_{\lambda' \ge 0} P(\lambda')$, at a finite $\lambStaL$. The supremum $P_{\max}$ is determined by the asymptotic behavior of $F(\lambda)$ as $\lambda \to \infty$. This value is strictly less than 1 unless the inequality $F(\lambda)>0$ holds for all noise realizations in the limit, which leads to a necessary saturation condition. This gives rise to two cases.
    
    An optimal solution set of the form $[\lambStaL, \infty)$ exists if and only if the non-decreasing welfare function $W(\bfv; \calG(\lambda,\bfv,\bfeps))$ becomes constant for all $\lambda \ge \lambStaL$. This requires the probability $P(\lambda)$ to saturate at its supremum, $P_{\max}$, at a finite $\lambStaL$. This saturation is guaranteed only if $P_{\max}=1$, which means the inequality $F(\lambda)>0$ must hold for all possible noise realizations.
    
    To ensure this, we test against the worst-case noise for this objective, which occurs when $(\barepsilonT, \barepsilonN) = (-\barE, \barE)$. We seek the smallest $\lambda \ge 0$ for which the inequality
    \[
        \frac{\nT(\cT-\barE)}{\nT+\lambda} > \frac{\nN(\cN+\barE)}{\nN+\lambda}
    \]
    holds. By rearrange the inequality, we obtain:
    \[
        \lambda\left[\nT(\cT-\barE) - \nN(\cN+\barE)\right] > \nT\nN\left[(\cN+\barE) - (\cT-\barE)\right].
    \]
    Let $K_T = \nT(\cT - \barE) - \nN(\cN + \barE)$ denote the coefficient of $\lambda$, and let $C_T = \nT\nN(\cN - \cT + 2\barE)$ be the constant on the right-hand side. The solution depends on the signs of $K_T$ and $C_T$, which leads to the following case analysis.
    
    \begin{enumerate}
        \item \textbf{$K_T \le 0$}. In this scenario, $P(\lambda)$ never reaches 1, and the welfare function is strictly increasing towards a supremum that is never attained at any finite $\lambda$. The optimal solution is at infinity, so we define $\lambStaL = +\infty$. However, this condition never satisfies since $\barE < \frac{\nT - \nN}{\nT + \nN} \min(\cT, \cN) $.
    
        \item \textbf{$K_T > 0$ and $C_T > 0$}. This condition is equivalent to $\frac{\nN}{\nT} < \frac{\cT - \barE}{\cN + \barE} < 1$. The inequality holds for all $\lambda$ above a certain threshold. The optimal solution set is $[\lambStaL, \infty)$, where
        \[
            \lambStaL = \frac{\nT\nN(\cN - \cT + 2\barE)}{\nT(\cT - \barE) - \nN(\cN + \barE)}.
        \]
        
        \item \textbf{$C_T \le 0$}. This condition is equivalent to $\frac{\cT - \barE}{\cN + \barE} \ge 1$. The inequality holds true for all $\lambda \ge 0$, which means the trend group always dominates. The welfare is always maximized, making the optimal set $[0, \infty)$. The lower bound of this set is $\lambStaL = 0$.
    \end{enumerate}
    
    To unify all cases, we formally define the lower bound $\lambStaL$ as follow:
    \[
        \lambStaL = 
        \begin{cases} 
            % +\infty, & \text{if } \frac{\cT - \barE}{\cN + \barE} \le \frac{\nN}{\nT} \\
            \frac{\nT\nN(\cN - \cT + 2\barE)}{\nT(\cT - \barE) - \nN(\cN + \barE)}, & \text{if } \frac{\nN}{\nT} < \frac{\cT - \barE}{\cN + \barE} < 1 \\
            0, & \text{if } \frac{\cT - \barE}{\cN + \barE} \ge 1.
        \end{cases}
    \]
    As a result, a lower bound $\lambStaL$ always exists, such that any $\lambda \ge \lambStaL$ represents an optimal choice. The claim then follows.
\end{proof}

Now we prove \cref{thrm:static-niche-lambda}.
\begin{proof}[Proof of \cref{thrm:static-niche-lambda}]
    The proposition's premise is that the user prefers the niche group, i.e., $\cT < \cN$. Thus, to maximize the welfare function $W(\bfv; \calG(\lambda,\bfv,\bfeps))$ defined in \cref{eq:Wsta}, is equivalent to minimizing the probability $P(\lambda)$, which indicates the probability that a trend content is recommended to this user.

    As established in the proof of \cref{thrm:static-trend-lambda}, the probability function $P(\lambda) = \Pr[F(\lambda)>0]$ is monotonically non-decreasing for all $\lambda \ge 0$. Since the user prefers the niche group ($\cT < \cN$), welfare is maximized by minimizing $P(\lambda)$. The minimum value $P(0)$ is achieved at $\lambda=0$, making $\lambSta=0$ an optimal choice. We now analyze the upper bound $\lambStaU$ such that the welfare $W(\lambda;s)$ remains at its maximum for all $\lambda \in [0,\lambStaU]$.

    An interval $[0, \lambStaU]$ is optimal if and only if $P(\lambda)$ is constant on this interval. The strongest form of this is when the probability is at its absolute minimum, $P(\lambda)=0$. We therefore seek the largest interval $[0, \lambStaU]$ for which we can guarantee $P(\lambda)=0$.
    
    To guarantee $P(\lambda)=0$, the condition $F(\lambda; \barepsilonT, \barepsilonN) \le 0$ must hold for \textit{all} possible noise realizations $\varepsilon \in [-\barE, \barE]$. To ensure this, we must test against the worst-case noise realization, which occurs when $(\barepsilonT, \barepsilonN) = (\barE, -\barE)$.
    
    We seek the range of $\lambda \ge 0$ for which the inequality
    \[
        \frac{\nT(\cT+\barE)}{\nT+\lambda} \le \frac{\nN(\cN-\barE)}{\nN+\lambda}
    \]
    holds. By rearrange the inequality, we obtain:
    \[
        \lambda\left[\nT(\cT+\barE) - \nN(\cN-\barE)\right] \le \nT\nN\left[(\cN-\barE) - (\cT+\barE)\right].
    \]
    Let $K_N = \nT(\cT + \barE) - \nN(\cN - \barE)$ denote the coefficient of $\lambda$, and let $C_N = \nT\nN[(\cN-\barE) - (\cT+\barE)]$ be the constant on the right-hand side. The feasibility of the inequality depends on the signs of $K_N$ and $C_N$, which leads to the following case analysis.

    \begin{enumerate}
        \item \textbf{$K_N \le 0$}. This condition is equivalent to $\frac{\cN - \barE}{\cT + \barE} \ge \frac{\nT}{\nN} >1$, implying the constant $C_N$ is positive. In this case, the inequality is satisfied for all $\lambda \ge 0$, meaning that under any noise realization, the trend group never dominates. Consequently, $P(\lambda) = 0$ for all $\lambda$, the user welfare is always maximized, and the optimal solution set is the entire non-negative real line: $[0, \infty)$. We define $\lambStaU = +\infty$. 
    
        \item \textbf{$K_N > 0$ and $C_N > 0$}. This condition is equivalent to $1 < \frac{\cN - \barE}{\cT + \barE} < \frac{\nT}{\nN}$. The inequality holds for all $\lambda$ below a threshold. In this case, maximal welfare is guaranteed for all $\lambda \in [0, \lambStaU]$, where
        \[
            \lambStaU = \frac{\nT\nN[(\cN-\barE) - (\cT+\barE)]}{\nT(\cT + \barE) - \nN(\cN - \barE)}.
        \]
    
        \item \textbf{$C_N \le 0$}. This condition is equivalent to $\frac{\cN - \barE}{\cT + \barE} \le 1$. In this case, the inequality can only be satisfied by $\lambda = 0$. Therefore, for any $\lambda > 0$, we can no longer guarantee that $P(\lambda) = 0$ holds for all noise realizations. As a result, $P(\lambda)$ becomes strictly increasing for $\lambda > 0$, and the optimal solution is unique at $\lambda = 0$. We thus define $\lambStaU = 0$.
    \end{enumerate}
    To unify all cases above, we formally define the upper bound $\lambStaU$ as follow:
    \[
        \lambStaU = 
        \begin{cases} 
            +\infty, & \text{if } \frac{\cN - \barE}{\cT + \barE} \geq \frac{\nT}{\nN} \\
            \frac{\nT\nN(\cN - \cT + 2\barE)}{\nT(\cT - \barE) - \nN(\cN + \barE)}, & \text{if } 1 < \frac{\cN - \barE}{\cT + \barE} < \frac{\nT}{\nN} \\
            0, & \text{if } \frac{\cN - \barE}{\cT + \barE} \le 1 .
        \end{cases}
    \]
    As a result, an upper bound $\lambStaU$ always exists such that any $\lambda \in [0, \lambStaU]$ guarantees maximal user welfare. The claim then follows directly.
\end{proof}

\subsection{Proof of \cref{thrm:pne-analysis}}
\begin{proof}
    We first establish the existence of a pure strategy Nash equilibrium (PNE).
    Consider the symmetric strategy profile $\bfs^0 = (\hu, \ldots, \hu)$, where all creators adopt strategy $\hu$.
    Suppose creator $j$ deviates unilaterally to some alternative strategy $s'_j \neq \hu$, while all other creators stick to $\hu$. It suffices to show that:
    $$\sigma(s'_j,\hu) < \sigma(\hu,\hu).$$
    Given that the rewarding mechanism $RM$ is individually monotone, it follows that:
    \[
        RM(\sigma'_j(\hu), \sigma_{-j}(\hu)) \leq RM(\sigma_j(\hu), \sigma_{-j}(\hu)).
    \]
    According to the definition for creator utility defined in \cref{eq:creator_j-utility}, we have $\pi_j(s'_j,s_{-j}) \leq \pi_t(s_j,s_{-j})$. Therefore, the profile $\bfs^0$ constitutes a PNE.

    We prove the claim $|\{ \hu \in \spne \}| \geq K $ by contradiction. Assume there exists a PNE profile, $\spne = (s_1, \dots, s_m)$, where $|\{ \hu \in \spne \}| = k' < K $.
    Since $k' < K \leq \nT + \nN $, there must be at least one creator, indexed by $t$, whose strategy $s_t$ in the equilibrium profile is not $\hu$ (i.e., $s_t \neq \hu$).

    Consider a unilateral deviation by creator $t$ to a new strategy $s'_t = \hu$, while all other creators' strategies $s_{-t}$ remain fixed. By definition, the estimated user embedding $\hu$ is the vector that maximizes the matching score function $\sigma(\cdot, \hu)$. Because $s_t \neq \hu$, the matching score for creator $t$ strictly increases with this deviation:
        $$\sigma(s'_t, \hu) > \sigma(s_t, \hu).$$
    Given that the rewarding mechanism $RM$ is individually monotone, it follows that:
    \[
        RM(\sigma_t(\hu), \sigma_{-t}(\hu)) \geq RM(\sigma_t'(\hu), \sigma_{-t}(\hu)).
    \]
    
    According to the definition for creator utility defined in \cref{eq:creator_j-utility}, we have $\pi_t(s_t,s_{-t}) \geq \pi_t(s^\prime_t,s_{-t})$.
    This shows that creator $t$ has a profitable unilateral deviation. However, this contradicts our initial premise that $\spne$ is a PNE.
    Therefore, our initial assumption that $k' < K$ must be false. This completes the proof.
\end{proof}

\subsection{Proof of \cref{thrm:competition-lambda-upper-bound}}

By \cref{thrm:pne-analysis}, in any PNE, the predicted user embedding coincides with the strategy of at least one content creator. As a result, the top-$K$ content recommended to the user always has the same embedding as $\hu$. Hence, the user welfare at the PNE, denoted as $W(\spne; \calG)$ here,  is given by the cosine similarity between the estimated and true user preference vectors:
% \begin{equation}
%     W(\spne; \calG) = \bbE_{\bfeps}\left[W\left(\spne(\calG(\lambda, \bfv, \bfeps)); \calG(\lambda, \bfv, \bfeps)\right)\right] = \bbE_{\bfeps}\left[\sum_{k=1}^K r_k\cdot\frac{\hu^\top u}{\|\hu\|_2}\right] = \bbE_{\bfeps}\left[\frac{\sum_{k=1}^K r_k(\cT\frac{\nT(\cT+\barepsilonT)}{\nT+\lambda} + \cN\frac{\nN(\cN+\barepsilonN)}{\nN+\lambda})}{\left[\left(\frac{\nT(\cT+\barepsilonT)}{\nT+\lambda}\right)^2 + \left(\frac{\nN(\cN+\barepsilonN)}{\nN+\lambda}\right)^2\right]^{1/2}}\right]
% \end{equation}
\begin{equation}
    \begin{aligned}
        W(\spne; \calG)
        &= \bbE_{\bfeps}\Bigl[
            W\bigl(\spne(\calG(\lambda, \bfv, \bfeps)); \calG(\lambda, \bfv, \bfeps)\bigr)
          \Bigr] \\
        &= \bbE_{\bfeps}\Biggl[
            \sum_{k=1}^K r_k \cdot \frac{\hu^\top u}{\|\hu\|_2}
          \Biggr] \\
        &= \bbE_{\bfeps}\Biggl[
            \frac{
              \sum_{k=1}^K r_k \Bigl(
                  \cT \frac{\nT(\cT+\barepsilonT)}{\nT+\lambda}
                + \cN \frac{\nN(\cN+\barepsilonN)}{\nN+\lambda}
              \Bigr)
            }{
              \left(
                \left(\frac{\nT(\cT+\barepsilonT)}{\nT+\lambda}\right)^2
                +
                \left(\frac{\nN(\cN+\barepsilonN)}{\nN+\lambda}\right)^2
              \right)^{1/2}
            }
          \Biggr].
    \end{aligned}
\end{equation}
Denote the following random variable
\[
A(\lambda) = \frac{\frac{\nT(\cT+\barepsilonT)}{\nT+\lambda}}{\frac{\nN(\cN+\barepsilonN)}{\nN+\lambda}} = \frac{\nT(\nN + \lambda)(\cT + \barepsilonT)}{\nN(\nT+\lambda)(\cN+\barepsilonN)}.
\]
Then the derivative $\frac{\partial}{\partial \lambda} W(\spne(\calG(\lambda,\bfv,\bfeps)); \calG(\lambda,\bfv,\bfeps))$ with respect to $\lambda$, denoted as $g(\lambda, s)$, is
\begin{equation} \label{eq:gradient-welfare-str}
    \begin{aligned}
        \, g(\lambda, s) 
        = & \, \frac{\partial}{\partial \lambda}W(\spne; \calG) \\
        = & \, \bbE\left[\frac{
            \sum_{k=1}^K r_k\nT\nN(\nT-\nN)
            (\cT+\barepsilonT)(\cN+\barepsilonN)
            \Biggl[
            \begin{aligned}
                &\nN\cT(\cN+\barepsilonN)(\nT+\lambda) \\
                &\qquad\qquad\qquad\qquad {} - \nT\cN(\cT+\barepsilonT)(\nN+\lambda)
            \end{aligned}
            \Biggr]
        }{
            (\nT+\lambda)^3(\nN+\lambda)^3
            \left[
                \left(\frac{\nT(\cT+\barepsilonT)}{\nT+\lambda}\right)^2
                +
                \left(\frac{\nN(\cN+\barepsilonN)}{\nN+\lambda}\right)^2
            \right]^{3/2}
        }\right] \\
        = & \, \frac{\sum_{k=1}^K r_k(\nT-\nN)\cN}{(\nT+\lambda)(\nN+\lambda)} \cdot \bbE\left[\frac{
            \frac{\nT(\nN+\lambda)
            (\cT+\barepsilonT)
            }{\nN(\nT+\lambda)(\cN+\barepsilonN)}\left[\frac{\cT}{\cN} - \frac{\nT(\cT+\barepsilonT)(\nN+\lambda)}{\nN(\nT+\lambda)(\cN+\barepsilonN)}\right]
        }{
            \left[
                \left(\frac{\nT(\nN+\lambda)(\cT+\barepsilonT)}{\nN(\nT+\lambda)(\cN+\barepsilonN)}\right)^2 + 1
            \right]^{3/2}
        }\right] \\
        = & \, \frac{\sum_{k=1}^K r_k(\nT-\nN)\cN}{(\nT+\lambda)(\nN+\lambda)} \cdot \bbE\left[\frac{A(\lambda)\left(\frac{\cT}{\cN} - A(\lambda)\right)}{\left(A^2(\lambda) + 1\right)^{3/2}}\right].
    \end{aligned}
\end{equation}

We first show that there exists a finite upper bound $\lambComU$ on $\lambCom$ such that the user welfare function $W(\spne;\calG)$ is strictly decreasing w.r.t $\lambda$ for all $\lambda > \lambComU$.
\begin{lemma} \label{lemma:tvn-strategic-existence}
    In a $\TvN$ game instance, there exists a finite upper bound $\lambComU$ on $\lambCom$ such that the user welfare function $W(\spne;\calG)$ is strictly decreasing w.r.t $\lambda$ for all $\lambda > \lambComU$.
\end{lemma}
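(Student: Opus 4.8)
The plan is to work directly from the closed form of the derivative $g(\lambda,s)=\frac{\partial}{\partial\lambda}W(\spne;\calG)$ displayed in \cref{eq:gradient-welfare-str}. Since $\nT>\nN$, $\cN>0$, and $r_1>0$, the prefactor $\frac{\sum_{k=1}^K r_k(\nT-\nN)\cN}{(\nT+\lambda)(\nN+\lambda)}$ is strictly positive for every $\lambda\ge 0$, so the sign of $g(\lambda,s)$ is governed entirely by the sign of $\bbE\bigl[\frac{A(\lambda)(\cT/\cN-A(\lambda))}{(A^2(\lambda)+1)^{3/2}}\bigr]$. (Differentiation under the expectation is legitimate here because the noise is bounded and the integrand is smooth in $\lambda$ on the relevant range, which I would note in a line.) The goal is thus to show this expectation is strictly negative once $\lambda$ exceeds a finite threshold $\lambComU$.

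The key structural fact is that, for each fixed noise realization, $A(\lambda)=\frac{\nT(\cT+\barepsilonT)}{\nN(\cN+\barepsilonN)}\cdot\frac{\nN+\lambda}{\nT+\lambda}$ is strictly increasing in $\lambda$ (the factor $\frac{\nN+\lambda}{\nT+\lambda}$ has derivative $\frac{\nT-\nN}{(\nT+\lambda)^2}>0$) and increases to the limit $\frac{\nT(\cT+\barepsilonT)}{\nN(\cN+\barepsilonN)}$. Using $|\barepsilonT|,|\barepsilonN|\le\barE$ gives the realization-independent bound $A(\lambda)\ge\frac{\nT(\cT-\barE)}{\nN(\cN+\barE)}\cdot\frac{\nN+\lambda}{\nT+\lambda}$, whose limit as $\lambda\to\infty$ is $\frac{\nT(\cT-\barE)}{\nN(\cN+\barE)}$. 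I claim this limit strictly exceeds $\cT/\cN$: cross-multiplying, this is equivalent to $\barE(\nT\cN+\nN\cT)<(\nT-\nN)\cT\cN$, and since $\tfrac{(\nT-\nN)\cT\cN}{\nT\cN+\nN\cT}\ge\tfrac{\nT-\nN}{\nT+\nN}\min(\cT,\cN)$ (a two-line check splitting on whether $\cT\le\cN$ or $\cT\ge\cN$), it follows from the standing noise assumption $\barE<\tfrac{\nT-\nN}{\nT+\nN}\min(\cT,\cN)$. Hence there is a finite $\lambComU$ such that $\lambda>\lambComU$ forces $\frac{\nT(\cT-\barE)}{\nN(\cN+\barE)}\cdot\frac{\nN+\lambda}{\nT+\lambda}>\cT/\cN$, and therefore $A(\lambda)>\cT/\cN$ for every noise realization; solving this linear-in-$\lambda$ inequality (its leading coefficient $\frac{\nT(\cT-\barE)}{\nN(\cN+\barE)}\cN-\cT$ is positive by the same claim) yields an explicit $\lambComU$.

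Finishing is then immediate: for $\lambda>\lambComU$ the integrand $\frac{A(\lambda)(\cT/\cN-A(\lambda))}{(A^2(\lambda)+1)^{3/2}}$ is the product of the positive quantity $A(\lambda)/(A^2(\lambda)+1)^{3/2}$ with the strictly negative quantity $\cT/\cN-A(\lambda)$, hence strictly negative almost surely; taking expectations preserves strictness, and multiplying by the positive prefactor gives $g(\lambda,s)<0$. So $W(\spne;\calG)$ is strictly decreasing for all $\lambda>\lambComU$, which is precisely the statement (and, in particular, forces $\lambCom\le\lambComU$). Note the argument never uses whether the user is trend- or niche-preferring, which is exactly why this bound holds uniformly over user preferences.

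The main obstacle is a bookkeeping hurdle rather than a conceptual one: establishing that the noise assumption $\barE<\tfrac{\nT-\nN}{\nT+\nN}\min(\cT,\cN)$ is strong enough to push the limit $\frac{\nT(\cT-\barE)}{\nN(\cN+\barE)}$ past the critical ratio $\cT/\cN$ uniformly over all realizations. Once that inequality is secured, the rest is just monotonicity of $A(\lambda)$ in $\lambda$, extraction of an explicit finite $\lambComU$ from a linear inequality, and sign-chasing inside the expectation using the positivity of the prefactor.
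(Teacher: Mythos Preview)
Your proposal is correct and follows essentially the same route as the paper's proof: both bound $A(\lambda)$ below by its worst-case-noise version $\frac{\nT(\cT-\barE)}{\nN(\cN+\barE)}\cdot\frac{\nN+\lambda}{\nT+\lambda}$, show its limit exceeds $\cT/\cN$ using the standing noise assumption, and then conclude that the integrand in \cref{eq:gradient-welfare-str} is almost surely negative for all $\lambda$ beyond a finite threshold. Your verification of the key inequality via the case split on $\min(\cT,\cN)$ is slightly more direct than the paper's introduction of an auxiliary slack parameter $\alpha$, but the argument is otherwise identical.
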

\begin{proof}
    Note that $\barepsilonN$ and $\barepsilonT$ is bounded in $[-\barE, \barE]$ and $\barE < \frac{\nT-\nN}{\nT+\nN}\min(\cT, \cN)$. Therefore, there must exist a constant $\alpha > 0$ such that $\barE \le \left(\frac{\nT-\nN}{\nT+\nN} - \alpha\right)\min(\cT, \cN)$. As a result, it holds that
    \[
    \begin{aligned}
    A(\lambda)
    &\ge 
    \frac{\nT(\nN + \lambda)\left(\cT - \left(\frac{\nT-\nN}{\nT+\nN} - \alpha\right)\min(\cT, \cN)\right)}
         {\nN(\nT + \lambda)\left(\cN + \left(\frac{\nT-\nN}{\nT+\nN}\min(\cT, \cN) - \alpha\right)\right)}  \\
    &\ge 
    \frac{\nT(\nN + \lambda)\cT\left(1 + \alpha - \frac{\nT-\nN}{\nT+\nN}\right)}
         {\nN(\nT + \lambda)\cN\left(1 - \alpha + \frac{\nT-\nN}{\nT+\nN}\right)}
    = 
    \frac{\nT(\nN+\lambda)\cT(2\nN + \alpha(\nT + \nN))}
         {\nN(\nT+\lambda)\cN(2\nT-\alpha(\nT+\nN))}.
    \end{aligned}
    \]
    Since the right-hand side of the above equation is increasing and
    \[
    \lim_{\lambda \rightarrow \infty} \frac{\nT(\nN+\lambda)\cT(2\nN + \alpha(\nT + \nN))}{\nN(\nT+\lambda)\cN(2\nT-\alpha(\nT+\nN))} 
    = \frac{\nT\cT(2\nN + \alpha(\nT + \nN))}{\nN\cN(2\nT-\alpha(\nT+\nN))} 
    > \frac{\nT\cT \cdot 2\nN}{\nN\cN \cdot 2\nT} = \frac{\cT}{\cN},
    \]
    there must exist a constant $\lambComU$ such that $A(\lambda) > \cT / \cN$ for any values of $\barepsilonT$ and $\barepsilonN$. As a result, \cref{eq:gradient-welfare-str} indicates that the gradient $g(\lambda, s) < 0$, and hence the user welfare function $W(\spne;\calG)$ is strictly decreasing with respect to $\lambda$ for all $\lambda > \lambComU$.
\end{proof}

Then we show that $\lambComU$ is $O(\nT^{0.5 + \alpha})$ for any $\alpha > 0$.

\begin{lemma} \label{lemma:tvn-strategic-order}
    Assuming that $\nT / \nN = C$ for some constant $C > 1$, we have that $\lambComU = O(\nT^{0.5 + \alpha})$ for any $\alpha > 0$.
\end{lemma}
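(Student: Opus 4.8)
The plan is to sharpen the crude worst-case estimate used in the proof of \cref{lemma:tvn-strategic-existence}. That argument bounds $\barepsilonT$ and $\barepsilonN$ by $\pm\barE$ deterministically, which, with $\nN=\nT/C$, only yields $\lambComU=\Theta(\nT)$; the improvement to $O(\nT^{0.5+\alpha})$ must come from the fact that, as the group sizes grow, $\barepsilonT$ and $\barepsilonN$ concentrate around $0$ at rate $\Theta(\nT^{-1/2})$. Throughout, $C,\cT,\cN,\barE$ and the attention weights are fixed and only $\nT\to\infty$ (with $\nN=\nT/C$). By \cref{eq:gradient-welfare-str} the scalar prefactor $\frac{\sum_k r_k(\nT-\nN)\cN}{(\nT+\lambda)(\nN+\lambda)}$ is strictly positive (as $\nT>\nN$), so it suffices to make $\bbE[h(A(\lambda))]<0$, where I write $A(\lambda)=Q(\lambda)\cdot\frac{\cT+\barepsilonT}{\cN+\barepsilonN}$ with $Q(\lambda):=\frac{\nT(\nN+\lambda)}{\nN(\nT+\lambda)}=1+(C-1)\frac{\lambda}{\nT+\lambda}\in[1,C)$, and $h(a):=\frac{a(\cT/\cN-a)}{(a^2+1)^{3/2}}$. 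Two structural facts about $h$ I would record: (i) $h$ is continuous on $[0,\infty)$ with $h(0)=0$ and $h(a)\to 0$ as $a\to\infty$, hence $|h|\le H$ for a constant $H$ depending only on $\cT/\cN$; and (ii) for $a>\cT/\cN$ one has $h(a)=-\frac{a(a-\cT/\cN)}{(a^2+1)^{3/2}}<0$.

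The core of the proof is a good-event decomposition. For a threshold $t>0$ set $\mathcal{E}_t:=\{|\barepsilonT|\le t\}\cap\{|\barepsilonN|\le t\}$; since $\barepsilonT,\barepsilonN$ are averages of $\nT,\nN$ i.i.d.\ mean-zero variables in $[-\barE,\barE]$ and $\nN=\nT/C\to\infty$, Hoeffding's inequality gives $\Pr(\mathcal{E}_t^c)\le 4\exp(-c_0\nT t^2)$ for a constant $c_0>0$ depending only on $C,\barE$. On $\mathcal{E}_t$ one has $Q(\lambda)\frac{\cT-t}{\cN+t}\le A(\lambda)\le Q(\lambda)\frac{\cT+t}{\cN-t}\le C\frac{\cT+\barE}{\cN-\barE}=:\bar{A}$, a constant (valid once $t\le\barE$). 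Fix a margin $\eta>0$ and let $\lambda^\star$ be the smallest $\lambda$ with $Q(\lambda)\frac{\cT-t}{\cN+t}\ge\frac{\cT}{\cN}(1+\eta)$; solving $Q(\lambda)=\rho$ with $\rho:=\frac{\cT(1+\eta)(\cN+t)}{\cN(\cT-t)}$ gives $\lambda^\star=\frac{(\rho-1)\nT}{C-\rho}$, where $\rho-1=\frac{t(\cT+\cN)+\eta\cT(\cN+t)}{\cN(\cT-t)}=O(t+\eta)$ and $C-\rho\ge\frac{C-1}{2}$ once $t,\eta$ are small, so $\lambda^\star=O((t+\eta)\nT)$. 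Since $Q$ is increasing, for every $\lambda\ge\lambda^\star$ we get $A(\lambda)\in[\frac{\cT}{\cN}(1+\eta),\bar{A}]$ on $\mathcal{E}_t$, and fact (ii) then yields $h(A(\lambda))\le -c_1\eta$ there, with $c_1:=\frac{(\cT/\cN)^2}{(\bar{A}^2+1)^{3/2}}$. Combining the two regimes, for all $\lambda\ge\lambda^\star$,
\[
    \bbE[h(A(\lambda))]\le -c_1\eta\,\Pr(\mathcal{E}_t)+H\,\Pr(\mathcal{E}_t^c)\le -c_1\eta+(H+c_1\eta)\cdot 4\exp(-c_0\nT t^2).
\]

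Finally I would tune the free parameters: take $t:=\nT^{-1/2+\alpha}$ and $\eta:=\nT^{-1}$. Then $\nT t^2=\nT^{2\alpha}\to\infty$, so $\exp(-c_0\nT t^2)$ decays faster than any polynomial and in particular $(H+c_1\eta)\cdot 4\exp(-c_0\nT t^2)<c_1\eta$ for all large $\nT$; hence the displayed bound is strictly negative and $g(\lambda,s)<0$ for every $\lambda\ge\lambda^\star$. Thus $\lambComU:=\lambda^\star$ works (taking the maximum with the finite bound from \cref{lemma:tvn-strategic-existence} to absorb the finitely many small $\nT$), and since $\eta=o(t)$ we conclude $\lambComU=\lambda^\star=O((t+\eta)\nT)=O(\nT^{-1/2+\alpha}\cdot\nT)=O(\nT^{1/2+\alpha})$. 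The step I expect to be the real obstacle is exactly this balancing act, not any single computation: the worst-case fluctuation of $\barepsilonN$ is only $O(1)$, so no deterministic bound can beat $O(\nT)$, and the gain must be extracted by playing the exponentially small failure probability $\Pr(\mathcal{E}_t^c)$ against the polynomially small welfare margin $\eta$ — the exponent $\tfrac12$ is the crossover at which $\nT t^2\to\infty$ first kicks in, which is why the bound degrades to $\nT^{1/2+\alpha}$ for any $\alpha>0$ rather than to $\nT^{1/2}$ itself.
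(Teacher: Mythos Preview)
Your proposal is correct and follows essentially the same route as the paper: a good-event/bad-event split for $(\barepsilonT,\barepsilonN)$ via Hoeffding at threshold $t=\nT^{-1/2+\alpha}$, a margin $\eta=1/\nT$ so that $A(\lambda)\ge(1+\eta)\cT/\cN$ on the good event once $\lambda$ exceeds an explicit $\lambda^\star=O(\nT^{1/2+\alpha})$, and then the polynomial-vs-exponential comparison to force $\bbE[h(A(\lambda))]<0$. The only cosmetic differences are that the paper uses one-sided noise bounds and writes the margin $1/\nT$ directly rather than introducing the parameter $\eta$; your two-sided version and abstraction are equally valid and arguably cleaner.
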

\begin{proof}
    Fix any $0 < \alpha < 0.5$. Denote the good event $G$ as
    \begin{equation} \label{eq:good-event}
        G = \bbI\left[\barepsilonT \ge - \barE \cdot \nT^{\alpha - 0.5} \quad \text{and} \quad \barepsilonN \le \barE \cdot \nN^{\alpha - 0.5}\right].
    \end{equation}
    Let $\bar{G}$ be the complement of $G$. By the union bound and Hoeffding's inequality, it holds that
    \[
    \begin{aligned}
        \bbP(\bar{G}) & = \bbP\left(\barepsilonT < - \barE \cdot \nT^{\alpha - 0.5} \quad \text{or} \quad \barepsilonN > \barE \cdot \nN^{\alpha - 0.5}\right) \\
        & \le \bbP\left(\barepsilonT < - \barE \cdot \nT^{\alpha - 0.5}\right) + \bbP\left(\barepsilonN > \barE \cdot \nN^{\alpha - 0.5}\right) \\
        & \le \exp\left(-\frac{\nT^{2\alpha}}{2}\right) + \exp\left(-\frac{\nN^{2\alpha}}{2}\right)
    \end{aligned}
    \]
    Define $B(\lambda)$ as the term inside the expectation in \cref{eq:gradient-welfare-str}, i.e.,
    \[
    B(\lambda) = \frac{A(\lambda)\left(\frac{\cT}{\cN} - A(\lambda)\right)}{\left(A^2(\lambda) + 1\right)^{3/2}}.
    \]
    When $\bar{G}$ happens, it holds that
    \[
    B(\lambda) \le \frac{\cT^2}{4\cN^2}.
    \]
    Let $\lambComU$ be the solution $\lambda$ to the following equation:
    \[
    \frac{\nT(\nN + \lambda)\left(\cT - \bar{E} \cdot \nT^{\alpha - 0.5}\right)}{\nN(\nT + \lambda)\left(\cN + \bar{E} \cdot \nN^{ \alpha - 0.5}\right)} = \left(1 + \frac{1}{\nT}\right)\frac{\cT}{\cN}.
    \]
    We have that
    \begin{equation} \label{eq:lambcomu}
    \lambComU = \frac{\nT\nN\left(\barE(1 + 1 / \nT)\cT\nN^{\alpha-0.5} + \barE\cN\nT^{\alpha - 0.5} + \cN\cT/\nT\right)}{\cT\cN(\nT - \nN(1 + 1/\nT)) - \barE\cN\nT^{\alpha + 0.5} - \barE(1+1/\nT)\cT\nN^{\alpha+0.5}}.
    \end{equation}
    It holds that $\lambComU > 0$ when $\nT$ is large enough and $\lambComU$ is $O\left(\nT^{0.5+\alpha}\right)$. In addition, we have that
    \[
    A(\lambda) = \frac{\nT(\nN + \lambda)(\cT + \barepsilonT)}{\nN(\nT+\lambda)(\cN+\barepsilonN)} \le C \cdot \frac{\cT + \barE}{\cN - \barE}.
    \]
    As a result, when the good event $G$ in \cref{eq:good-event} happens and $\lambda \ge \lambComU$, we have that for any realization of the random variables, it holds that
    \[
    A(\lambda) \ge \frac{\nT(\nN + \lambda)\left(\cT - \bar{E} \cdot \nT^{\alpha - 0.5}\right)}{\nN(\nT + \lambda)\left(\cN + \bar{E} \cdot \nN^{ \alpha - 0.5}\right)} = \left(1 + \frac{1}{\nT}\right)\frac{\cT}{\cN}.
    \]
    and
    \[
    B(\lambda) \le - \frac{\frac{\cT^2}{\cN^2}\cdot \frac{1}{\nT}}{\left(\left(C \cdot \frac{\cT + \barE}{\cN - \barE}\right)^2+1\right)^{3/2}} = -\frac{\cT^2}{\cN^2}\frac{C_1}{\nT}
    \]
    where $C_1 = \left(\left(C \cdot \frac{\cT + \barE}{\cN - \barE}\right)^2+1\right)^{-3/2}$ is a constant.

    Therefore, it holds that when $\nT$ is large enough and $\lambda \ge \lambComU$, it holds that
    \[
    \begin{aligned}
        \, \bbE[B(\lambda)] 
        = & \, \bbE[B(\lambda) \mid G]\bbP(G) + \bbE[B(\lambda) \mid \bar{G}]\bbP(\bar{G}) \\
        \le & \, \left(-\frac{\cT^2}{\cN^2}\frac{C_1}{\nT}\right)\left(1 - \exp\left(-\frac{\nT^{2\alpha}}{2}\right) - \exp\left(-\frac{\nN^{2\alpha}}{2}\right)\right) + \frac{\cT^2}{4\cN^2} \cdot \left(\exp\left(-\frac{\nT^{2\alpha}}{2}\right) + \exp\left(-\frac{\nN^{2\alpha}}{2}\right)\right) \\
        = & \, -\frac{\cT^2}{\cN^2}\frac{C_1}{\nT} + \left(\frac{\cT^2}{\cN^2}\frac{C_1}{\nT} + \frac{\cT^2}{4\cN^2}\right)\cdot \left(\exp\left(-\frac{\nT^{2\alpha}}{2}\right) + \exp\left(-\frac{\nN^{2\alpha}}{2}\right)\right).
    \end{aligned}
    \]
    It is clear that the right-hand side of the above equation becomes negative when $\nT$ is sufficiently large. This implies that
    \[
    g(\lambda, s) = \frac{\sum_{k=1}^K r_k(\nT - \nN)\cN}{(\nT + \lambda)(\nN + \lambda)} \cdot \bbE[B(\lambda)] < 0,
    \]
    and hence $W(\spne(\calG(\lambda,\bfv,\bfeps)); \calG(\lambda,\bfv,\bfeps))$ is strictly decreasing for all $\lambda \ge \lambComU$ when $\nT$ is large enough. Note that the threshold $\lambComU$ in \cref{eq:lambcomu} satisfies $\lambComU = O(\nT^{\alpha + 0.5})$. The claim then follows.
\end{proof}

Then \cref{thrm:competition-lambda-upper-bound} follows from \cref{lemma:tvn-strategic-existence,lemma:tvn-strategic-order}.

\section{Supplementary Material for \cref{sec:expe}} \label{sec:sec5-omitted}

\subsection{Local Better Response Algorithm} 
The pseudo-code of the Local Better Response~ \citep{yao2024user,yu2025beyond} Algorithm is given in \cref{alg:lbr}.
\begin{algorithm}[H]
    \caption{(\textbf{LBR}) Local Better Response update at time $t$}
    \label{alg:lbr}
    \begin{algorithmic}[1]
        \State \textbf{Input:} Learning rate $\eta$, current joint strategy profile $\boldsymbol{s}^{(t)}=(s_1^{(t)},\dots,s_n^{(t)})$.
        \State Sample a random direction $\boldsymbol{g}_j \in \mathbb{S}^{d-1}$ for player $j$.
        \State \textbf{Propose and project:}
        $\tilde{s}_j \leftarrow \Pi_{\mathbb{S}^{d-1}}\!\left(s_j^{(t)} + \eta\,\boldsymbol{g}_j\right)$.
        \If{$\pi_j\!\left(\tilde{s}_j,\, \boldsymbol{s}^{(t)}_{-j}\right) \ge \pi_j(\boldsymbol{s}^{(t)})$}
            \State $s_j^{(t+1)} \leftarrow \tilde{s}_j$.
        \Else
            \State $s_j^{(t+1)} \leftarrow s_j^{(t)}$.
        \EndIf
    \end{algorithmic}
\end{algorithm}

\subsection{User Welfare Evolution under LBR Dynamics} \label{sec:dynamics-omitted}

\begin{figure}[H]
    \centering
    \begin{subfigure}{0.44\linewidth}
        \centering
        \includegraphics[width=\linewidth]{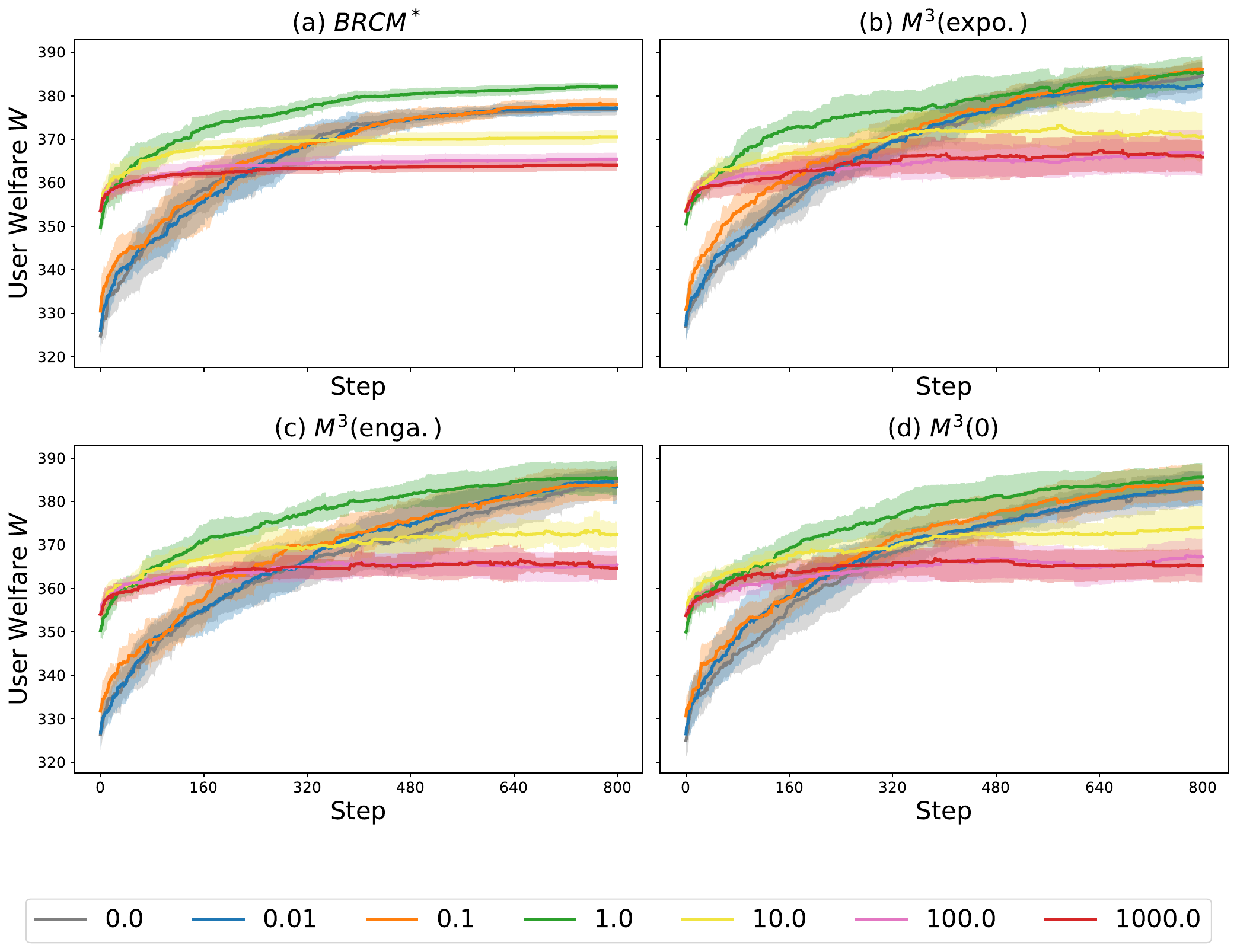}
        \caption{$K=1$.}
        \label{fig:uni-tvn9-pure-top1}
    \end{subfigure}
    \hfill % Creates a flexible horizontal space
    % uni-TvN9-pure Top-5
    \begin{subfigure}{0.44\linewidth}
        \centering
        \includegraphics[width=\linewidth]{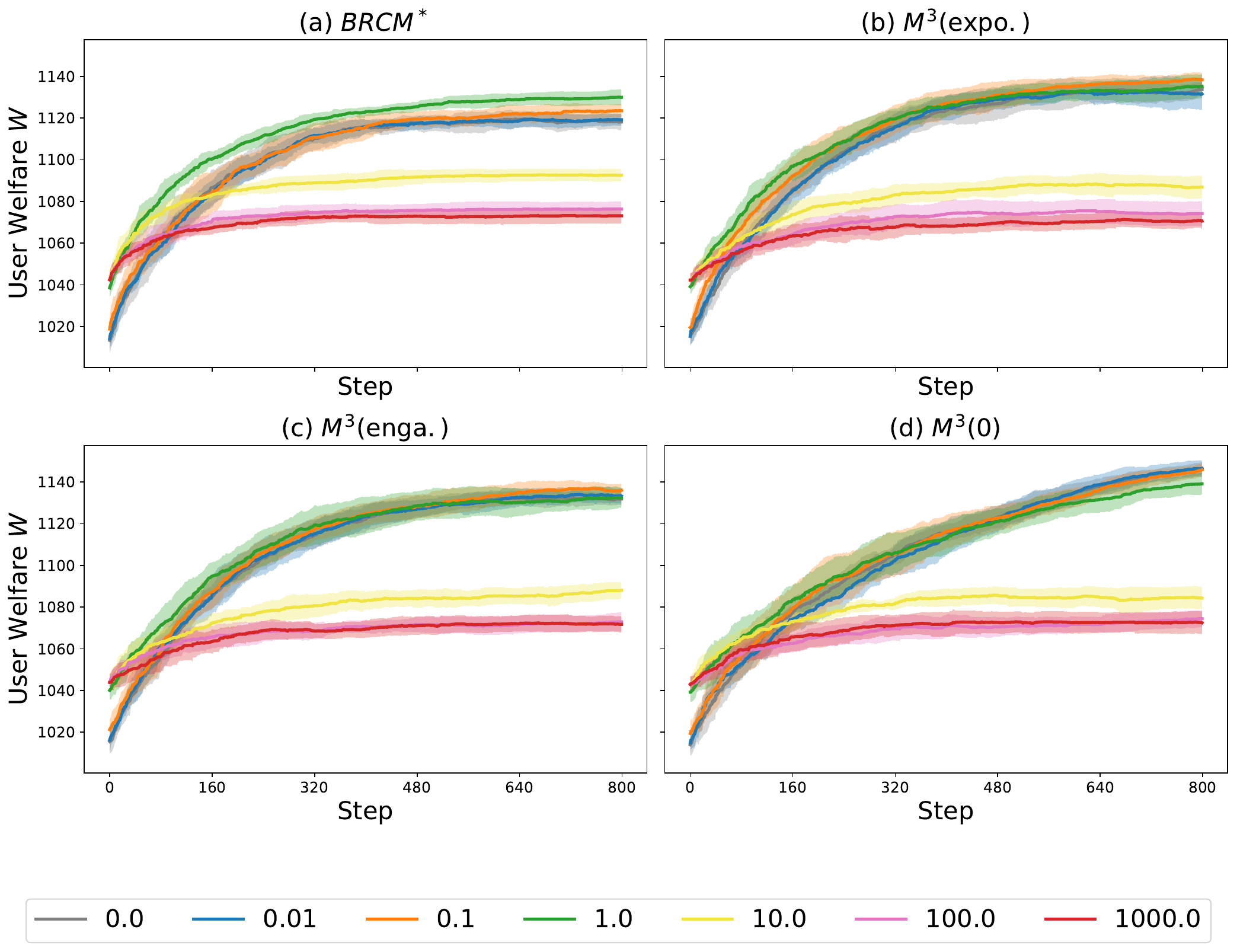}
        \caption{$K=5$.}
        \label{fig:uni-tvn9-pure-top5}
    \end{subfigure}

    \caption{User welfare evolution on the synthetic Trend Market dataset with $K=1$ and $K=5$.}
    % \Description{}
    \label{fig:uni-tvn9-pure-combined}
\end{figure}

\begin{figure}[H]
    \centering
    % uni-TvN9-reverse-pure Top-1
    \begin{subfigure}{0.44\linewidth}
        \centering
        \includegraphics[width=\linewidth]{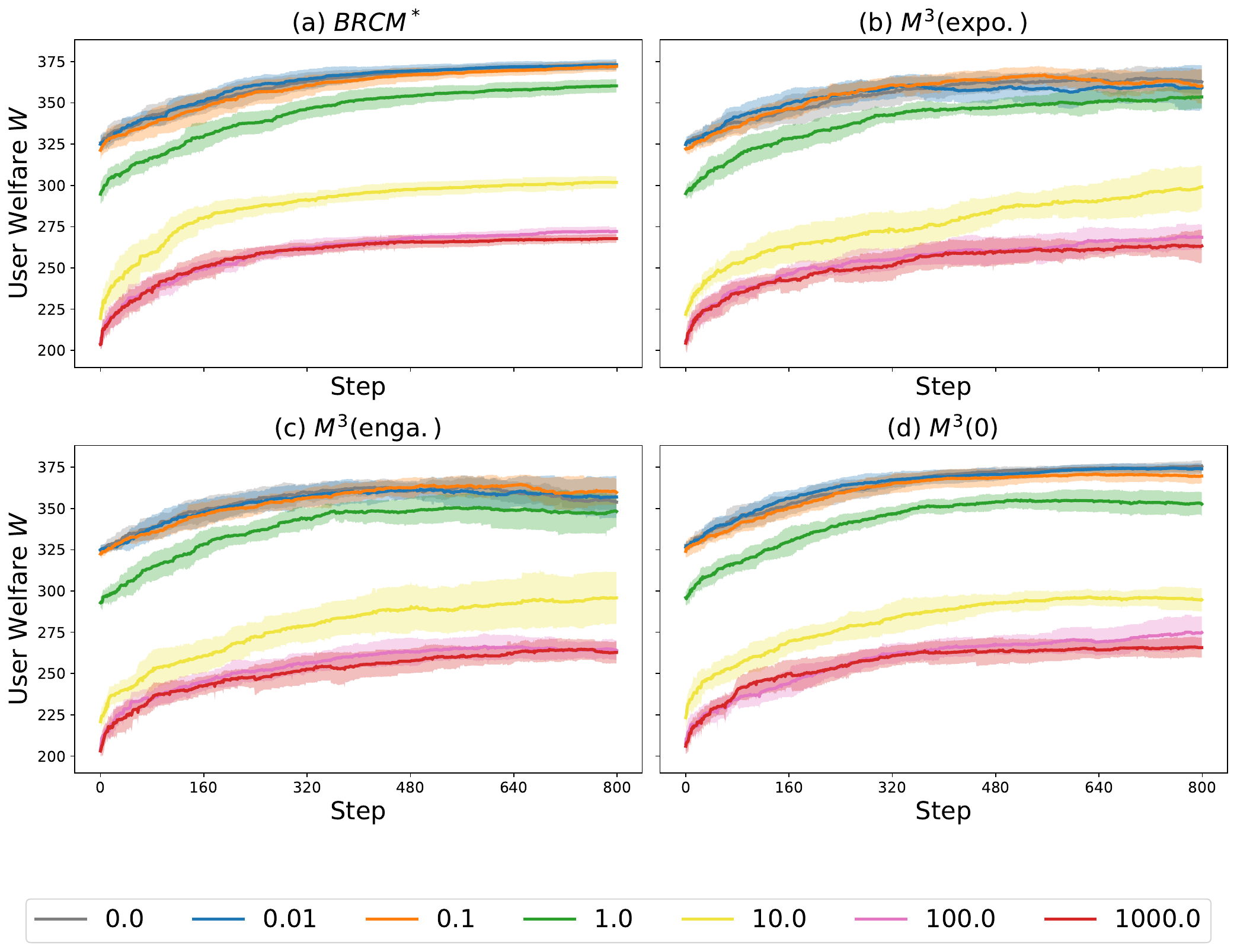}
        \caption{$K=1$.}
        \label{fig:uni-tvn9-reverse-pure-top1}
    \end{subfigure}
    \hfill % Creates a flexible horizontal space
    % uni-TvN9-reverse-pure Top-5
    \begin{subfigure}{0.44\linewidth}
        \centering
        \includegraphics[width=\linewidth]{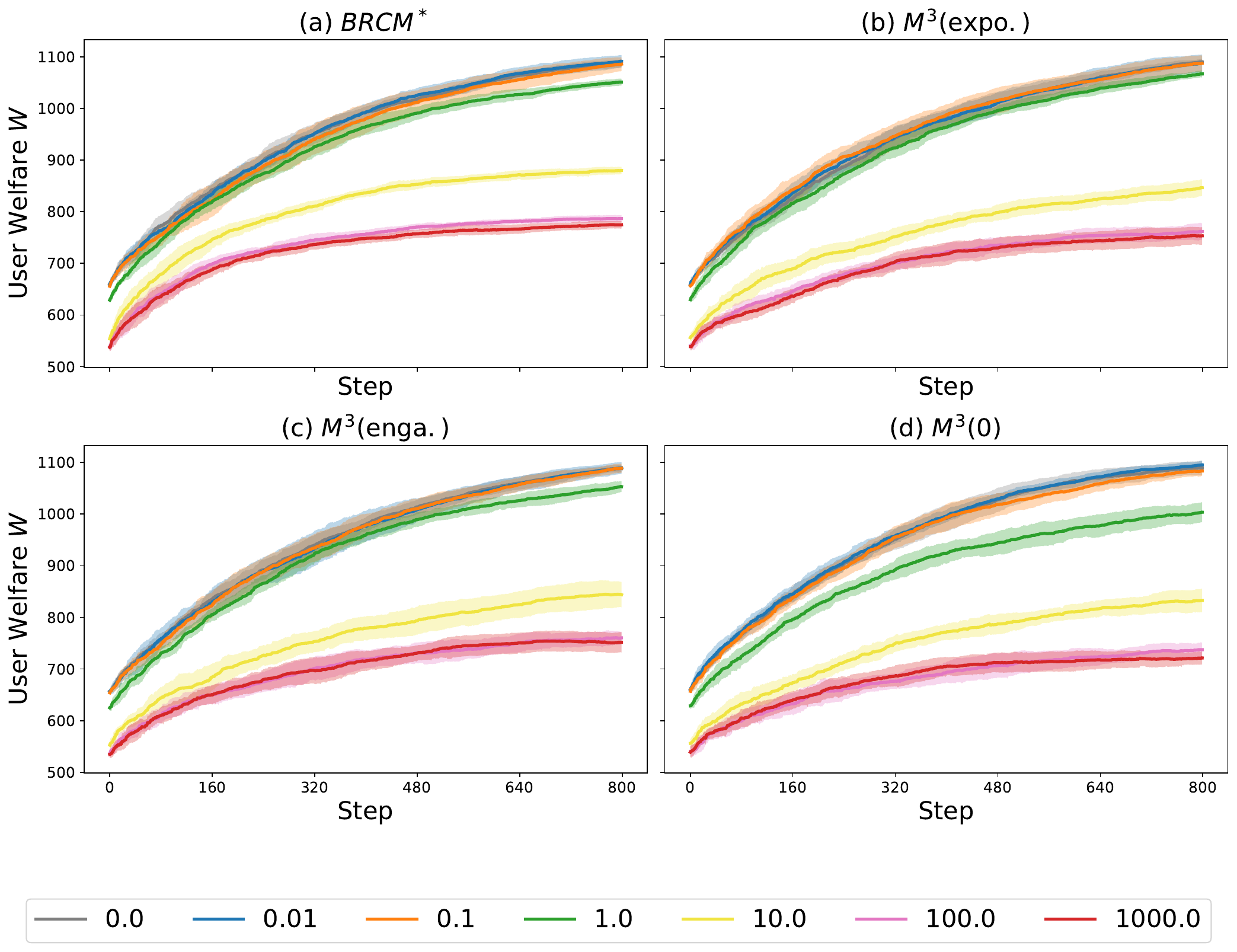}
        \caption{$K=5$.}
        \label{fig:uni-tvn9-reverse-pure-top5}
    \end{subfigure}
    \caption{User welfare evolution on the synthetic Niche Market dataset with $K=1$ and $K=5$.}
    % \Description{}
    \label{fig:uni-tvn9-reverse-pure-combined}
\end{figure}

\begin{figure}[H]
    \centering
    % 第一张子图 (宽度设为行宽的48%)
    \begin{subfigure}{0.44\linewidth}
        \centering
        \includegraphics[width=\linewidth]{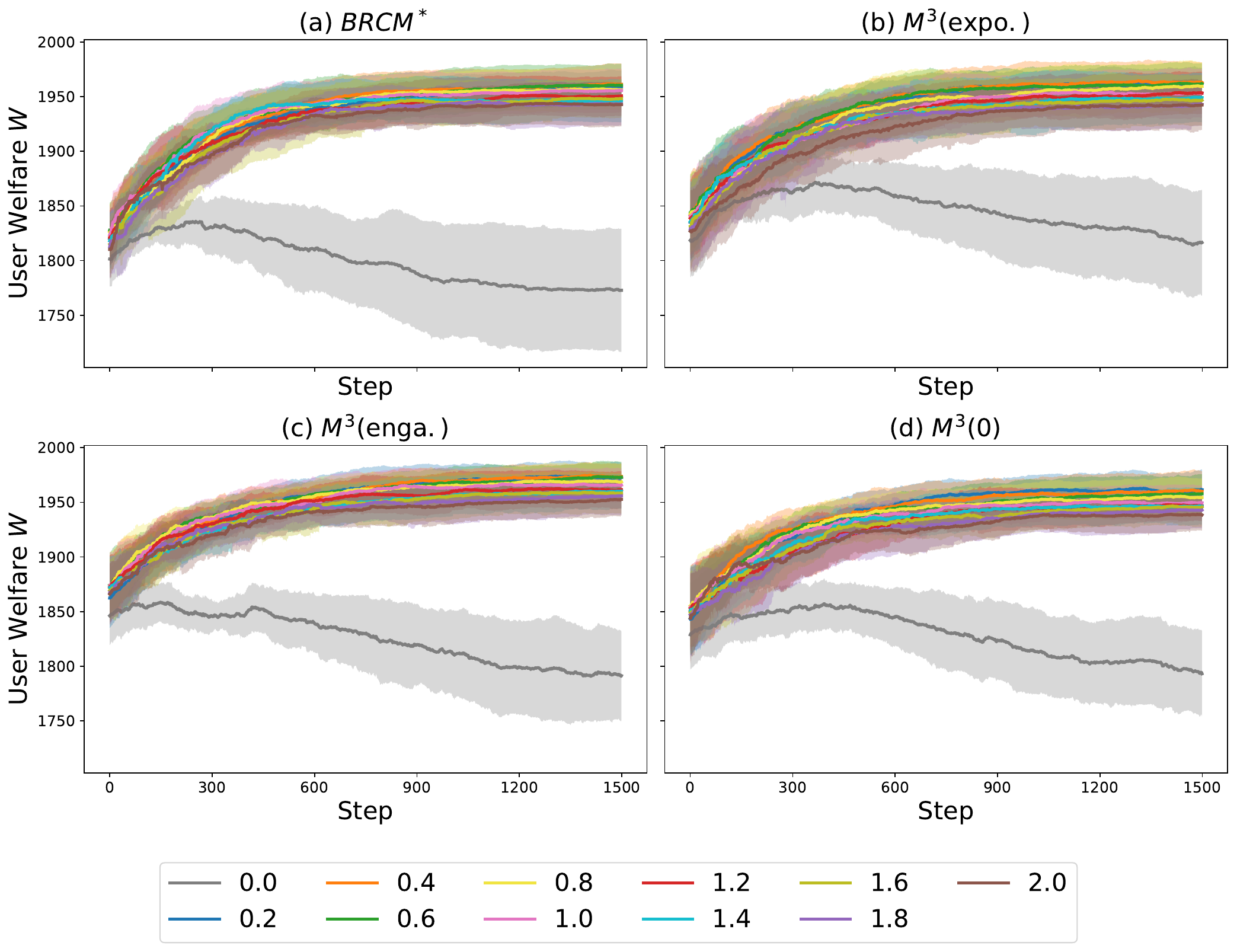}
        \caption{$K=1$.}
        \label{fig:ml-100k-top1}
    \end{subfigure}
    \hfill
    \begin{subfigure}{0.44\linewidth}
        \centering
        \includegraphics[width=\linewidth]{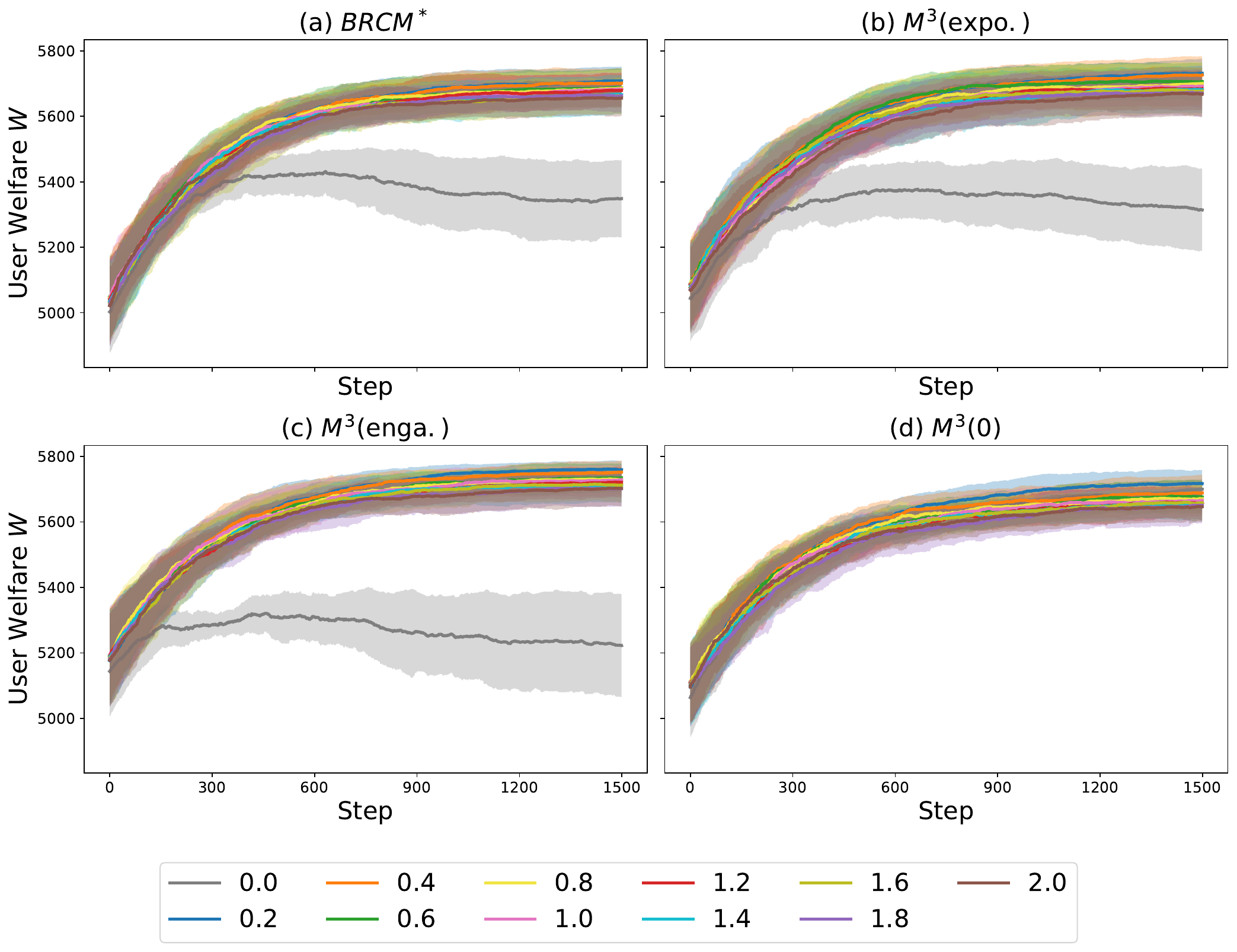}
        \caption{$K=5$.}
        \label{fig:ml-100k-top5}
    \end{subfigure}
    \caption{User welfare evolution on the MovieLens-100k dataset with $K=1$ and $K=5$.}
    % \Description{}
    \label{fig:ml-100k-combined}
\end{figure}

\begin{figure}[H]
    \centering
    % Amazon Top-1 
    \begin{subfigure}{0.44\linewidth}
        \centering
        \includegraphics[width=\linewidth]{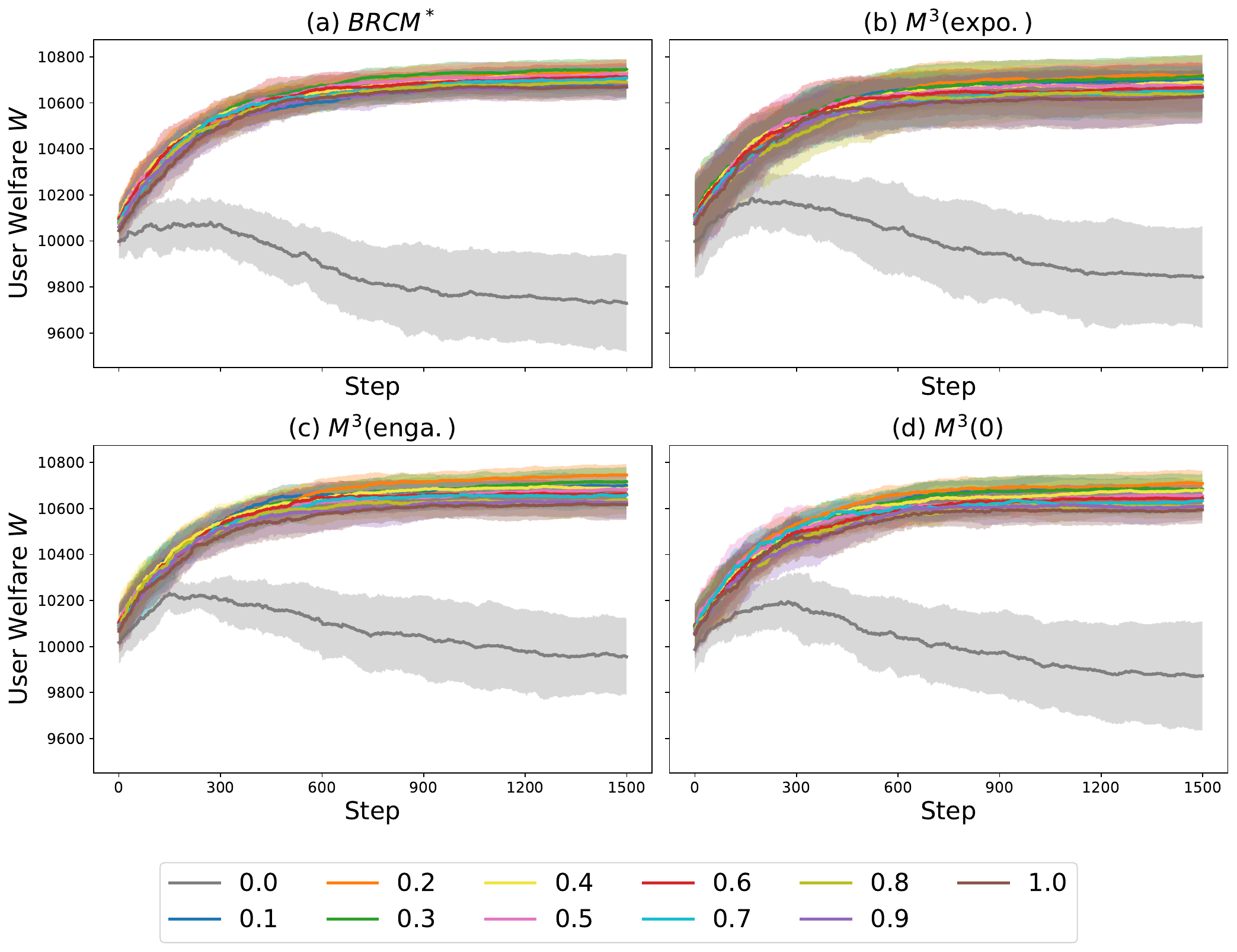}
        \caption{$K=1$.}
        \label{fig:amz-top1}
    \end{subfigure}
    \hfill % Creates a flexible horizontal space
    % Amazon Top-5
    \begin{subfigure}{0.44\linewidth}
        \centering
        \includegraphics[width=\linewidth]{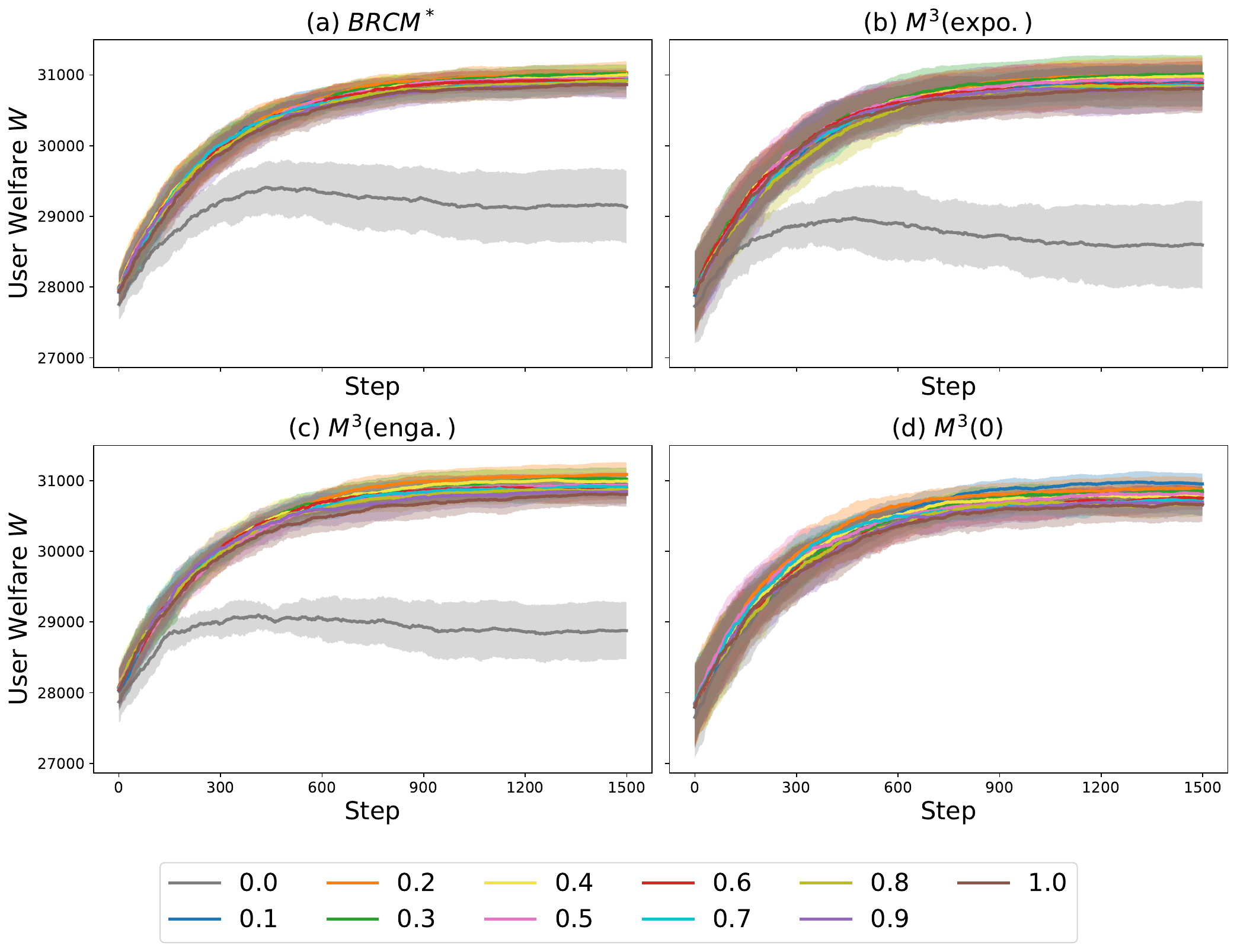}
        \caption{$K=5$.}
        \label{fig:amz-top5}
    \end{subfigure}
    \caption{User welfare evolution on the Instant-Video dataset with $K=1$ and $K=5$.}
    % \Description{}
    \label{fig:amz-combined}
\end{figure}

\end{document}